\newcommand{\EX}[1]{{\mathbb{E}}\left\{{#1}\right\}}
\newcommand{\EXs}[2]{{\mathbb{E}}_{\scriptsize{#1}}\!\!\left\{{#2}\right\}}
\newcommand{\beq}{\begin{equation}}
\newcommand{\eeq}{\end{equation}}
\newcommand{\beqn}{\begin{eqnarray}}
\newcommand{\eeqn}{\end{eqnarray}}
\newtheorem{theorem}{\bfseries{Theorem}} 
\newtheorem{lemma}{\bfseries{Lemma}}
\newtheorem{remark}{\bfseries{Remark}}
\newenvironment{proof}[1][Proof:]{\begin{trivlist}
\item[\hskip \labelsep {\bfseries #1}]}{\end{trivlist}}
\newenvironment{definition}[1][Definition]{\begin{trivlist}
\item[\hskip \labelsep {\bfseries #1}]}{\end{trivlist}}
\newcommand{\qed}{\nobreak \ifvmode \relax \else
      \ifdim\lastskip<1.5em \hskip-\lastskip
      \hskip1.5em plus0em minus0.5em \fi \nobreak
      \vrule height0.75em width0.5em depth0.25em\fi}
\newcommand{\paperTitle}{ Modeling and Analysis of Cellular Networks using Stochastic Geometry: A Tutorial}
\definecolor{BLUE}{rgb}{0,0,1}
\begin{document}

\title{\paperTitle}


\author{
	\vspace{0.2cm}
    Hesham~ElSawy, \textit{Member,~IEEE},
    Ahmed Sultan-Salem, \textit{Member,~IEEE},\\
    Mohamed-Slim~Alouini, \textit{Fellow,~IEEE}, and
    Moe~Z.~Win, \textit{Fellow,~IEEE}    \\
    \thanks{
	}
\thanks{H.~ElSawy, A.~Sultan-Salem, and M.-S.~Alouini are with the King Abdullah University of Science and Technology (KAUST), Thuwal, Makkah Province, Saudi Arabia 
		(email: \{hesham.elsawy, ahmed.salem, slim.alouini\}@kaust.edu.sa).} 
\thanks{M.~Z.~Win is with the Laboratory for Information and Decision Systems (LIDS), Massachusetts Institute of Technology, Cambridge, MA 02139, USA (e-mail: moewin@mit.edu).}
    }
    
\maketitle

\thispagestyle{empty}

\setcounter{page}{1}

\begin{abstract}
This paper presents a tutorial on stochastic geometry (SG) based analysis for cellular networks. This tutorial is distinguished by its depth with respect to wireless communication details and its focus on cellular networks. The paper starts by modeling and analyzing the baseband interference in a basic cellular network model. Then, it characterizes signal-to-interference-plus-noise-ratio (SINR) and its related performance metrics. In particular, a unified approach to conduct error probability, outage probability, and rate analysis is presented. {Although the main focus of the paper is on cellular networks, the presented unified approach applies for other types of wireless networks that impose interference protection around receivers}. The paper then extends the baseline unified approach to capture cellular network characteristics (e.g., frequency reuse, multiple antenna, power control, etc.). It also presents numerical examples associated with demonstrations and discussions. Finally, we point out future research directions. 

\end{abstract}

\section{Introduction}

Stochastic geometry (SG) has  succeeded to provide a unified mathematical paradigm to model different types of wireless networks, characterize their operation, and understand their behavior~\cite{moe_win, survey_martin, survey_h}. The main strength of the analysis based on SG, hereafter denoted as SG analysis, can be attributed to its ability to capture the spatial randomness inherent in wireless networks. Furthermore, SG models can be naturally extended to account for other sources of uncertainties such as fading, shadowing, and power control. In some special cases, SG analysis can lead to closed-form expressions that govern system behavior. These expressions enable the understanding of network operation and provide insightful design guidelines, which are often difficult to get from computationally intensive simulations.



{SG analysis for wireless networks can be traced back to the late 70's~\cite{1st, 2nd, 3rd, 4th, 5th}. At that point in time, SG was first used to design the transmission ranges and strategies in multi-hop ad hoc networks. 
Then, SG was used to characterize the aggregate interference coming from a Poisson field of interferers~\cite{sousa1990, mathar1995, stable_2}.\footnote{The Poisson field of interferers means that the transmitters are randomly, independently, and uniformly scattered in the spatial domain, in which the number of transmitters in any bounded region in the space is a Poisson random variable.} Despite the existence of a large number of interfering sources, it is shown in \cite{sousa1990, mathar1995, stable_2} that the central limit theorem does not apply, and consequently, the aggregate interference does not follow the Gaussian distribution. This is due to the prominent effect of distance-dependent path-loss attenuation, which makes the aggregate interference dominated by proximate interferers. The research outcome in \cite{sousa1990, moe_win_error, moe_win_error_2, sousa1992, moe_win} has shown that the aggregate interference follows the $\alpha$-stable distribution~\cite{stable_book1, stable_1, stable_2}, which is more impulsive and heavy tailed than the Gaussian distribution~\cite{stable_book2}. In fact, the aggregate interference has been characterized by generalizing shot-noise theory in higher dimensions~\cite{now_martin, stable_2, shot_noise1, shot_noise2, shot_noise3}. Such characterization has set the foundations for SG analysis, enriched the literature with valuable results, and helped to understand the behavior of several wireless technologies in large-scale setups \cite{1st, 2nd, 3rd, 4th, 5th, sousa1990, sousa1992, cog_2, moe_win_error, moe_win_error_2, Sec_moe, Sec_moe_P1, Sec_moe_P2, Nikias95, ilow1998, ambike1994, win2006error, zanella_1, UWB1, Spec_out_1, shot_noise3}. However, these results are confined to ad hoc networks with no spectrum access coordination schemes. In wireless networks with coordinated spectrum access, the aforementioned analysis presents pessimistic results.}

{Due to the shared nature of the wireless spectrum, along with the reliability requirement for communication links, spectrum access is usually coordinated to mute interference sources {nearby receivers}. This can be achieved by separating nearby transmissions over orthogonal resources (i.e., time, frequency, or codes). However, due to the scarcity of resources and the high demand for wireless communication, the wireless resources are reused over the spatial domain. The receivers are usually protected from interference resulting from spatial frequency reuse by interference exclusion regions. Cellular networks, which are the main focus of this tutorials, impose interference protection for users' terminal via the cellular structure. This intrinsic property of cellular network should be incorporated into analysis. Furthermore, several medium access control protocols exist in ad hoc networks (e.g., carrier sensing multiple access) that impose interference protection around receivers. Accounting for the interference protection around receivers, the aggregate interference is neither $\alpha$-stable nor Gaussian distributed~\cite{Martin_finite}. In fact, there is no closed-form expression for the interference distribution if interference protection is incorporated into analysis. This makes characterizing and understanding the interference behavior a challenging task. This tutorial shows detailed step-by-step interference characterization using stochastic geometry. It also shows the interference effect on important wireless communication key performance indicators such as error probability and transmission rate. Since interference coordination is elementary for several types of wireless networks, the analysis in this paper can be extended to other types of wireless networks that impose interference protection around receivers.} 

\subsection{Using SG for Cellular Networks}

SG was mostly confined to ad hoc and sensor networks to account for their intrinsic spatial randomness. In contrast, cellular networks were mostly assumed to be spatially deployed according to an idealized hexagonal grid. {Motivated by its tractability, attempts to promote SG to model cellular networks can be traced back to the late 90's \cite{cellular_bac1, cellular_tim}}. However, success was not achieved until a decade later \cite{tractable_app, martin_ppp, valid}. {The theoretical and statistical studies presented in \cite{tractable_app, martin_ppp, valid} revealed that cellular networks deviate from the idealized hexagonal grid structure and follows and irregular topology that randomly changes from one geographical location to another.} The authors in \cite{tractable_app} show that the signal-to-interference-plus-noise-ratio (SINR) experienced by users in a simulation with actual base station (BS) locations is upper bounded by the SINR of users in idealistic grid network, and lower bounded by the SINR of users in random network. Interestingly, the random network provides a lower bound that is as tight as the upper bound provided by the idealized grid network. However, the lower bound is preferred due to the tractability provided by SG. The authors in \cite{martin_ppp} show that the spatial patterns exhibited  by actual BS locations in different geographical places can be accurately fitted to random spatial patterns obtained via SG. Furthermore, the results in \cite{martin_ppp} confirm the tight lower bound provided by the random network to the users' SINR in simulations with actual BS locations. Finally, the authors in \cite{valid} show that the SINR in grid network converges to the SINR of random network in a strong shadowing environment.

Exploiting the tractability of SG, several notable {results are obtained for cellular networks}. For instance, the downlink baseline operation of cellular networks is characterized in \cite{tractable_app, valid, martin_ppp, eid_app, Gil_marco, Ginibre_martin, Ginibre_2}. Extensions to multi-tier case are provided in \cite{k_tier, shadowing_letter, Rate_marco, cellular_tim2, Tony_hetnet,  Assoc_singh, Het_Net_heath, Block_heath, Loc_het, Sayandev_1, Sayandev_2, Ali_pcp}. The uplink case is characterized in \cite{uplink_H,  Laila_Uplink, uplink_harpreet, uplink2_jeff, uplink_alamouri, Uplink_lett, uplink_zolfa, Giniber_uplink}. Range expantion and load balancing are studied in \cite{Offloading_jeff, Joint_singh, load_aware_harpreet, Halim_1, User_Assoc_Wei}. Relay-aided cellular networks are characterized in \cite{Relay_martin, Relay_marco}. Cognitive and self-organizing cellular networks are studied in \cite{cog_h, cog_h2, on_cog, on_ch, Cog_tony, Bennis, Bennis_2, Prap1, Prap2}. Cellular networks with multiple-input multiple-output (MIMO) antenna system are investigated in \cite{STBC_harpret, Mimo_load_harpreet, Mimo_ordering, Corr_MRC_harpreet, Corr2_MRC_harpreet, eid_Mimo, marco_Mimo, asymptotic_SE, Dist_antenna_Zhang, SDMA_Zhang, MIMO_heath, Dist_antenna_Wei, Massive_harpret, Laila_MIMO}. Cooperation, coordination, and interference cancellation in cellular networks are characterized in \cite{Gaus_approx1, coordination1_martin, Cooperation1_martin, CoMP_martin, CoMP_sakr, Pairwise_Bacc, Cancel_Tony, Int_align_tony, Interf_null_Zhang, Non_coh_singh, Coop1_heath, Int_cancel_heath}. Energy efficiency, energy harvesting, and BS sleeping for green cellular operation are studied in \cite{harvest_harpreet, Harvest_sakr, Harvest2_sakr, Sleep_tony, Ener_eff_tony, Energy_Zhang, Optiml_energy, Energy_partial}. Millimeter (mmW) based communication in cellular network is characterized in \cite{mmW_marco, mmW_jeff, mmW_heath, Esma}. In-band full-duplex communication for cellular networks is studied in \cite{Full_tony, FD_cellular2, h_FD, Harvesting2016AlAmmouri, Itsikiantsoa2015}. Interference correlation  across time and space in cellular networks is studied in \cite{correlated_HetNet, Corr_marco}. The additional interference imposed via underlay device-to-device (D2D) communication in cellular networks is characterized in \cite{D2D_h, Jeffery_D2D, D2D_tony, D2D2_tony, Spatial_d2d}. Mobility and cell boundary cross rate are studied in \cite{Mobility_jeff, Mobility_liang, Adve_1, Rabe}. Cloud radio access network and backhuling in cellular networks are studied in \cite{Backhaul_tony, Cloud_1, cloud_heath}. Last but not least, the physical layer security and secrecy in the context of cellular networks are  characterized in \cite{Security_harpreet, Security_Zhou, Security2_jeff, Bennis_sec}. By virtue of the results in \cite{Rabe, h_FD, Harvesting2016AlAmmouri, Itsikiantsoa2015, Laila_MIMO, Laila_Uplink, cellular_bac1, cellular_tim, tractable_app, valid, martin_ppp, eid_app, Gil_marco, Ginibre_martin, Ginibre_2, k_tier, shadowing_letter, Rate_marco, cellular_tim2, Tony_hetnet,  Assoc_singh, Het_Net_heath, Block_heath, Loc_het, Sayandev_1, Sayandev_2, Ali_pcp, uplink_H,  uplink_harpreet, uplink2_jeff, uplink_alamouri, Uplink_lett, uplink_zolfa, Giniber_uplink, Offloading_jeff, Joint_singh, load_aware_harpreet, Halim_1, User_Assoc_Wei, Relay_martin, Relay_marco, cog_h, cog_h2, on_cog, on_ch, Cog_tony, Bennis, Bennis_2, Prap1, Prap2, STBC_harpret, Mimo_load_harpreet, Mimo_ordering, Corr_MRC_harpreet, Corr2_MRC_harpreet, eid_Mimo, marco_Mimo, asymptotic_SE, Dist_antenna_Zhang, SDMA_Zhang, MIMO_heath, Dist_antenna_Wei, Massive_harpret, Gaus_approx1, coordination1_martin, Cooperation1_martin, CoMP_martin, CoMP_sakr, Pairwise_Bacc, Cancel_Tony, Int_align_tony, Interf_null_Zhang, Non_coh_singh, Coop1_heath, Int_cancel_heath, harvest_harpreet, Harvest_sakr, Harvest2_sakr, Sleep_tony, Ener_eff_tony, Energy_Zhang, Optiml_energy, Energy_partial, mmW_marco, mmW_jeff, mmW_heath, Esma, Full_tony, FD_cellular2, correlated_HetNet, Corr_marco, D2D_h, Jeffery_D2D, D2D_tony, D2D2_tony, Spatial_d2d, Mobility_jeff, Mobility_liang, Adve_1, Backhaul_tony, Cloud_1, cloud_heath, Security_harpreet, Security_Zhou, Security2_jeff, Bennis_sec}, SG based modeling for cellular networks is widely accepted by both academia and industry.

\subsection{Motivation \& Contribution}

Due to the expanding interest in SG analysis, it is required to have a unified and deep, yet simple, tutorial that introduces SG analysis for beginners in this field. Although there are excellent resources that present SG analysis for wireless networks \cite{baccelli_vol1, baccelli_vol2, martin_book, now_martin, now_jeff, survey_martin, moe_win, survey_h}, this tutorial is discriminated by introducing the error rate analysis and the focus on cellular networks. The monographs \cite{baccelli_vol1, baccelli_vol2, martin_book} present an advanced level treatment for SG and delve into details related to SG theory.  In \cite{baccelli_vol1, baccelli_vol2, martin_book, now_martin, now_jeff, survey_martin} many transceiver characteristics (e.g., modulation scheme, constellation size, matched filtering, signal recovery technique, etc.) are abstracted and the aggregate interference is treated as the sum of the powers of the interfering signals, and hence, the analysis is limited to outage probability, defined as the probability that the SINR goes below a certain threshold, and ergodic rate, defined by the seminal Shannon's formula. The tutorial in \cite{moe_win} delves into fine wireless communication details and presents error probability analysis. However, it is focused on ad hoc networks. The authors in \cite{survey_h} survey the SG related cellular networks literature without delving into the analysis details. 

In contrast to \cite{baccelli_vol1, baccelli_vol2, martin_book, now_martin, now_jeff, survey_martin}, the presented tutorial delves into the wireless communication aspects and exposes the necessary material from SG theory. Hence, it is more suited for those with wireless communication background. Furthermore, the presented tutorial is focused on the cellular network which is not the case in \cite{baccelli_vol1, baccelli_vol2, martin_book, now_martin, now_jeff, survey_martin, moe_win}. This tutorial also discusses the Gaussian signaling approximation that is taken for granted in the literature. To the best of the authors’ knowledge, this is the first time that the accuracy of the Gaussian signaling approximation in large-scale networks is discussed and analytically quantified. Finally, the tutorial elaborates the reasons for the pessimistic performance evaluation obtained via SG analysis and points out potential solutions.

This tutorial is organized as follows. Section~\ref{sec:overview} gives an overview of SG. Section~\ref{sec:model} introduces a basic system model which is used to introduce SG analysis. Section~\ref{sec:PPP} motivates the Poisson point process for network abstraction. Details about exact error probability analysis using the Poisson point process are presented in Section~\ref{sec:exact}. Section~\ref{sec:GCB} introduces the Gaussian signaling  assumption for the interfering symbols and discusses its effect. Section~\ref{sec:out} shows the abstracted outage probability analysis that is commonly used in the literature and highlights its implicit assumptions. Section~\ref{sec:advanced} relaxes basic assumptions in the basic system model, illustrates how SG analysis can be extended to capture realistic network setups, and provides numerical examples with discussions. Future research directions are highlighted in Section~\ref{sec:PPs} before the paper is concluded in Section~\ref{sec:conc}.

\textbf{Notation:} throughout the paper we use $\mathbb{P}\{\cdot\}$ to denote probability, $\EXs{X}{\cdot}$ to denote the expectation over the random variable $X$, $\EX{.}$ to denote the expectation over all random variables in $\{\cdot\}$, $\kappa_n(X)$ to denote the $n^{th}$ cumulant of the random variable $X$, $\overset{D}{=}$ to denote the equivalence in distribution, $\sim$ to denote the distribution, and $\mathcal{CN}(a, b)$ to denote the circularly symmetric complex Gaussian distribution with mean $a$ and variance $b$. The notations $f_X(\cdot)$,  $F_X(\cdot)$,  $\varphi_{X}(\cdot)$, and  $\mathcal{L}_{X}(\cdot)$  are used to denote the probability density function (PDF), the cumulative distribution function (CDF), the characteristic function (CF), and the Laplace transform\footnote{With a slight abuse of notation, we denote the LT of the PDF of a random variable $X$ by the LT of $X$. The LT of $X$ is defined as $\mathcal{L}_X(s) = \EX{e^{-sX}}$.} (LT), respectively, for the random variable $X$. The indicator function is denoted as $\mathbbm{1}_{\{\cdot\}}$, which takes the value $1$ when the statement $\{\cdot\}$ is true and $0$ otherwise. The set of real numbers is denoted as $\mathbb{R}$, the set of integers is denoted as $\mathbb{Z}$, the set of complex numbers is denoted as $\mathbb{C}$, in which the imaginary unit is denoted as $\jmath = \sqrt{-1}$, the magnitude of a complex number is denoted as $\vert \cdot \vert$, the complex conjugate is denoted as $(\cdot)^*$, and the Hermitian conjugate ${(\cdot)}^H$. The Euclidean norm is denoted as $ \left\|.\right\|$. $\gamma(a,b)=\int_0^b x^{a-1} e^{-x} dx $ is the lower incomplete gamma function,  $\text{erfc}(x)=\frac{2}{\sqrt{\pi}} \int_x^{\infty} e^{-t^2} \mathrm{d}t$ is the complementary error function, $(a)_n = \frac{\Gamma(a+n)}{\Gamma(a)}$ is the Pochhammer symbol, ${_1}F_1(a;b;x)=\sum_{n=0}^\infty \frac{(a)_n}{(b)_n}\frac{x^n}{n!}$ is the Kummer confluent hypergeometric function, and ${_2}F_1(a,b;c;x)=\sum_{n=0}^\infty \frac{(a)_n (b)_n}{(c)_n}\frac{x^n}{n!}$ is the Gauss hypergeometric function~\cite{abramowitz_stegun, integrals_book}. 


\section{Overview of SG Analysis} \label{sec:overview}

Before delving into the modeling details, we first give a broad overview of the SG  analysis as well as its outcome. In practice, cellular networks are already deployed and, for a given city, the locations of BSs are already known. However, in SG analysis, we are not concerned with the performance of a specific realization of the cellular network at a specific geographical location. Instead, we are interested in a general analytical model that applies on average for all cellular networks' realizations. For instance, if we want to analyze the effect of in-band full-duplex communication in cellular networks, instead of repeating the analysis for each and every geographical setup of the cellular networks, we can obtain general performance analysis, guidelines, and design insights that apply when averaging over all distinct realizations. Hence, from the analysis point of view, the locations of the BSs are considered unknown. Furthermore, following the studies in \cite{tractable_app, martin_ppp}, the locations of the BSs are considered random. Abstracting each BS location to a point in the Euclidean space, SG models the BS locations by a point process (PP)~\cite{spatial_pp_ch, SG_applications, pp_book_II, pp_book_I}, which describes the random spatial patterns formed by points in space. Then, according to the properties of the selected PP, the analysis is conducted. Note that, the selected PP, as will be shown later, and the associated analysis should capture the general properties of cellular network.

We are interested in the performance of a randomly selected user and/or the average performance of all users, i.e., the average performance of users in all locations. As discussed above, from the network perspective, we are interested in the average performance over all cellular network realizations. Such an average performance metric is denoted as {\em spatially averaged} (SA) performance, which is the main outcome from SG analysis.\footnote{{Formally, the spatial averaging technique and interpretation depend on the type of the PPs. If the PP is stationary (i.e., translation invariant) and spatially ergodic, then the averaging is w.r.t. the PP distribution and the result is location independent. On the other hand, if the PP is stationary but spatial ergodicity does not hold, then the expectation is done w.r.t. the Palm distribution of the PP \cite{martin_book} and the result is location independent. Finally, if the PP is neither stationary nor spatially ergodic, then the expectation is done w.r.t. the Palm distribution of the PP and the result is location dependent. Further discussion about this subject can be found in \cite[Chapter 8]{martin_book}.}} Examples of the SA performance metrics of interest in cellular networks are:
 
\begin{itemize}
\item \textbf{Outage probability:} defined as the probability that the SINR goes below a certain threshold (T), $\mathbb{P}\{{\rm SINR} <T\}$.
\item \textbf{Ergodic capacity:} defined by $\EX{\log(1+{\rm SINR})}$. {The ergodic
capacity measures the long-term achievable
rate averaged over all channel and interference (i.e., network realization) states~\cite{and_capacity}.}
\item \textbf{Symbol error probability (SEP):} defined as the probability that the decoded symbol is not equal to the transmitted symbol.
\item \textbf{Bit error probability (BEP):}  defined as the probability that the decoded bit is not equal to the transmitted bit.
\item \textbf{Pairwise error probability (PEP):} defined as the probability that the decoded symbol is $s_i$ given that $s_j$ is transmitted ignoring other possible symbols. 
\item \textbf{Handover rate} defined as the number of cell boundaries crossed over per unit time.
\end{itemize}

In SG analysis, we obtain expressions that relate the aforementioned performance metrics to the cellular network parameters and design variables. Such expressions are then used to understand the network behavior in response to the network parameters and/or design variables. This helps to obtain insights into the network operation and extract design guidelines. Note that, by SG analysis, the obtained design insights hold on average for all realizations of cellular networks. Hence, in light of the obtained expressions, communication system engineers can perform tradeoff studies and take informed design decisions before facing practical implementation issues. It is worth noting that the spatially averaged performance metrics obtained by SG analysis can be interpreted in different ways. {For instance, the average SEP can be interpreted as: i) the SEP averaged over all symbols for a randomly selected user, or ii) the SEP averaged over all symbols transmitted within the network. Similar interpretation applies for other performance metrics. }


\section{Baseline System Model \& Aggregate Interference Characterization } \label{sec:model}

\subsection{System Model}

As a starting point, we consider a baseline bi-dimensional single-tier downlink cellular network to introduce the basic SG analysis. Uplink and more advanced downlink scenarios are presented in Section~\ref{sec:advanced}. We assume that all BSs are equipped with single antennas and transmit with the same power $P$. {We also assume that each user equipment (UE) is equipped by a single antenna and is associated to its nearest BS}.\footnote{{Nearest BS association captures the traditional average radio signal strength (RSS) based association for single tier cellular networks when shadowing is ignored.}} Then the service area of each BS can be geometrically represented by a Voronoi-tessellation\cite{ voronoi, SG_applications} as shown in Fig.~\ref{fig:system_model}. We assume that BSs have saturated buffers and that every BS has a user to serve (saturation condition). Each BS maps its user data using a general bi-dimensional unit-power constellation $\mathbf{S}$ with $M$ symbols, denoted by $s_m=a_m \exp\{\jmath \theta_m \}$, where $m=1,2,3,....M$, such that, $\EX{\vert s_m\vert^2}=1$. All symbols from all BSs are modulated on the same carrier frequency and are transmitted to the corresponding users. The transmitted signal amplitude attenuates with the distance $r$ according to the power-law $r^{-\frac{\eta}{2}}$, where $\eta$ is an environmental dependent path-loss exponent. Multi-path fading is modeled via i.i.d. unit-variance circularly symmetric complex Gaussian random variables, denoted by $h$. We are interested in modeling the baseband signal received at an arbitrary user which is located $r$ meters away from his serving BS. The baseband signal (after proper down-conversion and low-pass filtering) can be represented as

\small
{\begin{align}
y &= \sqrt{P} s h r^{-\frac{\eta}{2}} + {i}_{agg} + n,
\label{basic_baseband0} 
\end{align}}
\normalsize

\noindent{where $s \in \bf{S}$ is intended symbol, $h\sim \mathcal{CN}(0,1)$, ${i}_{agg}$ is the aggregate interference amplitude experienced  from all interfering BSs and $n\sim\mathcal{CN}(0,N_o)$ is the noise. } 

\subsection{Network Abstraction}

The first step in the analysis is to choose a convenient PP to abstract the network elements (i.e., BSs and users). Then, the performance metrics of interest are expressed as functions of the selected PP. Last but not least, these functions can be evaluated using results from SG. Note that the term {\em ``convenient PP"} is used to denote a PP that balances a tradeoff between tractability and practicality.  As will be discussed later, a PP that is perfectly practical may obstruct the model tractability, and hence, approximations are usually sought. For the sake of complete presentation, we first shed light on the tractability issue of general PPs. Then we discuss the Poisson point process (PPP) approximation, which is usually used in the literature to retain tractability.

\begin{figure}[t]
	\begin{center}
	 \includegraphics[width=2.7in,height=2.7in]{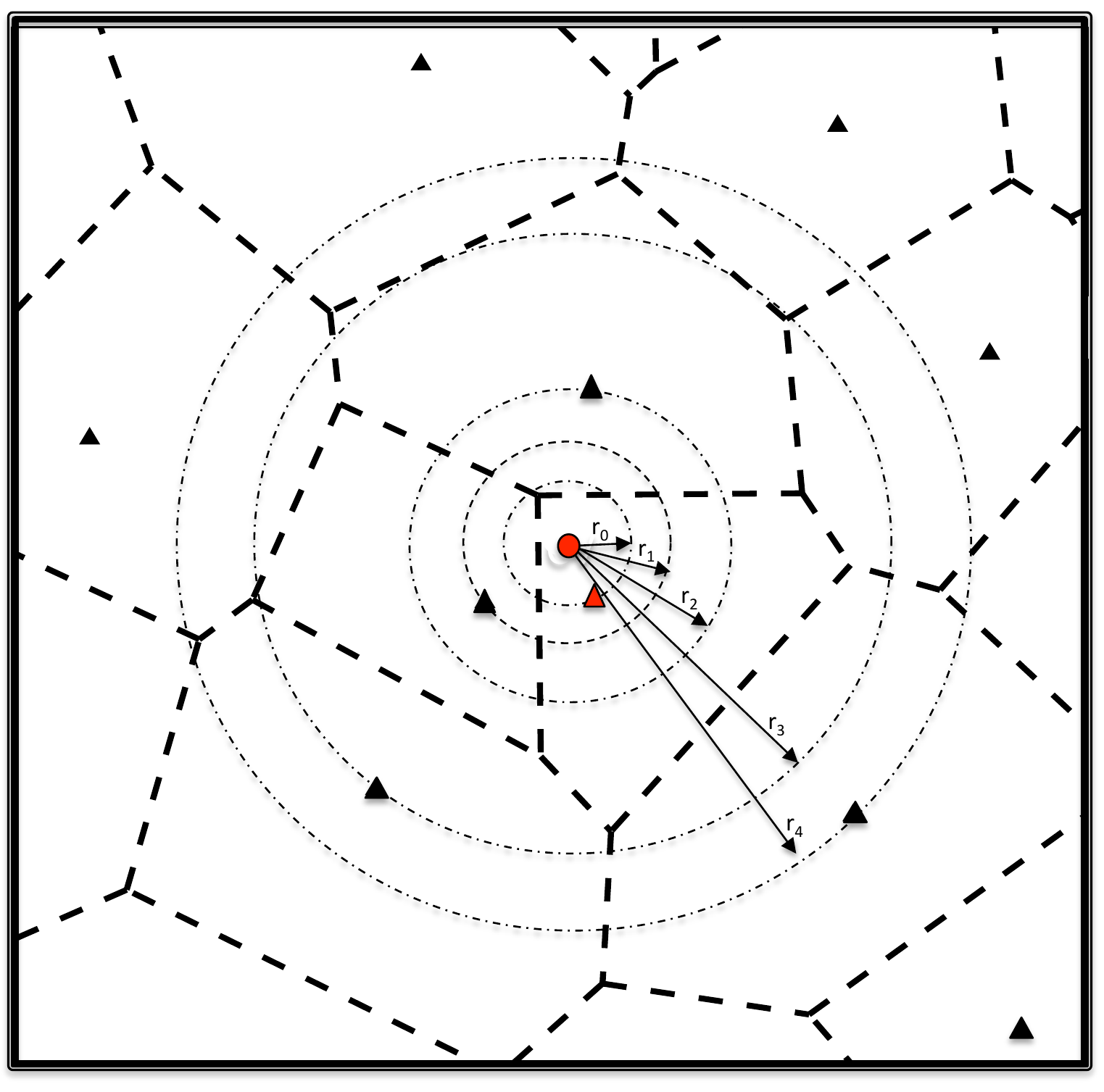}

	\end{center}
	\caption{ A realization from a cellular network in which the triangles represent the BSs, the red triangle/circle represents the test BS/user, and the bold dotted lines discriminate BSs footprints. The locations of other users are omitted for clarity. }
	\label{fig:system_model}
\end{figure}

Consider that the BS locations are abstracted by a general infinite two-dimensional PP $\Psi=\{x_i; i \in \mathbb{Z}\}$, where $x_i \in \mathbb{R}^2$ represents the coordinates of the $i^{th}$ BS.\footnote{Each BS is denoted by its location and the terms ``point" and ``BS" are used interchangeably. Infinite networks are considered for simplicity and due to the negligible contribution from far-away BSs to the aggregate interference. Also, the analysis can be easily modified to finite networks.} {At the moment, assume that the selected PP $\Psi$ perfectly reflects the correlation between the BSs belonging to the same service provider.} Repulsion (i.e., a minimum distance between BSs) is an important form of correlation that exists in cellular networks due to the network planning process. 

Without loss of generality, it is assumed that the points in the set $\Psi$ are ordered with respect to (w.r.t.) their distance from the test user and that the test user is located at an arbitrary origin, see Fig.~\ref{fig:system_model}.\footnote{The origin is an arbitrary reference point in $\mathbb{R}^2$ which is selected for the analysis. Usually the origin is selected to be the test user's location at which we evaluate the performance. {Note that the notion of {\em arbitrary origin} holds for stationary PPs only, otherwise, the analysis is location dependent.}} In this case, the distance from the $n^{th}$ BS to the test user is given by $r_n = \left\| x_n \right\|$, and the inequalities ($ r_{n-1} < r_n < r_{n+1}$) are satisfied with probability one. For the sake of simple presentation, we define the set $\tilde{\Psi} = \{ \left\|x_i \right\|; i \in \mathbb{Z}\} = \{ r_i; i \in \mathbb{Z}\}$, which contains the ordered BSs distances to the test user. Due to the RSS-based association, the test user is associated with the BS located at $x_0$ and the baseband received signal by the test user  can be expressed as

\small
{\begin{align}
y_0 &= \sqrt{P} s_{0} h_{0} r_0^{-\frac{\eta}{2}} + \underset{{i}_{agg}}{\underbrace{ \sum_{r_k \in \tilde{\Psi} \setminus r_0} \!\!\sqrt{P} s_{k} h_{k} r_k^{-\frac{\eta}{2}}}} +n, 
\label{basic_baseband} 
\end{align}}
\normalsize

\noindent{where $s_0 \in \bf{S}$ is the intended symbol, $s_k \in \bf{S}$ is the interfering symbol from the $k^{th}$ BS, $h_o \sim \mathcal{CN}(0,1)$ is the intended channel fading parameter, $h_k \sim \mathcal{CN}(0,1)$ is the interfering channel fading parameter. The random variables $s_k$ are independent and identically distributed (i.i.d). Ditto for the random variabes $h_k$. Moreover, the symbols and fading parameters are independent of one another. Note that $r_0$ is excluded from ${\tilde{\Psi}}$ as the serving BS does not contribute to the interference. For simplicity, we conduct the analysis for a given $r_0$ (i.e., assuming constant $r_0$). Then the condition on $r_0$ is relaxed in Section~\ref{sec:advanced}.  {It is worth noting that the received signal in the form of \eqref{basic_baseband} also applies to other types of wireless networks that impose an interference protection of $r_0$ around receivers.}}

By visual inspection of \eqref{basic_baseband} it is clear that the aggregate interference ${i}_{agg}$ {involves numerous sources of uncertainties}. Neither the number nor locations of the interfering BSs are known $\{x_k\}_{k\in \mathbb{Z}}$. In other words, the set of interfering BSs ${\Psi}\setminus x_0$ is a random set with infinite cardinality (or random cardinality for finite networks). In the following subsections, we show how to handle this randomness and statistically characterize the aggregate interference in \eqref{basic_baseband}. Before getting into the details, we need to emphasize that we do not aim to calculate an instantaneous value for   ${i}_{agg}$. Instead, we aim to characterize  ${i}_{agg}$ via its probability density function (PDF), characteristic function (CF), and/or moments.  As will be shown later, and also discussed in \cite{survey_h, moe_win}, the distribution of  ${i}_{agg}$ is not Gaussian. This is because the central limit theorem does not hold for  ${i}_{agg}$ as the sum in \eqref{basic_baseband} is dominated by the interference from nearby BSs.

\subsection{SG Analysis for ${i}_{agg}$ for General Point Process}

{Due to the many sources of involved uncertainties,} it is not feasible to characterize   ${i}_{agg}$  in an elementary manner (i.e., by evaluating the distribution for sum and product of random variables). Instead, we express the characterization parameter of interest (e.g., the moments of ${i}_{agg}$) as a function of the PP ($\tilde{\Psi}$), {then apply SG results to seek a solution}. As shown in Fig.~\ref{fig:gateways}, SG provides two main techniques that transform a function that involves all points in a PP to {an integral} over the PP domain, namely, Campbell's theorem and the probability generating functional (PGFL).\footnote{The PP domain is the smallest region in the Euclidean space that contains the PP.} However, as shown in the figure, a certain representation for the parameter of interest is mandatory to exploit these techniques. Since Campbell's theorem requires an expectation over a random sum, it can be directly used to calculate moments. On the other hand, the PGFL requires an expectation over a random product, which makes it suitable to calculate the characteristic function of ${i}_{agg}$. Campbell's theorem states that:

\begin{figure}[t]
	\begin{center}
	 \includegraphics[width=3in,height=2in]{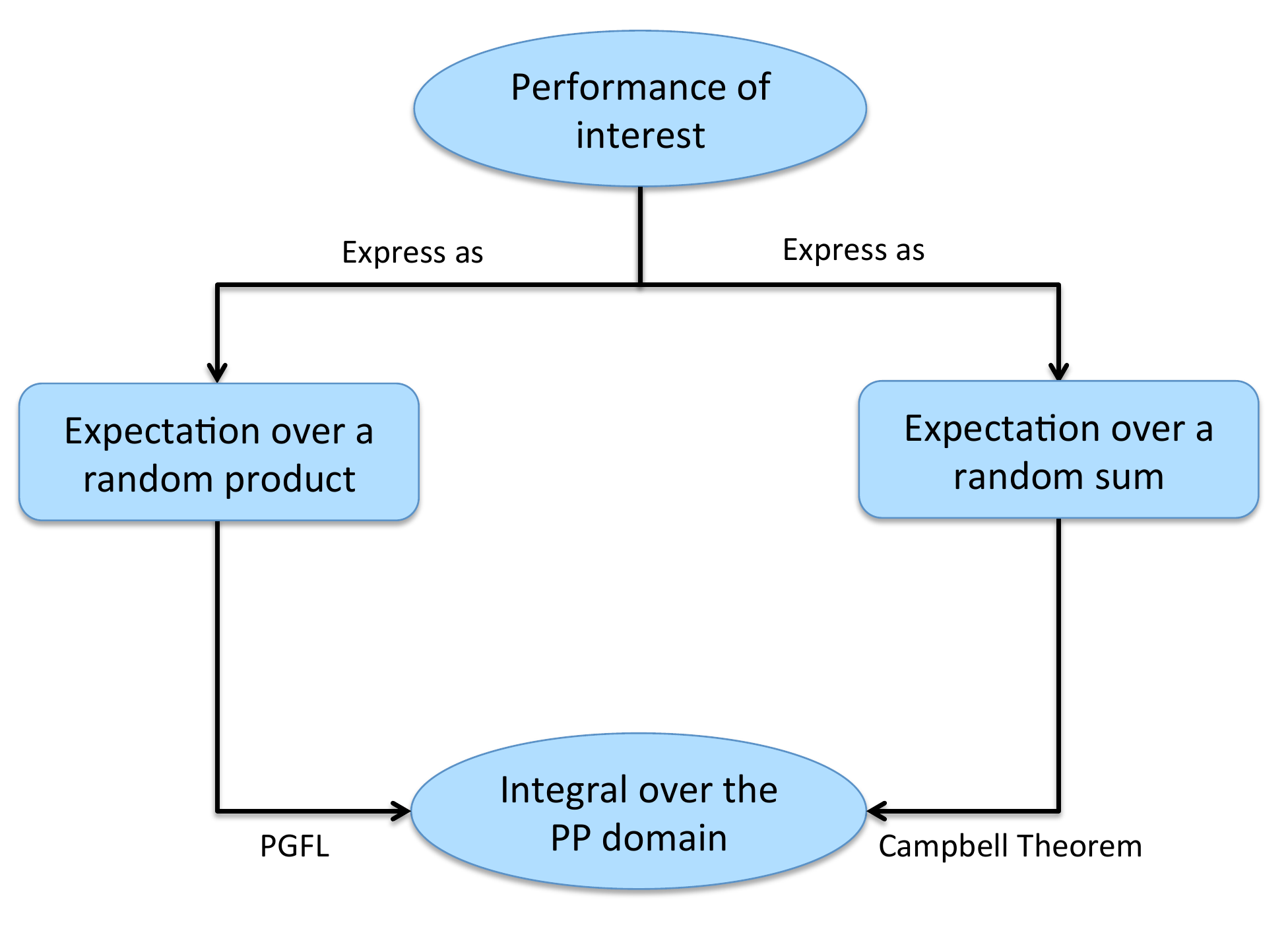}

	\end{center}
	\caption{ The two main techniques to handle the PP randomness. In the figure, PGFL denotes the probability  generating functional, which is defined in the text }
	\label{fig:gateways}
\end{figure}

\begin{theorem}[Campbell Theorem]

Let $\Phi$ be a PP in $\mathbb{R}^n$ and $f: \mathbb{R}^n \rightarrow \mathbb{R}$ be a measurable function, then 

{\begin{align}
\EX{ \underset{x_i \in \Phi} \sum f(x_i) } =& \int_{\mathbb{R}^n} f(x) \Lambda(\mathrm{d}x) 
\label{eq:campbell}
\end{align}}

\noindent{where $\Lambda(\mathrm{d}x)$ is the intensity measure of the PP $\Phi$ and $x \in \mathbb{R}^n$~\cite[Chapter 1.9]{SG_applications}}. {In case of PPs in $\mathbb{R}^2$, \eqref{eq:campbell} reduces to 
{\begin{align}
\EX{ \underset{x_i \in \Phi} \sum f(x_i)} {=}&  \int_{\mathbb{R}^2} f(x) \lambda(x) \mathrm{d}x
\label{eq:campbell_red}
\end{align}}

\noindent{where, $\lambda(x)$ is the two dimensional intensity function.}}
\end{theorem}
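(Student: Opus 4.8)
The plan is to establish \eqref{eq:campbell} by the standard measure-theoretic approximation argument, building up from indicator functions to general measurable functions. The starting point is the very \emph{definition} of the intensity measure: for any Borel set $B \subseteq \mathbb{R}^n$, $\Lambda(B) = \EX{\Phi(B)}$, where $\Phi(B) = \sum_{x_i \in \Phi} \mathbbm{1}_{B}(x_i)$ is the (random) number of points of $\Phi$ falling in $B$. Hence for $f = \mathbbm{1}_{B}$ the identity \eqref{eq:campbell} reduces to $\EX{\Phi(B)} = \Lambda(B) = \int_{\mathbb{R}^n} \mathbbm{1}_{B}(x)\, \Lambda(\mathrm{d}x)$, which holds by construction.

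First I would extend this to nonnegative simple functions $f = \sum_{j=1}^{m} c_j \mathbbm{1}_{B_j}$ with $c_j \geq 0$: by linearity of the sum over points and linearity of expectation, $\EX{\sum_{x_i \in \Phi} f(x_i)} = \sum_{j} c_j \EX{\Phi(B_j)} = \sum_{j} c_j \Lambda(B_j) = \int_{\mathbb{R}^n} f(x)\, \Lambda(\mathrm{d}x)$. Next I would pass to an arbitrary nonnegative measurable $f$ by taking an increasing sequence of simple functions $f_\ell \uparrow f$ pointwise (the usual dyadic truncation). Applying the monotone convergence theorem twice---once to bring the limit outside the expectation on the left-hand side, once to bring it outside the integral on the right-hand side---yields \eqref{eq:campbell} for all nonnegative $f$. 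Finally, for a general measurable $f$ with $\int_{\mathbb{R}^n} |f|\, \mathrm{d}\Lambda < \infty$ (equivalently $\EX{\sum_{x_i \in \Phi} |f(x_i)|} < \infty$ by the nonnegative case just proved), I would decompose $f = f^{+} - f^{-}$ and subtract the two corresponding instances of the identity, the finiteness ensuring the subtraction is well defined.

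The reduction \eqref{eq:campbell_red} is then immediate: it is the special case of \eqref{eq:campbell} in which the point process lives in $\mathbb{R}^2$ and its intensity measure $\Lambda$ is absolutely continuous with respect to two-dimensional Lebesgue measure, so that $\Lambda(\mathrm{d}x) = \lambda(x)\, \mathrm{d}x$ for the Radon--Nikodym density $\lambda(x)$; substituting this into \eqref{eq:campbell} gives \eqref{eq:campbell_red} directly.

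I expect the only point genuinely requiring care is the interchange of expectation with the (countably) infinite sum over the points of $\Phi$---this is exactly what the monotone convergence step licenses, and it is also the reason the clean statement needs either nonnegativity of $f$ or absolute integrability, since otherwise the random series $\sum_{x_i \in \Phi} f(x_i)$ need not converge at all. A secondary technical point is the measurability of $\omega \mapsto \sum_{x_i \in \Phi} f(x_i)$ as a random variable; this follows because each $f_\ell$ is simple and the counting variables $\Phi(B_j)$ are measurable by the definition of a point process, and measurability is preserved under the pointwise monotone limit.
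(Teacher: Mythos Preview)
Your proof is correct and is exactly the standard measure-theoretic argument for Campbell's theorem. However, the paper does not actually prove this statement at all: it merely states the theorem and cites \cite[Chapter 1.9]{SG_applications} as the source, treating it as a known result from the stochastic geometry literature. So there is nothing in the paper to compare against; you have supplied a full proof where the paper supplies only a reference.
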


As shown in \eqref{eq:campbell}, Campbell's theorem transforms an expectation of a random sum over the PP to {an integral involving the PP intensity function}. Note that the integration boundaries represent the boundaries of the region where the PP exists. For example, in the case of the depicted cellular networks, the RSS association implies that no interfering BS can exist within the distance $r_0$. Applying Campbell's theorem, the mean value of the aggregate interference in \eqref{basic_baseband} can be expressed as

\small
{\begin{align}
\EX{ {i}_{agg}} &= \EX{ \sum_{r_k \in \tilde{\Psi} \setminus r_0} \!\!\sqrt{P} s_{k} h_{k} r_k^{-\frac{\eta}{2}} } \notag \\
&\overset{(a)}{=} \EXs{{\tilde{\Psi}\setminus{r_0}}}{\sum_{r_k \in \tilde{\Psi} \setminus r_0} \EXs{s_k, h_k}{\sqrt{P} s_{k} h_{k} r_k^{-\frac{\eta}{2}}} }\notag \\
&\overset{(b)}{=}  \int_0^{2\pi} \lambda \int_{r_0}^\infty \EXs{s, h} {\sqrt{P} s h  r^{-\frac{\eta}{2}}} r \mathrm{d}r \mathrm{d}\theta \notag \\
&= 2\pi \lambda \sqrt{P} \EX{s}  \EX{h}   \int_{r_0}^\infty  r^{-\frac{\eta}{2}+1} \mathrm{d}r \notag \\
&\overset{(c)}{=}  0
\end{align}}
\normalsize
\noindent{where $(a)$ follows from the linearity of the expectation operator and the independence between the BS locations, the transmitted symbols, and the fading gains; $(b)$ follows from Campbell's theorem in which the integration is computed in the polar coordinates ($\mathrm{d}x = r \mathrm{d}r \mathrm{d}\theta$) {with a constant intensity function $\lambda(x) = \lambda$;} $(c)$ follows from $\EX{s}=0$, due to the symmetry of the symbols' constellation and the equal probability of the interfering symbols; $(c)$ can also follow from $\EX{h}=0$, due to the zero mean complex Gaussian assumption of the channel fading.  }

Campbell's theorem can also be used to find the second moment of interference:

\begin{align}
&\EX{ \left(\underset{r_i \in \Phi} \sum f(x_i) \right)^2} = \EX{ \underset{x_i \in \Phi}{\sum} f^2(x_i)  +   \overset{x_i \neq y_i}{\underset{x_i,y_i \in \Phi}{\sum}} f(x_i)f(y_i) } \notag \\
&= \EX{ \underset{x_i \in \Phi}{\sum} f^2(x_i) } +  \EX{ \overset{x_i \neq y_i}{\underset{x_i,y_i \in \Phi}{\sum}} f(x_i)f(y_i)} \notag \\
&= \int_{\mathbb{R}^n} f^2(x) \Lambda(dx) + \int_{\mathbb{R}^n} \int_{\mathbb{R}^n} f(x) f(y)  \mu^{(2)}(dx,dy) 
\end{align}
\normalsize

\noindent{where $\mu^{(2)}(\mathrm{d}x,\mathrm{d}y)$ is the second factorial moment \cite{martin_book} of the PP $\tilde{\Psi}$, which is not always straightforward to compute.\footnote{For a homogeneous PPP with intensity $\lambda$, the second factorial moment is given by $\mu^{(2)}(\mathrm{d}x,\mathrm{d}y) = \lambda^2 \mathrm{d}x \mathrm{d}y$. } From the above discussion, it seems that Campbell's theorem is restricted to compute the first moment of the interference and can be extended to derive the second moment when $\mu^{(2)}$ can be obtained. Therefore, Campbell's theorem is not sufficient to fully characterize ${i}_{agg}$.}

The second technique to characterize ${i}_{agg}$ is through the PGFL \cite[Definition 4.3]{martin_book}. The PGFL converts random multiplication of functions over PP, in the form of $\mathbb{E}\left\{  \Pi_{x_i \in \Phi} f(x_i)  \right\}$, to an integral over the PP domain. Random multiplication is useful to obtain the CF of the aggregate interference ${i}_{agg}$  as

\small
{\begin{align}
&\varphi_{{i}_{agg}}(\boldsymbol{\omega}) \notag \\
& \overset{(a)}{=} \mathbb{E}\left\{ e^{\jmath \text{Re}\left\{\boldsymbol{\omega}^H{i}_{agg} \right\} } \right\} \notag \\ 
& = \mathbb{E}\left\{ e^{\jmath \omega_1 \mathrm{Re}({{i}_{agg}}) + \jmath \omega_2 \mathrm{Im}({{i}_{agg}})} \right\} \notag \\ 
&= \mathbb{E}\left\{ e^{\jmath \omega_1 \underset{r_k \in \tilde{\Psi} \setminus r_0 }{\sum} \!\! \mathrm{Re}(s_{k} h_{k}) \sqrt{P}  r_k^{-\frac{\eta}{2}} + \jmath {\omega_2} \underset{r_k \in \tilde{\Psi} \setminus r_0 }{\sum} \!\! \mathrm{Im}(s_{k} h_{k}) \sqrt{P}  r_k^{-\frac{\eta}{2}}} \right\} \notag \\
&= \mathbb{E}\left\{ e^{ \;\;  \underset{r_k \in \tilde{\Psi} \setminus r_0 }{\sum} \jmath \sqrt{\frac{{P}}{ r_k^{{\eta}}}}   (\omega_1 \mathrm{Re}(s_{k} h_{k}) + {\omega_2} \mathrm{Im}(s_{k} h_{k}) )} \right\} \notag \\
&= \EXs{\tilde{\Psi} \setminus r_0 }{ \prod_{r_k \in \tilde{\Psi} \setminus r_0} \EXs{s_k, h_k} { e^{\jmath \sqrt{\frac{{P}}{ r_k^{{\eta}}}}   (\omega_1 \mathrm{Re}(s_{k} h_{k}) + {\omega_2} \mathrm{Im}(s_{k} h_{k}) )} }} \notag \\
\label{eq:pgf1}
\end{align}}\normalsize

\noindent{where $(a)$ in \eqref{eq:pgf1} follows form the difinition of the CF for complex random variables \cite[Definition 10.1]{andersen1995}, and $\boldsymbol{\omega} =\omega_1 +  \jmath \omega_2$.}

In \eqref{eq:pgf1} we have the CF of the ${i}_{agg}$ represented as an expectation over a random product of a function of the process $\Psi$. Hence, we can use the PGFL of $\Psi$ to compute $\varphi_{{i}_{agg}}(\cdot)$. Unfortunately, expressions for the PGFL only exist for a limited number of PPs. Hence, in order to use the PGFL and characterize the aggregate interference via its CF, we should approximate the PP $\Psi$ via one of the PPs with known PGFL.

In conclusion, characterizing the aggregate interference from a general PP $\{{\tilde{\Psi}} \setminus r_0\}$ is not trivial and may not be analytically tractable. While the PP intensity is sufficient to obtain the mean of ${i}_{agg}$ associated with a general PP $\{{\tilde{\Psi}} \setminus r_0\}$ via Campbell's theorem, tractable expressions for higher order moments cannot be generally obtained. Furthermore. the PGFL does not exist for all PPs to characterize the aggregate interference via its CF.      Therefore, we have to resort to some approximation to maintain tractability. The most common and widely accepted approximation for $\{{\tilde{\Psi}} \setminus r_0\}$ is the PPP, which is discussed in the next section.  

\section{Poisson Point Process Approximation} \label{sec:PPP}

{Due to its simple PGFL expression, which leads to simple evaluation of \eqref{eq:pgf1}, the PPP is an appealing approximation for the interfering BSs locations. Furthermore, the PPP is stationary and spatially ergodic, which further simplifies the analysis. The validity of the PPP approximation is reinforced by the studies in \cite{tractable_app, martin_ppp, valid,  marco_fitting}, which show the close match between the SINR obtained from PPP analysis and the SINR obtained via simulations with actual cellular network topology.} The PPP is formally defined as

\begin{definition}[Definition 1] (Poisson point process (PPP)): A PP $\Phi = \{ x_i; i= 1,2, 3, \ldots \} \subset \mathbb{R}^d$ is a PPP if and only if the number of points inside any compact set $\mathcal{B} \subset \mathbb{R}^d$ is a Poisson random variable,  and the numbers of points in disjoint sets are independent.
\label{def:PPP}
\end{definition}

\begin{figure}[t]
	\begin{center}
	  \subfigure[The interference from spatially correlated interferers]{\label{fig:PPP_aprox1}\includegraphics[width=1.8in]{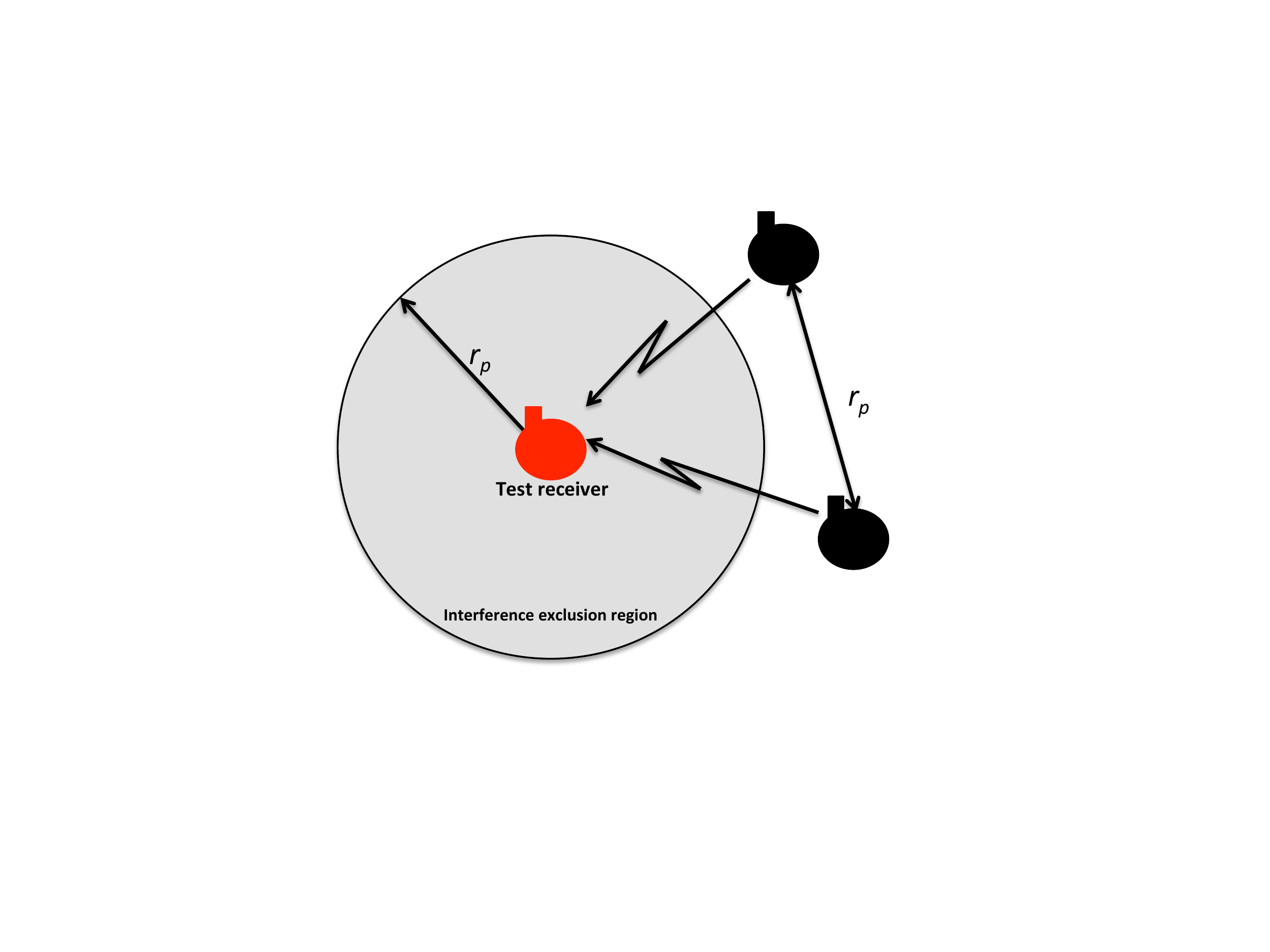}}
		\subfigure[the interference from an equi-dense spatially non-correlated interferers]{\label{fig:PPP_aprox2}\includegraphics[width=1.8in]{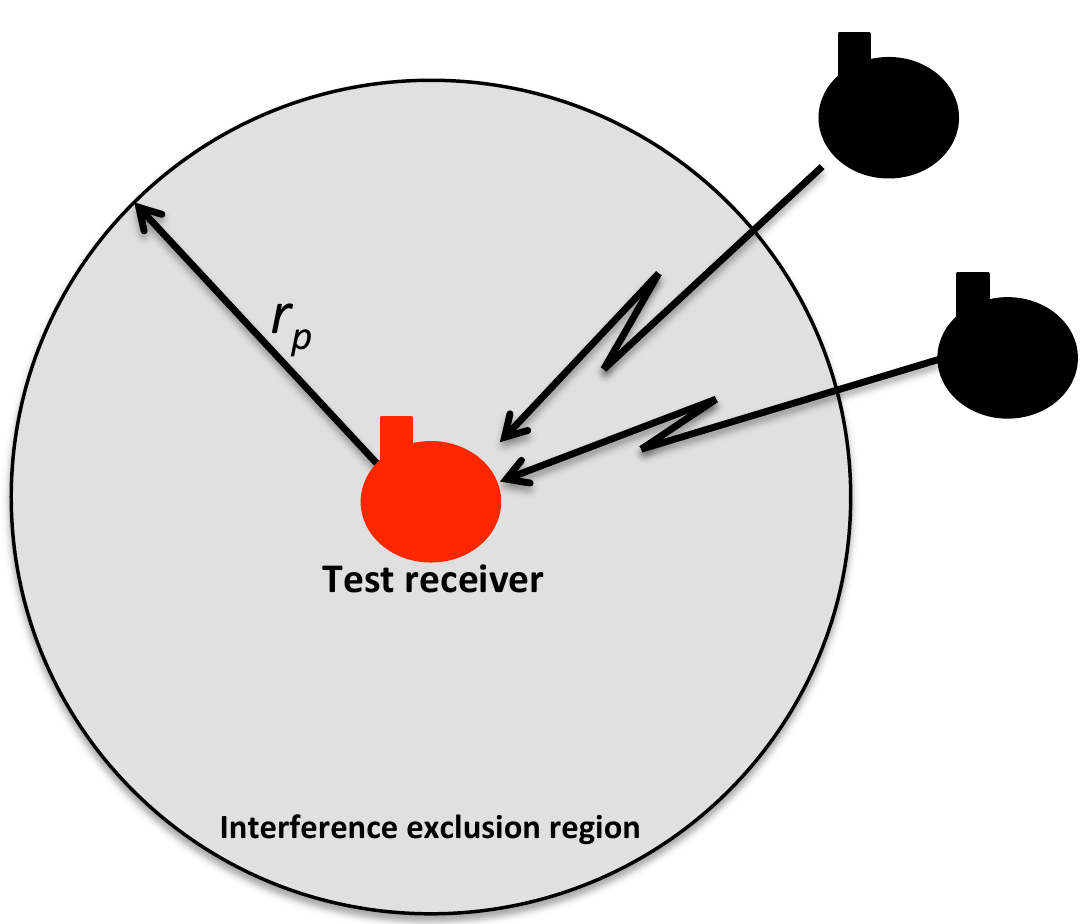}}
	\end{center}
	\caption{ Simple network model with one receiver (red circle in the center) and three trasmitters (black cricles).}
	\label{fig:PPP_aprox}
\end{figure}

Useful characteristics/expressions for the PPP are listed in \textbf{Appendix~\ref{PPP}}. From the PPP definition, one can see that the PPP does not impose any correlation between its points. Nevertheless, it gives an accurate estimate for the interference at a reference location in cellular networks, which impose repulsion {(i.e., a form of correlation)} between BS locations. Rigorous statistical studies for the PPP approximation {for BS locations from operators in UK} are conducted in \cite{martin_ppp, marco_fitting}. The authors show that 
the PPP accurately captures the SINR statistical behavior at any reference point. However, the authors in \cite{Correlated_Interferers} emphasize that the PPP accuracy depends on the magnitude of correlation between points with large spatial separations. Let $r_p$ be the minimum distance between two points in a repulsive PP, then the PPP approximation is accurate if and only if points separated by distances much larger than $r_p$ satisfies an asymptotic independence property~\cite{Correlated_Interferers}.\footnote{A stochastic PP where the points are prohibited to coexist within a certain distance $r_p$ from each other is denoted as a repulsive PP.} By virtue of the PPP approximation, tractable analysis is conducted for  several types of repulsive PPs in cellular networks domain \cite{Correlated_Interferers, uplink_H, uplink_harpreet, Jeffery_D2D, D2D_h, cog_h, Bennis} and ad hoc network domain \cite{New, nihar, Spec-eff, Mod-HCPP, letter, matern-paper2, dense_ppp}. Otherwise, only simulation studies are possible \cite{simulation_matern}. {The takeaway message from these studies is that the PPP approximation gives an accurate estimate for the interference associated with a repulsive PP. However, the intensity of the PPP used for approximation and the interference exclusion region around the test receiver should be carefully  chosen.} An intuitive explanation for the accuracy of the PPP approximation is as follows: approximating a repulsive PP with a PPP mainly neglects the mutual correlation (i.e., repulsion) among the points. However, the correlation with the test receiver is captured by the interference exclusion region.\footnote{{In the context of PPP, the interference exclusion region creates a void probability for the PPP points to exist in a certain region around the receiver. 
}} {For instance, Fig.~\ref{fig:PPP_aprox1} shows a PP that exhibits repulsion among its points as well as repulsion w.r.t. the test receiver, and Fig.~\ref{fig:PPP_aprox2} shows an approximation of the PPs in Fig.~\ref{fig:PPP_aprox1} by relaxing the  mutual repulsion between the interfering points. The PP in  Fig.~\ref{fig:PPP_aprox2} mimics the interference in  Fig.~\ref{fig:PPP_aprox1} on the test receiver because both have similar interference exclusion regions and same number of interferers. }


{From the above discussion, we emphasize that the PPP used to approximate a repulsive PP should be parameterized with two parameters, namely, the intensity $\lambda(x)$ and the  interference boundaries, as shown in Fig.~\ref{fig:int_bound}.\footnote{ Note that the intensity function $\lambda(x)$ is parameterized  by the location $x$ as the PPP is not necessarily homogeneous over the spatial domain.} Usually, the interference outer boundary is considered infinite due to the large-scale nature of the cellular network size and the negligible contribution from faraway BSs to the aggregate interference.} Hence, as long as the PPP approximation is considered, the intensity $\lambda(x)$ and inner interference boundary should be carefully estimated. In each of the presented case studies in Section~\ref{sec:advanced}, we will highlight how to estimate the intensity and the interference exclusion region.

\subsection{Interference Characterization}

In this subsection, we characterize the interference in a Poisson  field of interferers with exclusion region around the test receiver.  Let $z_k = h_k s_k$, then using the PGFL of a homogeneous PPP (see \eqref{PPP_PGFL} in \textbf{Appendix~\ref{PPP}}) the CF of the aggregate interference in \eqref{eq:pgf1} can be written as

\small
{\begin{align}
&\varphi_{{i}_{agg}}(\boldsymbol{\omega}) {=} \EXs{\tilde{\Psi}\setminus r_0}{ \prod_{r_k \in \tilde{\Psi}} \EXs{z_k} {e^{\jmath \sqrt{\frac{{P}}{ r_k^{{\eta}}}}   ( {\omega_1} \mathrm{Re}\{z_k\} + {\omega_2} \mathrm{Im}\{z_k\})} } } \notag \\
 &\overset{(a)}{=}  \exp \left\{-\int_{\mathbb{R}^2} \left(1- \EXs{z} { e^{ \jmath \sqrt{\frac{{P}}{ r^{{\eta}}}}   ( {\omega_1} \mathrm{Re}\{z_k\} + {\omega_2} \mathrm{Im}\{z_k\}) } } \right)  \Lambda(dr) \right\} \notag \\
 &\overset{(b)}{=}  \exp \left\{-2 \pi \lambda \int\limits_{r_0}^{\infty} \left(1- \mathbb{E}_{s}\EXs{z|s}{ e^{\jmath \sqrt{\frac{{P}}{ r^{{\eta}}}}  ( {\omega_1} \mathrm{Re}\{z\} + {\omega_2} \mathrm{Im}\{z\}) }}\right) r dr \right\} \notag \\
  & \overset{(c)}{=}  \exp \left\{-2 \pi \lambda  \int\limits_{r_0}^{\infty} \left(1-\EXs{s}{ e^{ - \frac{\left|\boldsymbol{\omega}\right|^2 P |s|^2 }{4 r^{{\eta}}}}}\right) r dr \right\}  \notag \\
& {\overset{(d)}{=}  \exp \Bigg\{ \frac{{ \pi \lambda}}{M}   \sum_m  \Bigg[ r_0^2\Bigg(1-  e^{-\frac{ \left|\boldsymbol{\omega}\right|^2 P |s_m|^2 }{4 r_0^{{\eta}}}} \Bigg) -   \Bigg. \Bigg. } \notag \\ 
& \quad \quad \quad \quad \quad { \Bigg. \Bigg.  \Bigg( \frac{ \left|\boldsymbol{\omega}\right|^2 P |s_m|^2}{4} \Bigg)^\frac{2}{\eta}  \gamma\Bigg(1-\frac{2}{\eta}, \frac{ \left|\boldsymbol{\omega}\right|^2 P |s_m|^2 }{4 r_0^{\eta}}\Bigg) \Bigg] \Bigg\}  }     
\label{eq:pgf2}
\end{align}}
\normalsize

\noindent{where $(a)$ follows from  the PGFL of the PPP (cf. \eqref{PPP_PGFL} in \textbf{Appendix~A}), $(b)$ follows from the RSS association (i.e., inner interference boundary is $r_0$) and substituting $z$ with $h s$, $(c)$ follows from the circularly symmetric Gaussian distribution of $h$, and $(d)$ is obtained by change of variables }

\begin{equation}
y=\frac{ \left|\boldsymbol{\omega}\right|^2 P |s_m|^2 }{4 r_0^{{\eta}}}, \notag
\end{equation}

\noindent{integration by parts, and the equi-probable symbol generation. The steps from $(a)$ to $(d)$ in \eqref{eq:pgf2} are the SG common steps to derive the characteristic function of the aggregate interference. Note that the CF in \eqref{eq:pgf2} is only valid for $\eta>2$. Otherwise (i.e., $\eta \leq 2$), the interference power is infinite almost surely \cite{now_martin}. Putting $r_0 = 0$ in \eqref{eq:pgf2}, we have }

\small
\begin{equation}
\left. \varphi_{{i}_{agg}}(\boldsymbol{\omega})\right|_{\tiny{r_0=0}}{=} \exp\left\{- \frac{\pi \lambda |\boldsymbol{\omega}|^\frac{4}{\eta} P^\frac{2}{\eta} \mathbb{E}\{|s|^\frac{4}{\eta}\} \Gamma\left(1-\frac{2}{\eta}\right)}{2^\frac{4}{\eta}} \right\}
\end{equation}
\normalsize 

\noindent{which is equivalent to \cite[equation (9)]{moe_win} given for ad hoc network. From \cite{moe_win}, it can be noted that with no exclusion region around the test receiver the aggregate interference (${i}_{agg}$) has an $\alpha$-stable distribution with infinite moments. The interference protection of $r_0$, provided by the basic cellular association, diminishes the interference distribution's heavy tail and results in finite interference moments. To study the moments of the interference, we manipulate \eqref{eq:pgf2} to express the CF of the aggregate interference in the following forms\footnote{ {The expression in \eqref{eq:pgf3} is obtained from \eqref{eq:pgf2} using the power series expansion of the incomplete Gamma function $\gamma(s,x)=x^s \Gamma(s)e^{-x}\sum_{k=0}^\infty \frac{x^k}{\Gamma(s+k+1)}$ and some mathematical manipulations.}}}

\begin{figure}[t]
	\begin{center}
\includegraphics[width=2.150in,height=2.15in]{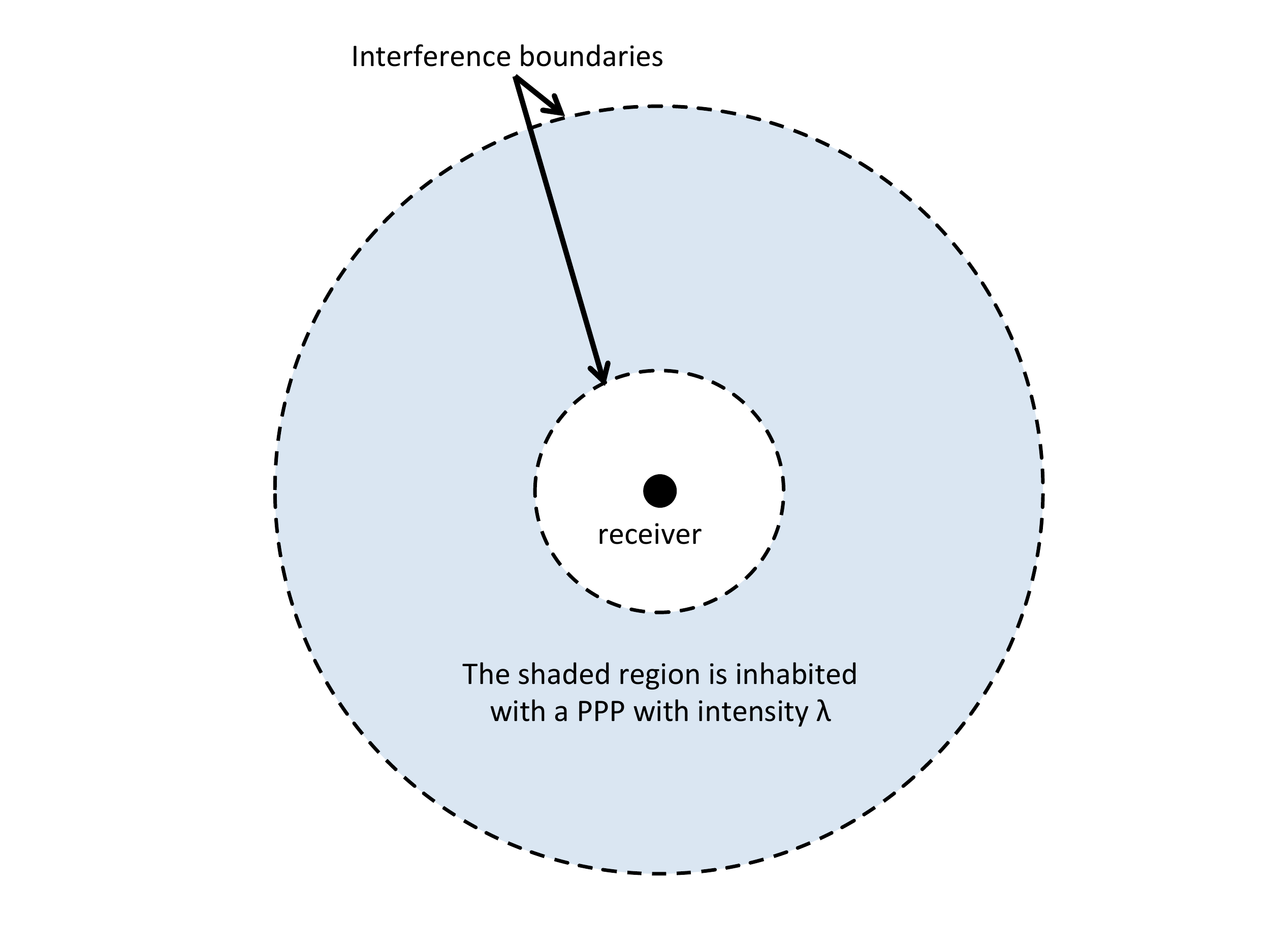}
	\end{center}
	\caption{ Intensity and boundaries are the main two parameters for the interference associated with PPP. }
	\label{fig:int_bound}
\end{figure}

\scriptsize
{\begin{align}
\varphi_{{i}_{agg}}(\boldsymbol{\omega}) &= \exp \left\{ \frac{{ \pi \lambda}}{M}   \sum_m  r_0^2 \left[ 1-  {_1}F_1\left(-\frac{2}{\eta},1-\frac{2}{\eta} ,- \frac{\left|\boldsymbol{\omega}\right|^2 P |s_m|^2 }{4 r_0^\eta}\right)  \right] \right\}  \notag \\
&= \exp \left\{{{2 \pi \lambda}}  r_0^2 \sum_{k=1}^{\infty} \frac{ (-1)^k \left|\boldsymbol{\omega}\right|^{2k} \mathbb{E}\{|s|^{2k}\} }{ (\eta k -2) k!}  \left( \frac{ P}{4 r_0^\eta}\right)^k  \right\} 
\label{eq:pgf3}
\end{align}}
\normalsize

\noindent{While the first form for $\varphi_{{i}_{agg}}(\boldsymbol{\omega})$ in \eqref{eq:pgf3} is compact and can be used to obtain the PDF of ${i}_{agg}$ via numerical inversion (e.g., Gil-Pelaez inversion theorem), the second form for $\varphi_{{i}_{agg}}(\boldsymbol{\omega})$ in \eqref{eq:pgf3} is easy to differentiate and obtain the moments of the ${i}_{agg}$. For the sake of simple presentation, we get the moments through cumulants. Following \cite{CFs_book}, the $n^{th}$ cumulant per dimension for the complex interference signal is defined as

\begin{align}
\kappa_n  &= \kappa_n\left(\mathrm{Re}\{{i}_{agg}\}\right) \notag \\
 &= \kappa_n\left(\mathrm{Im}\{{i}_{agg}\}\right) \notag \\
 &=\left.\frac{\partial^{n} \ln\left(\varphi_{{i}_{agg}}(|\boldsymbol{\omega}|)\right)}{j^n \partial{\omega_1}^n}\right|_{|\boldsymbol{\omega}|=0} \notag \\
 &=    \left.\frac{\partial^{n} \ln\left(\varphi_{{i}_{agg}}(|\boldsymbol{\omega}|)\right)}{j^n \partial{\omega_2}^n}\right|_{|\boldsymbol{\omega}|=0} 
\end{align}

\noindent{Note} that $\kappa_n\left(\mathrm{Re}\{{i}_{agg}\}\right)  = \kappa_n\left(\mathrm{Im}\{{i}_{agg}\}\right) $ because the interference signal is circularly symmetric as the CF in \eqref{eq:pgf3} is a function of $|\boldsymbol{\omega}|$ only. For notational convenience, we drop the real and imaginary parts and denote the per dimension $n^{th}$ cumulant as $\kappa_n \left({i}_{agg}\right)  = \kappa_n\left(\mathrm{Re}\{{i}_{agg}\}\right)  = \kappa_n\left(\mathrm{Im}\{{i}_{agg}\}\right) $. Using this notation, the per dimension cumulants are 

\begin{align}
\kappa_n\left({i}_{agg}\right)  &=\left\{\begin{matrix}
0, &  \text{$n$ is odd} \\ 
& \\
   \frac{{ \pi \lambda P^{\frac{n}{2}}} (n)_{n/2}}{ 2^{n-1}  (\frac{\eta n}{2}-2) }  r_0^{2-\frac{\eta n}{2}}  \mathbb{E}\left\{|s|^n \right\},  &   \text{$n$ is even} 
\end{matrix}\right.
\label{eq:moments1}
\end{align}

\noindent{From} the cumulants, the per-dimension moments can be obtained as

\begin{align}
\mathbb{E}\left\{\text{Re}\left\{{i}_{agg}^n\right\}\right\}  &=\left\{\begin{matrix}
0, &  \text{$n$ is odd} \\ 
& \\
   \kappa_2,  &   \text{$n$ = 2 } \\ & \\
   \kappa_4 + 3 (\kappa_2)^2,  &   \text{$n$ = 4 }  \\ 
   & \\
      \kappa_6 +15 \kappa_2 (\kappa_4 + (\kappa_2)^2),  &   \text{$n$ = 6 }  \\
   \vdots  &   \vdots 
\end{matrix}\right.
\label{eq:moments_k}
\end{align}

\noindent{The} expected aggregate interference power can be expressed as

\begin{align}
\mathbb{E}\left\{ {i}_{agg}  {i}_{agg}^H \right\} = \frac{{ 2 \pi \lambda P r_0^{2-{\eta}} }}{ {\eta}-2}.   
\label{eq:power}
\end{align}

\noindent{The} per dimension kurtosis, defined as $\frac{\kappa_4}{\kappa_2}$, is 

\begin{equation} \label{kurt}
\text{kur} = \frac{3 (\eta-2)^2 \mathbb{E}\left\{|s|^4\right\}}{4 \pi \lambda (\eta-1) r_0^2}.
\end{equation}

The characteristic function in \eqref{eq:pgf3}, the cumulants in \eqref{eq:moments1}, the expected interference power in \eqref{eq:power}, and the kurtosis in \eqref{kurt} show several interesting facts about the aggregate interference in the depicted system model:

\begin{itemize} 
\item The interference is circularly symmetric  complex random variable.
\item The interference is not Gaussian and the central limit theorem does not apply.
\item The interference power is infinite at $\eta = 2$ or $r_0 = 0$.
\item All interference cumulants, and hence moments, are finite for $\eta > 2$ and $r_0 >0$.
\item The interference distribution has a positive and finite kurtosis for $\eta > 2$ and $r_0 >0$, which indicates that it has a heavier tail than the Gaussian distribution.
\item The interference power decays with the interference exclusion radius at the rate of $r_0^{2-\eta}$ for $\eta>2$.
\item The interference increases linearly with the intensity $\lambda$ and power $P$.
\end{itemize}


\subsection{Numerical Results for ${i}_{agg}$}

In this section, we provide numerical results to visualize some properties of the aggregate interference in PPP networks with interference exclusion region. Also,  the numerical results show how the aggregate interference in PPP networks is related to the Gaussian and $\alpha$-stable distributions. Fig.~\ref{i_agg_PDF} plots the PDF of $\text{Re}\{{ i}_{agg}\}$ against the $\alpha$-stable and Gaussian PDFs with the same parameters.\footnote{Due to the circular symmetry of  ${i}_{agg}$ the PDF of $\text{Im}\{{i}_{agg}\}$ is similar to that of $\text{Re}\{{i}_{agg}\}$ given in Fig.~\ref{i_agg_PDF}. } The figure confirms the heavy (fast-decaying) tail of the  ${ i}_{agg}$ when compared to the Gaussian ($\alpha$-stable) PDF. With smaller exclusion distance $r_0$, the interference  ${ i}_{agg}$ approaches the $\alpha$-stable distribution. As $r_0$ increases,  ${i}_{agg}$ approaches the Gaussian distribution.

\begin{figure}[t]
	\begin{center}
	  \subfigure[]{\label{r150}\includegraphics[width=2.70in]{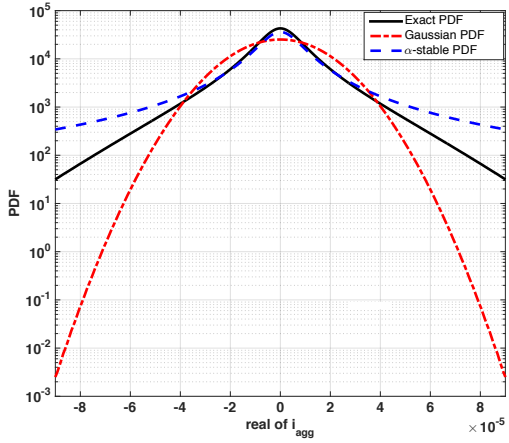}}
		\subfigure[]{\label{r250}\includegraphics[width=2.70in]{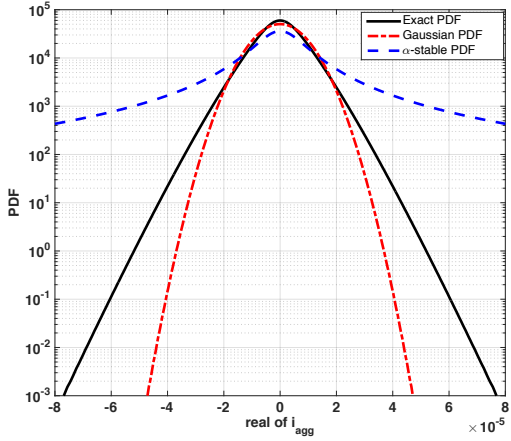}}
	\end{center}
	\caption{ The PDF of  $\text{Re}\{{ i}_{agg}\}$ obtained by numerically inverting \eqref{eq:pgf3} at $\lambda=1$ BS/km$^2$, $P=10$ W and $\eta=4$ for a) $r_0=250$ m and b) $r_0=500$ m .}
	\label{i_agg_PDF}
\end{figure}

To see the relation between  ${ i}_{agg}$, Gaussian, and $\alpha$-stable distributions more clearly, we plot the relative Kolmogorov--Smirnov (KS) distance in Fig.~\ref{KS_statistic}.\footnote{The KS distance measures the maximum distance between two CDFs $F_1(\cdot)$ and $F_2(\cdot)$, and is defined as ${\rm KS}=\underset{x}{\sup}|F_1(x)-F_2(x)|$.} Note that the KS statistic compares the entire CDFs and does not capture deviations in the tail probabilities. The figure shows that ${ i}_{agg}$ can neither be classified as Gaussian nor as $\alpha$-stable distributed. However, as $r_0$ increases, ${ i}_{agg}$ deviates from the $\alpha$-stable distribution and approaches the Gaussian distribution. Furthermore, as the intensity increases, the rate at which ${i}_{agg}$ deviates from the $\alpha$-stable distribution and approaches the Gaussian distribution increases. This is because increasing the intensity of interferers populates the interference boundary with more interferers, and hence, the central limit theorem becomes more applicable. On the contrary, at low intensity the interference is dominated by a small number of interferers, which renders the limit theorem inapplicable.

\begin{figure}[t]
	\begin{center}
	  \includegraphics[width=2.80in]{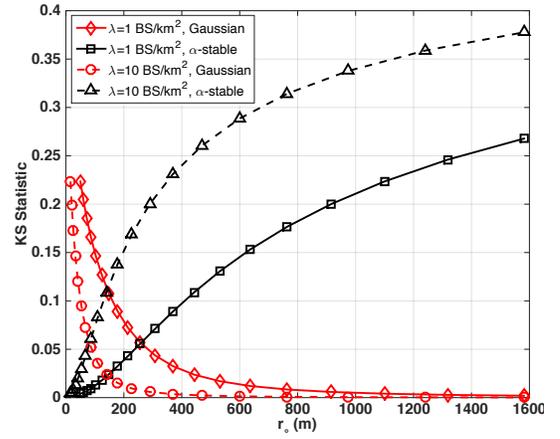}
	\end{center}
	\caption{ The KS statistic for ${i}_{agg}$ when compared to a Gaussian and $\alpha$-stable distributions.}
	\label{KS_statistic}
\end{figure}

\subsection{Section Summary}

In this section we motivate the use of PPP for network abstraction in order to obtain tractable results. We derive the CF of the aggregate baseband interference and compute its moments. We show that the aggregate interference in PPP networks with exclusion region around the receiver is neither Gaussian nor $\alpha$-stable distributed. Then, we highlight some characteristics of the baseband aggregate interference. In the next section, we will turn our focus to error probability performance. 


\section{Exact Error Probability Analysis} \label{sec:exact}

Error probability performance metrics are tangible measures used to fairly judge the performance of communication systems. Error probability includes bit error probability (BEP), symbol error probability (SEP), and pairwise error probability. In the context of wireless networks, error probability  performance has mainly been studied and conducted for additive white Gaussian noise (AWGN) or Gaussian interference channels \cite{slim_book}. In this section, we illustrate how to generalize the error probability analysis to the cellular networks domain.  Without loss of generality, we focus on the SEP, denoted by $\mathcal{S}$, for coherent maximum likelihood detector with $M$-QAM modulation scheme given by \cite[chapter 8]{slim_book},
{\begin{align} \label{SNR_ASEP}
\mathcal{S}= \displaystyle {w_{1} } \text{erfc}\left( \sqrt{\beta_1 \Upsilon}\right) + \displaystyle {w_{2} } \text{erfc}^2\left( \sqrt{\beta_2 \Upsilon}\right),
\end{align}}

\noindent{where , $w_{1} = 2 \frac{\sqrt{M}-1}{\sqrt{M}}$, $w_{2} = - \left(\frac{\sqrt{M}-1}{\sqrt{M}}\right)^2$, $\beta_1=\beta_2=\frac{3}{2(M-1)}$ are modulation-dependent weighting factors, and $\Upsilon$ is the  the signal-to-noise-ratio (SNR). It is worth noting that changing the factors $w_{1}$, $w_{2}$, $\beta_1$, and $\beta_2$, the SEP and BEP can be calculated for different modulation schemes and constellation sizes as shown in Table~\ref{table:modulations}.}

{All parameters in the SEP expression in \eqref{SNR_ASEP} are deterministic and the expression is derived based on the Gaussian distribution of the noise, in which the SNR $\Upsilon$ is the signal power divided by the variance of the Gaussian noise.} 
As shown in the previous section, the aggregate interference in the depicted  system model is not Gaussian, and hence, the cellular network does not maintain the same assumptions that are used to derive \eqref{SNR_ASEP}. Therefore, \eqref{SNR_ASEP} is not legitimate to calculate the SEP in cellular networks.

 One elegant solution to apply \eqref{SNR_ASEP} to study the error performance in the depicted large-scale cellular network is to represent the interference as a conditional Gaussian random variable \cite{moe_win_error,eid_app, Laila_Uplink}. Hence, treating interference as noise, \eqref{SNR_ASEP} is legitimate to calculate the conditional error probability. Then, an averaging step is required to obtain the unconditional error probability. This is known in the literature by the {\em Equivalent-in-Distribution} (EiD) approach, as it relies on the equivalence in distribution between the interference and the sum of randomly scaled Gaussian random variables. The rest of this section is devoted to illustrate the exact error performance characterization via the EiD approach. We first show how to represent the interference as a conditional Gaussian random variable, then we exploit this representation to calculate the average SEP (ASEP) in cellular networks. Note that SG provides the spatial average SEP, denoted as $\bar{\mathcal{S}}$.

\begin{center}
\begin{table}
\setlength\extrarowheight{7pt}
\caption{SEP Modulation-Specific Parameters for binary phase shift keying (BPSK), binary frequency shift keying ({BFSK}), quadrature phase shift keying (QPSK), M-quadrature amplitude modulation (M-QAM),  M-pulse amplitude modulation (M-PAM), differential encoded BPSK (DE-BPSK), minimum shift keying (MSK) }
\centering
\begin{tabular}{|c|c|c|c|}
\cline{1-4} 
& \multicolumn{3}{ |c| }{ \multirow{2}{*}{\textbf{Modulation Specific Parameters}}} \\ 
\textbf{Modulation}&   \multicolumn{3}{ |c| }{ } \\ \cline{2-4}
\textbf{Scheme}  & \multirow{2}{*}{$c$} & \multirow{2}{*}{$w_c$} & \multirow{2}{*}{$\beta_c$}  \\ 
 &  &  &   \\ \cline{1-4}
{\multirow{2}{*}{BPSK} } & 1 & {$ \frac{1}{2}$} &{$\frac{1}{2}$}  \\  \cline{2-4} 
                       & 2
   &$0$ & {-} \\  \cline{1-4}
   {\multirow{2}{*}{BFSK} } & 1 & {$ \frac{1}{2}$} &{$1$}  \\  \cline{2-4} 
                       & 2
   &$0$ & {-} \\  \cline{1-4}
   {\multirow{2}{*}{QPSK} } & 1 & {$ 1 $} &{$\frac{1}{2}$}  \\  \cline{2-4} 
                       & 2
   &{$- \frac{1}{4} $} & {$\frac{1}{2}$} \\  \cline{1-4}
{\multirow{2}{*}{M-QAM} } & 1 & {$ 2\frac{\sqrt{M}-1}{\sqrt{M}}$} &{$\frac{3}{2(M-1)}$}  \\  \cline{2-4} 
                       & 2
   &{$ - \left(\frac{\sqrt{M}-1}{\sqrt{M}}\right)^2$} & {$\frac{3}{2(M-1)}$} \\  \cline{1-4}
  {\multirow{2}{*}{M-PAM} } &
 1 & {$  \frac{{M}-1}{{M}}$} &{$\frac{3}{(M^2-1)}$}   \\  \cline{2-4} 
 &
  2 & 0 & -    \\  \cline{1-4} 
  {\multirow{2}{*}{M-PSK } } & 1  & 1 & ${\sin(\frac{\pi}{M})}$  \\ \cline{2-4}
             Upper-bound        &
2 & 0 &  -  \\ \cline{1-4}
{\multirow{2}{*}{DE-BPSK} } & 1  & 1 & 1  \\ \cline{2-4}
                      &
2 & 1 &  1 \\ \cline{1-4}
{\multirow{2}{*}{MSK} } & 1  & $\frac{1}{2}$ & 1  \\ \cline{2-4}
                      &
2 & 0 &  - \\ \cline{1-4}
\end{tabular}
\label{table:modulations}
\end{table}
\end{center}

%

\subsection{Conditional Gaussian Representation for Interference}

The conditional Gaussian representation of the interference is obtained by exploiting the fact that matching characteristic functions implies equivalent distributions. The authors in \cite{eid_app} show that an equivalent-in-distribution representation for ${{i}_{agg}}$ can be expressed as $ {{i}_{eq}}= {{\sum_{q=1}^{\infty} \sqrt{\mathcal{B}_{q} }G_{q}}}$, where $\mathcal{B}_{q}$ is a real random variable with Laplace transform (LT) $\mathcal{L}_{\mathcal{B}_{q}}\left( s\right)=  e^{(-s)^q}$ and $G_{q}$ is a {zero-mean} circularly symmetric complex Gaussian random variable with variance $\sigma_q^2$. Note that the selected LT of $\mathcal{B}_{q}$ here is different than that in \cite{eid_app} to avoid negative variances.} The equivalence in distribution is proved by showing that ${{i}}_{eq}$ has a matching characteristic function to \eqref{eq:pgf3}. The CF of ${{i}}_{eq}$ is obtained as

\small
\begin{align}
\mathbb{E}\left\{e^{\jmath \boldsymbol{\omega} {i}_{eq} }\right\} &= \EXs{B_{q} }{ \EXs{G_{q}}{  e^{\jmath \boldsymbol{\omega} \sum_{q=1}^\infty \sqrt{B_{q}} G_{q}} }} \notag \\
&=    \exp\left\{ \sum_{q=1}^\infty \left(- \frac{\sigma^2_{q} |\boldsymbol{\omega}|^2}{4}\right)^q \right\}. 
  \label{eq:equi}
\end{align}
\normalsize

\noindent{Comparing} \eqref{eq:equi} with \eqref{eq:pgf3}, one can see that if $\sigma^2_q$ is selected as 

\begin{align}
\sigma^2_{q} = \left( \frac{2 \pi\lambda r_0^{2-\eta q} P^q \mathbb{E}\left\{ |s|^{2q}\right\}}{ (\eta q-2) q!} \right)^{\frac{1}{q}},
\label{eq:sigma_q}
\end{align}

\noindent{then} \eqref{eq:equi} and \eqref{eq:pgf3} have {equivalent} CFs, and hence, equivalent distributions. Exploiting the Gaussian representation for ${i}_{agg}$, the baseband received signal at the test UE can be rewritten as

\begin{align} \label{eq_dist}
y_0 &\overset{D}{=} \sqrt{P} s_0 h_0 r_0^{-\frac{\eta}{2}} +  \underset{{{i}}_{eq}}{\underbrace{\sum_{q=1}^{\infty} \sqrt{{B}_{q} }G_{q}}} + n,  \notag \\
&= \sqrt{P} s_0 h_0 r_0^{-\frac{\eta}{2}} + \tilde{n},
\end{align}

\noindent{where $\tilde{n}={i}_{eq} + n$. Since $\left\{G_{q}\right\}_{q=1}^{\infty}$ are independent circularly symmetric Gaussian random variables, conditioning on $\left\{B_{q}\right\}_{q=1}^{\infty}$,  the lumped interference plus noise term ($\tilde{n}$) is a circularly symmetric complex Gaussian random variable with a total variance of $(\sum_{q=1}^\infty B_q \sigma_q^2 + N_0)$. This representation is the key that merges SG analysis and the rich literature available on AWGN based performance analysis. Since $\tilde{n}$ in \eqref{eq_dist}  is conditional Gaussian, the SNR formulas in AWGN channels can be extended to model error performance in cellular networks, as shown in the next subsection. }

\subsection{ASEP with Non-Gaussian Cellular Interference}

Let $\Xi=h_0 \cup \left\{\mathcal{B}_{q}\right\}_{q=1}^{\infty}$, then following \cite{slim_book}, the conditional average {\rm SINR}, when treating interference as noise, is given by
\begin{align}
\!\! \bar{\Upsilon}(r_0 \vert \Xi) & =  \frac{\EXs{s_0}{ \mathbb{E} \left\{ y_0 \right\} \mathbb{E} \left\{ y_0^*\right\} } }{ \mathbb{E} \left\{ y_0 y_0^*\right\}  - \mathbb{E} \left\{ y_0\right\} \mathbb{E} \left\{ y_0^*\right\}}  \notag \\
&= \frac{P |h_0|^2 \mathbb{E}\{|s_0|^2\} r_0^{-\eta}}{N_0+ \sum_{q=1}^{\infty} {\mathcal{B}_{q} \sigma_{q}^2}}\notag \\
&= \frac{P |h_0|^2 r_0^{-\eta}}{N_0+ \sum_{q=1}^{\infty} {\mathcal{B}_{q} \sigma_{q}^2}}.
\label{eq:cond_SINR}
\end{align}

\noindent{Conditioning} on $\Xi$, the SINR in \eqref{eq:cond_SINR} is similar to the legacy SNR in \eqref{SNR_ASEP} but with increased noise variance of $N_0+ \sum_{q=1}^{\infty} {\mathcal{B}_{q} \sigma_{q}^2}$. Hence, rewriting \eqref{SNR_ASEP}, the ASEP with interference can be expressed as

\small
\begin{align} \label{ASEP_cond10}
&\bar{\mathcal{S}}(r_0 \vert \Xi)= \displaystyle {w_{1} } \text{ erfc}\left( \sqrt{\beta_1  \bar{\Upsilon}(r_0 \vert \Xi)}\right) + \displaystyle {w_{2} } \text{ erfc}^2\left( \sqrt{\beta_2  \bar{\Upsilon}(r_0 \vert \Xi)}\right).
\end{align}
\normalsize

Let $\zeta=\frac{\sum_{q=1}^{\infty} {\mathcal{B}_{q} \sigma_{q}^2}}{P r_0^{-\eta}}$. Then, the unconditional ASEP can be obtained by an additional averaging step as\footnote{Unconditional with respect to the elements of $\Xi$, however, we are still conditioning on $r_0$.}

\scriptsize
\begin{align} \label{ASEP_cond11}
&\bar{\mathcal{S}}(r_0)= \displaystyle {w_{1} }\mathbb{E}\left\{ \text{erfc}\left( \sqrt{\beta_1  \bar{\Upsilon}(r_0 \vert \Xi)}\right) \right\} + \displaystyle {w_{2} } \mathbb{E}\left\{\text{erfc}^2\left( \sqrt{\beta_2  \bar{\Upsilon}(r_0 \vert \Xi)}\right)\right\} \notag \\
&= \displaystyle {w_{1} }\mathbb{E}\left\{ \text{erfc}\left( \sqrt{\frac{|h_0|^2}{\frac{N_0 r_0^\eta}{P \beta_1 }+ \frac{\zeta}{\beta_1 }}} \right) \right\} + \displaystyle {w_{2} } \mathbb{E}\left\{\text{erfc}^2\left( \sqrt{\frac{|h_0|^2}{\frac{N_0 r_0^\eta}{P \beta_2 }+ \frac{\zeta}{\beta_2 }}} \right)\right\} \notag \\
&\overset{(a)}{=} \sum_{c=1}^2 w_c \left(1 - \frac{c}{\sqrt{\pi}} \int\limits_0^{\infty} \frac{\text{erfc}(\sqrt{z} \mathbbm{1}_{\{c=2\}})}{\sqrt{z}} e^{-z\left( 1+ \frac{N_0 r_0^\eta}{P \beta_c}\right)} \mathcal{L}_{\zeta}\left(  \frac{z}{\beta_c}\right) \mathrm{d} z \right).
\end{align}
\normalsize

\noindent{The equality $(a)$ follows from the lemma proposed in \cite{hamdi_useful_tech}, which is also given in \textbf{Appendix~\ref{app:useful}}.\footnote{Let $Y$ be a Gamma random variable, \cite{hamdi_useful_tech} shows that the expectation in the form of $\mathbb{E}\left[ \text{erfc} \left(\sqrt{{Y}/{X}} \right) \right]$ and $\mathbb{E}\left[ \text{erfc}^2 \left(\sqrt{{Y}/{X}} \right) \right]$ can be computed in terms of the LT of $X$. In our case, $Y=|h_0|^2$ is an exponential distribution which is a special case of gamma distribution.} The LT of $\zeta$ is given in \textbf{Appendix~\ref{app:LT_zeta}}. Substituting the LT of $\zeta$ into \eqref{ASEP_cond11}, the ASEP is characterized via Theorem~\ref{theo:ASEP_EiD} given at the top of the next page.}

\begin{figure*}

\begin{theorem}
\label{theo:ASEP_EiD}
Consider cellular network  modeled via a PPP with intensity $\lambda$ in Rayleigh fading environment with universal frequency reuse and no intra-cell interference. Then, the downlink ASEP with $M$-QAM modulated signals for a user located at the distance $r_0$ away from his serving BS, is expressed as

\small
\begin{align} \label{eq:ASEP_exact_r0}
\bar{\mathcal{S}}(r_0)
&= \sum_{c=1}^2 w_c \left( 1 - \frac{c}{\sqrt{\pi}}  \int\limits_0^{\infty} \frac{ \text{erfc}(\sqrt{z} \mathbbm{1}_{c=2}) }{\sqrt{z}} \exp\left\{-z\left( 1+ \frac{N_0 r_0^\eta}{P \beta_c}\right) - \pi\lambda r_0^{2}  \left( \frac{1}{M} \sum_{m=1}^{M}  {}_{1}F_1\left( -\frac{2}{\eta}; 1-\frac{2}{\eta}; - \frac{z |s_m|^2}{\beta_c} \right)-1   \right) \right\}  \mathrm{d} z \right).
\end{align}
\normalsize
\hrulefill

\end{theorem}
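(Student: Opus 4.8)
The plan is to assemble Theorem~\ref{theo:ASEP_EiD} from the pieces already developed in the section, so the proof is essentially a substitution-and-simplification argument rather than anything genuinely new. Starting from \eqref{ASEP_cond11}, equality $(a)$ there already reduces the ASEP to the form
\[
\bar{\mathcal{S}}(r_0)= \sum_{c=1}^2 w_c \left(1 - \frac{c}{\sqrt{\pi}} \int_0^{\infty} \frac{\text{erfc}(\sqrt{z}\, \mathbbm{1}_{\{c=2\}})}{\sqrt{z}}\, e^{-z(1+ N_0 r_0^\eta/(P\beta_c))}\, \mathcal{L}_{\zeta}\!\left(\tfrac{z}{\beta_c}\right) \mathrm{d} z \right),
\]
so the only remaining task is to insert the Laplace transform $\mathcal{L}_\zeta(\cdot)$ and confirm it produces the exponential with the ${}_1F_1$ term that appears in \eqref{eq:ASEP_exact_r0}.

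First I would recall the definition $\zeta=\big(\sum_{q=1}^\infty \mathcal{B}_q \sigma_q^2\big)/(P r_0^{-\eta})$ and compute $\mathcal{L}_\zeta(s)=\mathbb{E}\{e^{-s\zeta}\}$. Using the independence of the $\mathcal{B}_q$ and the given Laplace transform $\mathcal{L}_{\mathcal{B}_q}(s)=e^{(-s)^q}$, the transform factorizes into $\prod_{q=1}^\infty \exp\{(-s\sigma_q^2 r_0^\eta/P)^q\}$, i.e. $\mathcal{L}_\zeta(s)=\exp\{\sum_{q=1}^\infty (-1)^q (s\sigma_q^2 r_0^\eta/P)^q\}$; this is exactly the manipulation already performed implicitly when matching \eqref{eq:equi} to \eqref{eq:pgf3}, and it is the content of Appendix~\ref{app:LT_zeta}. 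Substituting the explicit $\sigma_q^2$ from \eqref{eq:sigma_q} collapses the $q$-th term to $(-1)^q \frac{2\pi\lambda r_0^2 P^{q-q}\cdots}{\ldots}$ — more cleanly, the $\sigma_q^{2q}$ cancels the $1/q$-power structure and one is left with the same power series in $z$ that, via the series representation ${}_1F_1(a;b;x)=\sum_{n\ge0}\frac{(a)_n}{(b)_n}\frac{x^n}{n!}$ used earlier to pass from \eqref{eq:pgf2} to \eqref{eq:pgf3}, sums to $-\pi\lambda r_0^2\big(\frac1M\sum_m {}_1F_1(-\tfrac2\eta;1-\tfrac2\eta;-z|s_m|^2/\beta_c)-1\big)$ after setting $s=z/\beta_c$. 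Plugging this exponent into the integral above and merging it with the pre-existing $-z(1+N_0 r_0^\eta/(P\beta_c))$ in the exponential yields \eqref{eq:ASEP_exact_r0} verbatim.

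The last bookkeeping point is to justify that the argument of the ${}_1F_1$ carries the factor $|s_m|^2/\beta_c$ rather than $|s_m|^2$ alone: this is simply because $\mathcal{L}_\zeta$ is evaluated at $s=z/\beta_c$, and the $\mathbb{E}\{|s|^{2q}\}$ appearing in $\sigma_q^2$ becomes $\frac1M\sum_m |s_m|^{2q}$ under the equi-probable symbol assumption already invoked for \eqref{eq:pgf2}, which is what turns the single confluent hypergeometric function into the average $\frac1M\sum_m {}_1F_1(\cdots)$. I would also note explicitly that the derivation inherits the constraint $\eta>2$ (needed for $\sigma_q^2$ and the underlying CF in \eqref{eq:pgf2} to be well defined) and the Rayleigh-fading assumption on $h_0$, which is what makes $Y=|h_0|^2$ exponential and hence a special case of the Gamma random variable required by the lemma of \cite{hamdi_useful_tech} in Appendix~\ref{app:useful}.

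The main obstacle I anticipate is purely algebraic: carefully tracking the exponents of $P$, $r_0$, $\sigma_q^2$, the $1/\beta_c$ scaling, and the combinatorial factors $(\eta q-2)\,q!$ through the substitution of \eqref{eq:sigma_q}, and then recognizing the resulting power series as the particular ${}_1F_1$ with parameters $(-2/\eta;\,1-2/\eta)$. There is no conceptual difficulty — every tool (PGFL-derived CF \eqref{eq:pgf3}, the EiD representation \eqref{eq_dist}, the conditional SINR \eqref{eq:cond_SINR}, the $\text{erfc}$/$\text{erfc}^2$ expectation lemma) is already in place — so the proof is really a verification that the Laplace transform of $\zeta$ slots into \eqref{ASEP_cond11} and reproduces \eqref{eq:ASEP_exact_r0}.
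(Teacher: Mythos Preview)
Your proposal is correct and follows exactly the paper's approach: the paper simply states that the ASEP is obtained by substituting the Laplace transform of $\zeta$ from Appendix~\ref{app:LT_zeta} into equality $(a)$ of \eqref{ASEP_cond11}, and your write-up spells out precisely that substitution (including the series-to-${}_1F_1$ identification and the $z/\beta_c$ scaling). There is no gap and no meaningful difference from the paper's own argument.
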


\end{figure*}

\subsection{Section Summary}

This section explains the steps for exact ASEP calculation via the EiD approach. The EiD approach is used to express the aggregate interference as a conditional Gaussian random variable and use the available AWGN based ASEP expressions. The EiD approach proceeds as follows: 
\begin{enumerate}
\item \textbf{Interference Characterization:} Use SG to obtain the characteristic function of the aggregate complex interference signal ${i}_{agg}$ in the form of \eqref{eq:pgf3}.
\item \textbf{Gaussian Representation:} Express the interference via the infinite sum $ {{i}_{eq}}= {{\sum_{q=1}^{\infty} \sqrt{\mathcal{B}_{q} }G_{q}}}$ and calculate the variances $\{\sigma_q\}_{q=1}^{\infty}$ that matches the CFs in \eqref{eq:pgf3} and \eqref{eq:equi} to ensure equivalence in distribution.
\item \textbf{Conditional Analysis:} Condition on $\{B_q\}_{q=1}^{\infty}$ and obtain the conditional ASEP via AWGN based expression with the conditional SINR as in \eqref{ASEP_cond10}.
\item \textbf{Deconditioning:} Decondition over the non-Gaussian random variables to obtain ASEP as in \eqref{ASEP_cond11}. 
\end{enumerate}

Although exact, the ASEP expression given in \eqref{eq:ASEP_exact_r0} is quite complex and computationally intensive due to the integral over an exponential function with a sum of hypergeometric functions in the exponent. Furthermore, the complexity of the EiD approach increases for advanced system models with Nakagami-m fading and/or multiple antennas~\cite{eid_Mimo}. Therefore, approximations and more abstract analysis are conducted in the literature to seek simpler and more insightful performance expressions, as will be shown in the next sections.

\section{Gaussian Signaling Approximation} \label{sec:GCB}

The complexity of the EiD approach is due to the fact that it statistically accounts for the transmitted symbol by each interfering source. Abstracting such information highly facilitates the analysis. Instead of assuming that each interfering transmitter maps its data using a distinct constellation,  it can be assumed that each transmitter randomly selects its transmitted symbol from a Gaussian constellation with unit variance.\footnote{Note that if the interfering BSs are coded and operating close to capacity, then the signal transmitted by each is Gaussian~\cite{moe_win_error_2}. However, we are interested in the Gaussian signaling as an approximation for the interfering symbols which are drawn from the distinct constellation $\bf{S}$.} As shown in this section, the Gaussian signaling approximation directly achieves the conditional Gaussian representation for aggregate interference. Hence, the ASEP expressions for AWGN channels are legitimate to be used. Furthermore, the Gaussian signaling approximation circumvents the complexity of the EiD approach without compromising the modeling accuracy. 

In this section, we first validate the Gaussian signaling approximation and show that it does not change the distribution of the aggregate interference. We also show its effect on the interference moments. Then, we show the approximate error probability performance with the Gaussian signaling approximation. 

\subsection{Validation} \label{sec:GCB_valid}


The Gaussian signaling approximation  does not approximate the aggregate interference by a Gaussian random variable. Instead, it assumes that each interferer chooses a symbol $s$ from complex Gaussian distribution such that $\mathbb{E}\{|s|^2\}=1$. Then, the transmitted symbol by each interfering BS $x_i$ experiences the location dependent path-loss $r_i^{-\eta/2}$ and encounters independent random fading $h_i$ before reaching the test receiver. The main idea in the Gaussian signaling approximation is to abstract the {information} carried in the aggregate interference to facilitate the error rate analysis. The baseband signal representation in the Gaussian signaling approximation is similar to \eqref{basic_baseband}, except that $s_k$ has a complex Gaussian distribution with a unit variance. Following the same steps as in \eqref{eq:pgf2} and \eqref{eq:pgf3}, the CF of the approximate aggregate interference $\hat{{i}}_{agg}$ is obtained as

\scriptsize
{\begin{align}
\varphi_{\hat{{i}}_{agg}}(\boldsymbol{\omega})   &= \exp \left\{-\frac{\pi \lambda P |\boldsymbol{\omega}|^2}{2(\eta-2) r_0^{\eta-2}}  {_2}F_1 \left(1, 1-\frac{2}{\eta}; 2-\frac{2}{\eta}; - \frac{P \left|\boldsymbol{\omega}\right|^2}{4 r_0^\eta} \right)  \right\}  \notag \\
  &= \exp\left\{{2 \pi \lambda  r_0^{2}} \sum_{k=1}^\infty \frac{(-1)^k \left|\boldsymbol{\omega}\right|^{2k}}{\eta k-2}  \left(\frac{P}{4 r_0^\eta} \right)^{k}  \right\} 
\label{eq:GCB}
\end{align}}
\normalsize

\noindent{Equation} \eqref{eq:GCB} shows that the aggregate interference signal is circularly symmetric, which implies that the distribution and moments of the real and imaginary parts of $\hat{{i}}_{agg}$ are identical. Following the same notation in \eqref{eq:moments1}, we drop the real and imaginary parts, and denote the per dimension $n^{th}$ cumulant as $\kappa_n \left(\hat{{i}}_{agg}\right)= \kappa_n\left(\mathrm{Re}\{\hat{{i}}_{agg}\}\right)  = \kappa_n\left(\mathrm{Im}\{\hat{{i}}_{agg}\}\right) $. Using this notation, the cumulants of $\hat{{i}}_{agg}$ are given by

\begin{align} \label{eq:moments2}
\kappa_n(\hat{i}_{agg}) &=\left\{\begin{matrix}
0, &  \text{$n$ is odd} \\ 
{}&\\
 \frac{ \pi \lambda P^{\frac{n}{2}} n! }{2^{n-1} (\frac{\eta n}{2}-2)} r_0^{2-\frac{n \eta}{2}},    &   \text{$n$ is even.} 
\end{matrix}\right.
\end{align}
Further, the moments can be obtained as in \eqref{eq:moments_k} and the aggregate interference power can be expressed as
\begin{align}
\mathbb{E}\left\{ \hat{{i}}_{agg}  \hat{{i}}_{agg}^* \right\} = \frac{{ 2 \pi \lambda P r_0^{2-{\eta}} }}{ {\eta}-2}.     
\label{eq:power_GCB}
\end{align}

Comparing \eqref{eq:GCB} with \eqref{eq:pgf3}, it can be observed that both CFs have equivalent forms but with slightly different parameters, which confirms that the Gaussian signaling approximation maintains the same distribution for the aggregate interference.\footnote{Note that \eqref{eq:GCB} is related to \eqref{eq:pgf3} by substituting for $ \mathbb{E}\left\{\right|s|^{2n}\}=n!$, which is the case when $s\sim\mathcal{C}(0,1)$.} Also, comparing \eqref{eq:power_GCB} with \eqref{eq:power}, it can be observed that both $\hat{{i}}_{agg}$ and ${{ i}}_{agg}$ have equivalent powers. Hence, all the characteristics described for  ${{i}}_{agg}$ in Section~\ref{sec:PPP} hold for $\hat{{i}}_{agg}$. Fig.~\ref{i_agg_Approx} compares the PDF of $\hat{{ i}}_{agg}$ with the PDF of ${{ i}_{agg}}$. The figure shows that the PDF $\hat{{i}}_{agg}$ matches that of  ${{i}_{agg}}$ with high accuracy. Comparing \eqref{eq:moments1} with \eqref{eq:moments2}, it can be observed that difference between ${ i}_{agg}$ and $\hat{{i}}_{agg}$ exists only in even cumulants with orders higher than two, as highlighted in Table~\ref{tab:moments_comp}. {Our numerical results in Section~\ref{num_res_ref} (e.g., see Fig. \ref{fig:ASER_DL_r0}) show that such differences have minor effect on the SINR-dependent performance metrics such as the ASEP.} 

\begin{table}[!ht]
\caption{\: Per-dimension Cumulant Comparison for $\eta=4$.}
\center
\begin{tabular}{|l| l| l| l|}
\hline
\rowcolor[HTML]{C0C0C0}
\textbf{Cumulants} & \textbf{4-QAM}  &  \textbf{16-QAM}  & \textbf{Gaussian }  \\ \hline
& & & \\
$\kappa_2$             &  $\frac{\pi \lambda P}{2 r_0^2}$     &  $\frac{\pi \lambda P}{2 r_0^2}$      &   $\frac{\pi \lambda P}{2 r_0^2}$         \\        
& & & \\ \hline
& & & \\
$\kappa_4$             &  $ \frac{0.25 \pi \lambda P^2}{  r_0^6}$  &    $  \frac{0.33 \pi \lambda P^2 }{  r_0^6}$               &   $\frac{0.5 \pi \lambda P^2}{  r_0^6}$          \\       
& & & \\ \hline
& & & \\
$\kappa_6$                &  $ \frac{ 0.375 \pi \lambda P^3}{  r_0^{10}}$   & $ \frac{0.735 \pi \lambda P^3}{ r_0^{10}}$         &  $\frac{ 2.25 \pi \lambda P^3}{ r_0^{10}}$       \\          
& & & \\ \hline
& & & \\
Kurtosis  $\left(\frac{\kappa_4}{\kappa_2^2}\right)$            &  $\frac{1}{\pi \lambda r_0^2}$      & $\frac{1.32}{ \pi \lambda r_0^2}$      &   $\frac{2}{\pi \lambda r_0^2}$       \\          
& & & \\ \hline
\end{tabular}
\vspace{3mm}
\label{tab:moments_comp}
\end{table}

\begin{figure}[t]
	\begin{center}
	  \label{A_r150}\includegraphics[width=2.50in]{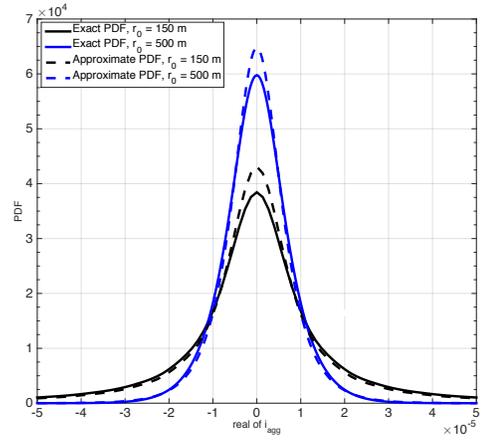}
	\end{center}
	\caption{ The PDF of  ${i}_{agg}$ obtained by numerically inverting \eqref{eq:pgf3} and \eqref{eq:GCB} at $\lambda=1$ BS/km$^2$, $P=10$ W and $\eta=4$ for $r_0=150$ m and $r_0=500$ m .}
	\label{i_agg_Approx}
\end{figure}

\subsection{Approximate Error Probability Analysis} \label{num_res_ref}

\normalsize
The Gaussian signaling assumption highly simplifies the analysis steps and reduces the computational complexity for the error probability expression. The main idea is to circumvent the complexity of the EiD approach by abstracting unnecessary system details (i.e., the interferers' transmitted symbols)~\cite{Laila_letter}. To visualize the conditional Gaussian representation of the aggregate interference, we rewrite the baseband signal at the test receiver \eqref{basic_baseband} with the Gaussian signaling as

\small
{\begin{align}
y_0 &\approx \sqrt{P} s_{0} h_{0} r_0^{-\frac{\eta}{2}} + \underset{\hat{{i}}_{agg}}{\underbrace{ \sum_{r_k \in \tilde{\Psi} \setminus r_0} \!\!\sqrt{P} \tilde{s}_{k} h_{k} r_k^{-\frac{\eta}{2}}}} +n, 
\label{basic_baseband_gauss} 
\end{align}}
\normalsize

\noindent{where} $s_0$ is the useful symbol that is randomly drawn form the constellation $\bf{S}$, and $\tilde{s}_k$ is an interfering symbol randomly drawn from a Gaussian constellation. Due to the Gaussian signaling assumption, conditioning on the network geometry (i.e., $r_k \in \tilde{\Psi}$, $\forall k$), channel gains (i.e., $h_0$ and $h_k$, $\forall k$), the received signal $y_0$ is conditional Gaussian. Particularly, the  conditional aggregate interference $\hat{{i}}_{agg} \sim \mathcal{CN}(0, \mathcal{I}_{agg})$ has a complex Gaussian distribution with total variance of ($\mathcal{I}_{agg} = {\sum}_{r_k \in \tilde{\Psi}\setminus r_0} P |h_k|^2 r_k^{-\eta}$). Hence, approximating the interfering symbols with Gaussian signals directly achieves the conditional Gaussian representation of the aggregate interference and renders the AWGN based ASEP expressions legitimate to be used. The SINR in \eqref{eq:cond_SINR}, with the Gaussian signaling approximation, can be expressed as

\small
{\begin{align}
\!\! \bar{\Upsilon}(r_0 \vert h_0, \mathcal{I}_{agg}) & =  \frac{\EXs{s_0}{\mathbb{E} \left\{ y_0 \right\} \mathbb{E} \left\{ y_0^*\right\} } }{ \mathbb{E} \left\{ y_0 y_0^*\right\}  - \mathbb{E} \left\{ y_0\right\} \mathbb{E} \left\{ y_0^*\right\}}  \notag \\
&= \frac{P |h_0|^2 \mathbb{E}\{|s_0|^2\} r_0^{-\eta}}{ \underset{r_k \in \tilde{\Psi}\setminus r_0}{\sum} P |h_k|^2 r_k^{-\eta} + N_0} \notag \\
&= \frac{P |h_0|^2 r_0^{-\eta}}{ \mathcal{I}_{agg} + N_0},
\label{eq:sinr_GCB}
\end{align}}
\normalsize 

\noindent{Similar} to the EiD case in \eqref{ASEP_cond11}, the unconditional ASEP in the Gaussian signaling approximation is expressed as\footnote{The ASEP is unconditional because  the expectation in \eqref{eq:ASEP_GCB1} w.r.t. $\mathcal{I}_{agg}$ and $h_0$, however, the expressions is still for a given $r_0$.}

\scriptsize
\begin{align} \label{eq:ASEP_GCB1}
&\bar{\mathcal{S}}(r_0) = \displaystyle {w_{1} }\mathbb{E}\left\{ \text{erfc}\left( \sqrt{\beta_1  \bar{\Upsilon}(r_0 \vert h_0, \mathcal{I}_{agg})}\right) \right\} \notag \\
& \quad \quad \quad \quad \quad \quad \quad \quad \quad \quad \quad  \quad \quad \quad \quad \quad \quad + \displaystyle {w_{2} } \mathbb{E}\left\{\text{erfc}^2\left( \sqrt{\beta_2  \bar{\Upsilon}(r_0 \vert h_0, \mathcal{I}_{agg})}\right)\right\} \notag \\
&\overset{(a)}{=} \sum_{c=1}^2 w_c \left(1 - \frac{2}{\sqrt{\pi}} \int\limits_0^{\infty} \frac{\text{erfc}(\sqrt{z} \mathbbm{1}_{\{c=2\}})}{\sqrt{z}} e^{-z\left( 1+ \frac{N_0 r_0^\eta}{P \beta_c}\right)} \mathcal{L}_{\mathcal{I}_{agg} }\left(  \frac{r_0^\eta}{P \beta_c}\right) \mathrm{d} z \right) 
\end{align}
\normalsize

\noindent{where}, similar to \eqref{ASEP_cond11}, $(a)$ follows from the lemma proposed in \cite{hamdi_useful_tech}, which is given in \textbf{Appendix~\ref{app:useful}}. The ASEP in \eqref{eq:ASEP_GCB1} requires the LT of $\mathcal{I}_{agg}$, which is characterized in the following lemma

\begin{lemma} \label{LT_I_power}
The  LT of the aggregate inter-cell interference in one-tier cellular network modeled via a PPP with constant transmit power $P$, intensity $\lambda$, Rayleigh fading, and nearest BS association is given by
\small
\begin{align}
\mathcal{L}_{\mathcal{I}_{agg}}(z)   &{=}  \exp \left\{-\frac{2 \pi \lambda z P r_0^{2-\eta}}{\eta-2} {}_{2}F_1\left(1, 1-\frac{2}{\eta}; 2- \frac{2}{\eta}; - \frac{zP}{r_0^{\eta}}\right)  \right\} 
\label{eq:LT_I_power}
\end{align}
\normalsize
\end{lemma}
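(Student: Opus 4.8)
The plan is to derive the Laplace transform of $\mathcal{I}_{agg} = \sum_{r_k \in \tilde{\Psi}\setminus r_0} P |h_k|^2 r_k^{-\eta}$ directly from the PGFL of a homogeneous PPP, in exactly the same spirit as the derivation of \eqref{eq:pgf2} but replacing the characteristic-function kernel by the exponential kernel $e^{-z(\cdot)}$. First I would write
\begin{align*}
\mathcal{L}_{\mathcal{I}_{agg}}(z) &= \EXs{\tilde{\Psi}\setminus r_0}{\prod_{r_k \in \tilde{\Psi}\setminus r_0} \EXs{h_k}{e^{-z P |h_k|^2 r_k^{-\eta}}}},
\end{align*}
where the inner expectation factorizes because the $h_k$ are i.i.d.\ and independent of the point process. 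Since $|h_k|^2$ is exponential with unit mean (Rayleigh fading), the inner expectation evaluates in closed form to $\frac{1}{1 + z P r^{-\eta}}$. Then I would apply the PGFL of the PPP (equation \eqref{PPP_PGFL} in Appendix~A), with the inner interference boundary $r_0$ coming from the nearest-BS association, to obtain
\begin{align*}
\mathcal{L}_{\mathcal{I}_{agg}}(z) &= \exp\left\{-2\pi\lambda \int_{r_0}^{\infty}\left(1 - \frac{1}{1+zPr^{-\eta}}\right) r\,\mathrm{d}r\right\} = \exp\left\{-2\pi\lambda \int_{r_0}^{\infty}\frac{zPr^{-\eta}}{1+zPr^{-\eta}} r\,\mathrm{d}r\right\}.
\end{align*}

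The remaining task is to evaluate the integral $\int_{r_0}^{\infty}\frac{zP\, r^{1-\eta}}{1+zP r^{-\eta}}\,\mathrm{d}r$ and identify it with the Gauss hypergeometric expression in \eqref{eq:LT_I_power}. I would handle this by a substitution that turns the integrand into a standard Euler-type integral representation of ${}_2F_1$: either substitute $u = r_0^{\eta}/r^{\eta}$ (so that the lower limit $r = r_0$ maps to $u = 1$ and $r \to \infty$ maps to $u = 0$), or substitute $t = (r/r_0)^{-\eta}$ and recognize the resulting $\int_0^1 t^{\,2/\eta - 1}(1 + (zP/r_0^{\eta})t)^{-1}\mathrm{d}t$ as $\frac{\eta}{2}\,\cdot\,{}_2F_1\!\left(1,\tfrac{2}{\eta}; 1+\tfrac{2}{\eta}; -\tfrac{zP}{r_0^{\eta}}\right)$ via the integral representation ${}_2F_1(a,b;c;x) = \frac{\Gamma(c)}{\Gamma(b)\Gamma(c-b)}\int_0^1 t^{b-1}(1-t)^{c-b-1}(1-xt)^{-a}\,\mathrm{d}t$. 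A contiguous-function / Euler transformation identity (or simply matching the power-series expansions term by term) then converts between the ${}_2F_1\!\left(1,\tfrac{2}{\eta};1+\tfrac{2}{\eta};\cdot\right)$ form and the ${}_2F_1\!\left(1,1-\tfrac{2}{\eta};2-\tfrac{2}{\eta};\cdot\right)$ form that appears in the statement, together with the prefactor bookkeeping that produces $\frac{2\pi\lambda z P r_0^{2-\eta}}{\eta-2}$.

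The main obstacle is purely the hypergeometric identification: getting the parameters of ${}_2F_1$ and the multiplicative constant exactly right, and confirming convergence of the integral, which requires $\eta > 2$ (the same condition noted after \eqref{eq:pgf2} ensuring finite interference power). One clean cross-check I would use is to expand $\frac{zPr^{-\eta}}{1+zPr^{-\eta}} = \sum_{k\ge1}(-1)^{k+1}(zP)^k r^{-\eta k}$, integrate term by term to get $2\pi\lambda\sum_{k\ge1}\frac{(-1)^{k+1}(zP)^k r_0^{2-\eta k}}{\eta k - 2}$, and verify this matches the series form of the right-hand side of \eqref{eq:LT_I_power}; this also dovetails with the second (series) form of \eqref{eq:GCB}, since under Gaussian signaling the conditional-on-geometry interference variance is precisely $\mathcal{I}_{agg}$, so $\mathcal{L}_{\mathcal{I}_{agg}}$ evaluated along the appropriate argument must reproduce $\varphi_{\hat{i}_{agg}}$ — a consistency check that essentially certifies the result. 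Everything else is routine: linearity, independence, the Rayleigh-fading moment-generating function, and the PGFL.
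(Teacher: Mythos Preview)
Your proposal is correct and follows essentially the same route as the paper's proof in Appendix~\ref{LT_I_power_proof}: factorize via independence, evaluate the Rayleigh-fading inner expectation as $(1+zPr^{-\eta})^{-1}$, apply the PPP PGFL with inner boundary $r_0$, and reduce the remaining radial integral to a ${}_2F_1$. The only cosmetic difference is the change of variable used in the last step --- the paper substitutes $x = r/(zP)^{1/\eta}$ and identifies $\int_{r_0/(zP)^{1/\eta}}^{\infty} \frac{x}{x^\eta+1}\,\mathrm{d}x$ directly with the stated ${}_2F_1$ form, whereas you go via $t=(r/r_0)^{-\eta}$ and an Euler-integral/contiguous-relation argument; your additional series cross-check against \eqref{eq:GCB} is a nice consistency verification that the paper omits.
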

\begin{proof}
See \textbf{Appendix~\ref{LT_I_power_proof}}.
\end{proof}
\begin{remark}
 A special case of Lemma~\ref{LT_I_power} is for $\eta=4$, which is a common practical value for path-loss exponent in outdoor urban environments. In this case, unlike the EiD approach ASEP in \eqref{eq:ASEP_exact_r0}, the LT of $\mathcal{I}_{agg}$ expression reduces from the Gauss hypergeometric  function $_{2}F_1(.,.;.;.)$  to the elementary inverse tangent function as
\begin{equation}
\mathcal{L}_{\mathcal{I}_{agg}}(z) \overset{(\eta=4)}{=}  \exp \left\{-\pi \lambda \sqrt{z P} \arctan\left( \frac{\sqrt{zP}}{r_0^{2}}\right)  \right\}
\label{eq:LT_I_power_sc}
\end{equation}
\end{remark}

Accordingly, the ASEP for the downlink communication links is provided by Theorem \ref{theo:ASEP_GCB}, given at the top of the next page, which is obtained by plugging \eqref{eq:LT_I_power} and \eqref{eq:LT_I_power_sc} into \eqref{eq:ASEP_GCB1}.

\begin{figure*}
\begin{theorem}
\label{theo:ASEP_GCB}
Consider cellular network modeled via a PPP with intensity $\lambda$ in Rayleigh fading environment with universal frequency reuse and no intra-cell interference. Then, the downlink ASEP, with $M$-QAM modulated useful signal and Gaussian interfering signals, for a user located at the distance $r_0$ away from his serving BS, is expressed as
\small
\begin{align}
\bar{\mathcal{S}}(r_0) &= \sum_{c=1}^2 w_c \left( 1 - \frac{c}{\sqrt{\pi}}  \int\limits_0^{\infty} \frac{ \text{erfc}(\sqrt{z} \mathbbm{1}_{c=2}) }{\sqrt{z}} \exp\left\{-z\left( 1+ \frac{N_0 r_0^\eta}{P \beta_c}\right) -\frac{2 \pi \lambda z r_0^{2}}{\beta_c (\eta-2)} {}_{2}F_1\left(1, 1-\frac{2}{\eta}; 2- \frac{2}{\eta}; - \frac{z}{\beta_c} \right)  \right\}  \mathrm{d} z \right)
\label{eq:ASEP_GCB0}  \\
&\overset{\eta=4}{=}\sum_{c=1}^2 w_c \left( 1 - \frac{c}{\sqrt{\pi}}  \int\limits_0^{\infty} \frac{ \text{erfc}(\sqrt{z} \mathbbm{1}_{c=2}) }{\sqrt{z}} \exp\left\{-z\left( 1+ \frac{N_0 r_0^4}{P \beta_c}\right) -\pi \lambda  r_0^2 \sqrt{\frac{z}{\beta_c}} \arctan\left( \sqrt{\frac{z}{\beta_c}}\right) \right\}  \mathrm{d} z \right).
\label{eq:ASEP_GCB2}
\end{align} 
\normalsize
\hrulefill
\end{theorem}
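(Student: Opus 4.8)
The plan is to prove Theorem~\ref{theo:ASEP_GCB} by a direct substitution, since the work has already been localized to two ingredients established earlier: the deconditioned ASEP formula \eqref{eq:ASEP_GCB1} and the closed‑form Laplace transform of $\mathcal{I}_{agg}$ in Lemma~\ref{LT_I_power}. Recall that under Gaussian signaling the baseband signal \eqref{basic_baseband_gauss} is conditionally Gaussian given the geometry and the fading gains, so treating $\hat{i}_{agg}+n$ as noise makes the AWGN expression \eqref{SNR_ASEP} legitimate with the conditional SINR \eqref{eq:sinr_GCB}; writing $\beta_c\bar{\Upsilon}(r_0\mid h_0,\mathcal{I}_{agg}) = |h_0|^2/X$ with $X = \frac{r_0^\eta}{P\beta_c}\big(\mathcal{I}_{agg}+N_0\big)$ and applying the Hamdi lemma of Appendix~\ref{app:useful} with the exponential variable $Y=|h_0|^2$ turns each term $\mathbb{E}\!\left[\mathrm{erfc}^{\,c}(\sqrt{\beta_c\bar{\Upsilon}})\right]$ into the integral of $e^{-z}/\sqrt{z}$ against $\mathcal{L}_X(z) = e^{-zN_0 r_0^\eta/(P\beta_c)}\,\mathcal{L}_{\mathcal{I}_{agg}}\!\big(z\,r_0^\eta/(P\beta_c)\big)$, with prefactor $1/\sqrt{\pi}$ for $c=1$ and $2/\sqrt{\pi}$ for $c=2$ (this is the content of \eqref{eq:ASEP_GCB1} once $e^{-z}$ and $e^{-zN_0 r_0^\eta/(P\beta_c)}$ are combined, the factor $c$ in the theorem encoding the two prefactors). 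So the only remaining step is to insert Lemma~\ref{LT_I_power} at the argument $z\,r_0^\eta/(P\beta_c)$.

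Second, I would carry out that substitution, which is purely algebraic. Setting $s = z\,r_0^\eta/(P\beta_c)$ in \eqref{eq:LT_I_power}, the prefactor collapses because $sP r_0^{2-\eta} = z r_0^2/\beta_c$, so $\frac{2\pi\lambda s P r_0^{2-\eta}}{\eta-2} = \frac{2\pi\lambda z r_0^2}{\beta_c(\eta-2)}$, and the Gauss hypergeometric argument becomes $-sP/r_0^\eta = -z/\beta_c$. Hence all dependence on $P$ and all but a quadratic dependence on $r_0$ cancel inside the exponent, giving exactly $\exp\!\big\{-\frac{2\pi\lambda z r_0^2}{\beta_c(\eta-2)}\,{}_2F_1(1,1-\tfrac{2}{\eta};2-\tfrac{2}{\eta};-\tfrac{z}{\beta_c})\big\}$; multiplying by the remaining deterministic factor $\exp\{-z(1+N_0 r_0^\eta/(P\beta_c))\}$ and summing over $c=1,2$ reproduces \eqref{eq:ASEP_GCB0}.

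Third, for $\eta=4$ I would do the analogous substitution using the reduced form \eqref{eq:LT_I_power_sc}: with $s = z r_0^4/(P\beta_c)$ one has $\sqrt{sP} = r_0^2\sqrt{z/\beta_c}$ and $\sqrt{sP}/r_0^2 = \sqrt{z/\beta_c}$, so $\mathcal{L}_{\mathcal{I}_{agg}}\!\big(z r_0^4/(P\beta_c)\big) = \exp\!\big\{-\pi\lambda r_0^2\sqrt{z/\beta_c}\,\arctan\sqrt{z/\beta_c}\big\}$, which is the exponent in \eqref{eq:ASEP_GCB2}. Equivalently, I would just set $\eta=4$ in \eqref{eq:ASEP_GCB0} and invoke the identity ${}_2F_1(1,\tfrac12;\tfrac32;-u) = \arctan(\sqrt{u})/\sqrt{u}$ with $u=z/\beta_c$, noting $\frac{2\pi\lambda z r_0^2}{2\beta_c}\cdot\frac{\arctan\sqrt{z/\beta_c}}{\sqrt{z/\beta_c}} = \pi\lambda r_0^2\sqrt{z/\beta_c}\,\arctan\sqrt{z/\beta_c}$; this ${}_2F_1$ collapse is the content of the Remark preceding the theorem and follows from the Euler integral representation of ${}_2F_1$.

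There is no genuinely hard step here once Lemma~\ref{LT_I_power} and the Hamdi lemma are granted — the substantive content is pushed into those two ingredients. If I had to flag a delicate point, it is the bookkeeping of the Laplace‑transform argument: one must ensure the scaling $r_0^\eta/(P\beta_c)$ inherited from the conditional SINR is composed with the Hamdi dummy variable $z$ inside $\mathcal{L}_{\mathcal{I}_{agg}}(\cdot)$, and that the additive $N_0$ splits off cleanly as the deterministic factor $e^{-zN_0 r_0^\eta/(P\beta_c)}$; an exponent slip there would spoil the cancellations that produce the clean forms \eqref{eq:ASEP_GCB0}–\eqref{eq:ASEP_GCB2}. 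Establishing Lemma~\ref{LT_I_power} itself (deferred to Appendix~\ref{LT_I_power_proof}) is where the PPP machinery is actually used: applying the PGFL/Campbell steps of Section~\ref{sec:PPP} with the exponential fading marginal $\mathcal{L}_{|h|^2}(t)=1/(1+t)$ and evaluating $2\pi\lambda\int_{r_0}^{\infty}\big(1-\mathcal{L}_{|h|^2}(zPr^{-\eta})\big)\,r\,\mathrm{d}r$ in closed form (e.g.\ via the change of variables $v = zP/r^\eta$) to recover the Gauss hypergeometric function, and then specializing to $\eta=4$ for \eqref{eq:LT_I_power_sc}.
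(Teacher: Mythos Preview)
Your proposal is correct and follows exactly the route the paper takes: the paper states that Theorem~\ref{theo:ASEP_GCB} ``is obtained by plugging \eqref{eq:LT_I_power} and \eqref{eq:LT_I_power_sc} into \eqref{eq:ASEP_GCB1},'' and you have simply spelled out that substitution and the attendant algebraic cancellations in more detail than the paper does. Your bookkeeping of the Laplace-transform argument $z r_0^\eta/(P\beta_c)$ and the ${}_2F_1(1,\tfrac12;\tfrac32;-u)=\arctan(\sqrt{u})/\sqrt{u}$ reduction for $\eta=4$ are both on target.
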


\end{figure*}

Fig. \ref{fig:ASER_DL_r0} compares the ASEP obtained via the EiD approach \eqref{eq:ASEP_exact_r0}, the Gaussian signaling approximation \eqref{eq:ASEP_GCB0}, the Gaussian aggregate interference with variance in \eqref{eq:power}, and Monte Carlo simulation for different BSs intensities. {The gap between the Gaussian aggregate interference approximation and the exact analysis (i.e., EiD) confirms that the central limit theorem for the aggregate interference does not apply. Hence, assuming Gaussian aggregate interference results in a loose estimate for the ASEP. On the other hand, the close match between the Gaussian signalling apperoximation and the exact analysis validates the Gaussian signaling approximation and shows that it accurately captures the ASEP.} The figure also shows that the gap between the Gaussian signaling approximation and the exact analysis  diminishes for higher constellations as discussed before in Section~\ref{sec:GCB_valid} (cf. Table~\ref{tab:moments_comp}). As discussed in Section~\ref{sec:PPP}, the interferers' intensity and interference boundary (see Fig.~\ref{fig:int_bound}) are the two parameters that characterize the interfering PPP. In this regard, the figure manifests the prominent effect of the interferers' intensity and interference boundary on the network performance.

\begin{figure}[t]
	\begin{center}
	 \includegraphics[width=3in]{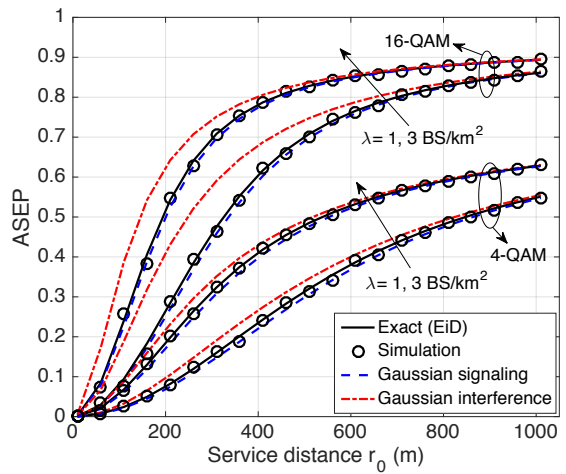}

	\end{center}
	\caption{ ASEP vs the service distance $r_0$ for 4-QAM and 16-QAM constellations. }
	\label{fig:ASER_DL_r0}
\end{figure}

\subsection{Section Summary}

This section motivates the Gaussian signaling approximation for interfering symbols to facilitate the ASEP analysis in cellular networks. We first validate the Gaussian signaling approximation by showing that it preserves the distribution of the aggregate interference signal, provides matching odd and second cumulants, as well as matching interference power for any constellation size. Difference between the exact interference and the interference based on Gaussian signaling only exists for even cumulants with orders higher than two. 

The effect of the Gaussian signaling approximation on the ASEP expression can be observed by comparing \eqref{eq:ASEP_exact_r0} with \eqref{eq:ASEP_GCB0}. One can see that the Gaussian signaling approximation reduces the sum of $M$ hypergeometric functions in the exponent for the constellation size $M$, to a single hypergeometric function exponent. This highly reduces the computational complexity to evaluate the ASEP without sacrificing the ASEP accuracy. Furthermore, for the special case of $\eta=4$ the expression for the ASEP reduces to {a computationaly simple inverse tangent function, which is not the case for the exact EiD. }

{The Gaussian signaling  approximation also facilitates the derivation steps to obtain the ASEP. Particularly, the analysis requires the LT of the aggregate interference power ($\mathcal{I}_{agg}$), which is easier to derive and simpler to evaluate than the CF of the baseband aggregate interference required by the EiD approach.}  Furthermore, the LT of $\mathcal{I}_{agg}$ can be used to compute several other performance metrics. As will be shown in the next section, the Gaussian signaling unifies the computation of the ASEP, outage probability, and ergodic capacity. 


\section{Outage Probability and Ergodic Rate} \label{sec:out}

Error probability expressions provide a tangible characterization of network performance and capture the effect of several system factors. However, as shown in the Section~\ref{sec:exact} and Section~\ref{sec:GCB}, the ASEP expressions are quite involved, even with the Gaussian signaling approximation. Such complicated expression do not directly reveal system design insights. Therefore, to simplify the analysis, several researchers resort to more conceptual analysis relying on quantities such as outage probability and ergodic rate. Such abstracted analysis leads to simple expressions that characterize the network performance, highlight the tradeoffs, and facilitate network design. 

\subsection{Definition of Outage Probability and Ergodic Rate}

For AWGN channels, the maximum rate per unit bandwidth (BW) that can be reliably transmitted, also known as the spectral efficiency, is defined by Shannon's capacity expression given by:
\begin{equation}
C=\log \left(1+ {\rm SNR} \right)
\label{SNR_cap}
\end{equation} 

\noindent where the SNR  in \eqref{SNR_cap} is the instantaneous signal-to-noise ratio. Shannon's capacity formula assumes that the additive noise is Gaussian and that coded trasmission is employed with codewords drawn from a Gaussian codebook. If this expression is extended to include interference, then the interference signal should also be Gaussian. This is the case when the interfering BSs also employ Gaussian codebooks, which is equivalent to the use of Gaussian signaling in Section \ref{sec:GCB}. Similar to \eqref{basic_baseband_gauss}, the baseband aggregate interference signal is Gaussian conditioned on the PPP, which validates lumping the aggregate interference with the noise term. That is, treating interference as noise, the instantaneous SINR ($\Upsilon$) in \eqref{eq:sinr_GCB} is analogous to the SNR in \eqref{SNR_cap} for Gaussian interfering symbols $\tilde{s}_k$ when conditioning on the interfering BSs locations $r_k \in \tilde{\Psi} \setminus r_0$. Therefore, \eqref{SNR_cap} is legitimate to asses the link capacity in the depicted large-scale cellular network. However, an additional averaging step over $\Upsilon$ is required, which leads to the following ergodic rate per unit BW definition

\small
{\begin{align} \label{eq:capacity1}
\mathcal{C} &= \mathbb{E}\left\{ \log(1+ \Upsilon ) \right\} \notag \\
&\overset{a}{=} \int_0^\infty \mathbb{P}\left\{ \log(1+ \Upsilon ) > t\right\} \mathrm{d}t\notag \\
&= \int_0^\infty \mathbb{P}\left\{ \Upsilon > e^t -1 \right\} \mathrm{d}t\notag \\
&= \int_0^\infty \left(1-F_{\Upsilon}\left( e^t -1 \right) \right) \mathrm{d}t\notag \\
&\overset{b}{=} \int_0^\infty \frac{\left(1-F_{\Upsilon}\left(y \right)\right)}{y+1} \mathrm{d}y 
\end{align}}
\normalsize

\noindent where $(a)$ follows because $\log(1+ \Upsilon )$ is a positive random variable, $(b)$ is obtained by change of variables, and $F_\Upsilon(\cdot)$ is the CDF of the SINR ($\Upsilon$). Shannon's capacity expression in \eqref{SNR_cap} can also be used to define the outage probability. Let $R$ be the transmission rate, then the outage probability is defined as the probability that the transmission rate is greater than the channel capacity, given by

\begin{align}
\mathcal{O}(R) &=  \mathbb{P} \left\{ \log \left(1+ \Upsilon \right) < R\right\}  \notag \\
& = \mathbb{P} \left\{ \Upsilon < e^R -1 \right\} 
\label{SNR_out1}
\end{align}

\noindent where $\Upsilon$ denotes the instantaneous SINR (i.e., as in \eqref{eq:sinr_GCB} without conditioning on either $h_0$ or $\mathcal{I}_{agg}$). Hence, the rate outage probability depends on interference and/or fading.

Bit error rate $({\rm BER})$ is another technique to define outage probability. In this case, the outage probability is defined as the probability that the ${\rm BER}$ exceeds a certain threshold $\epsilon$. Exploiting the Gaussian signaling approximation, the BER based outage probability is given by

\begin{align}
\mathcal{O}(\epsilon) &= \mathbb{P} \left\{ {\rm BER} > \epsilon \ \right\} \notag \\
& = \mathbb{P} \left\{ w_1 \text{erfc}\left(\beta_1 \Upsilon \right) > \epsilon \right\}  \notag \\
& = \mathbb{P} \left\{ \Upsilon < \frac{1}{\beta_1} \text{erfc}^{-1}\left(\frac{\epsilon}{w_1}\right) \right\}
\label{SNR_out2}
\end{align}

\noindent where \eqref{SNR_out2} ignores the $\text{erfc}^2(\cdot)$ term of \eqref{SNR_ASEP}.

Most of the SG literature does not discriminate between the two forms of outage probabilities in \eqref{SNR_out1} and \eqref{SNR_out2}. Instead, the outage probability is treated in an abstract manner with a unified abstracted threshold value ($T$), as follows:

\small
{\begin{align} \label{eq:out1}
\mathcal{O}(T) &= \mathbb{P}\left\{ \Upsilon < T \right\} \notag \\
&=F_{\Upsilon}(T)
\end{align}}
\normalsize

\noindent{Equations} \eqref{eq:capacity1} and \eqref{eq:out1} show that the SINR CDF is sufficient to characterize both the outage probability and ergodic rate. The SINR distribution is obtained in the next section. 

\subsection{SINR Distribution}

The SINR CDF is given by

\small
{\begin{align} \label{out_cdf}
F_{\Upsilon}(T) &= \mathbb{P}\left\{ {\Upsilon} < T \right\} \notag \\
 &= \mathbb{P}\left\{  \frac{P |h_0|^2 r_0^{-\eta}}{ \mathcal{I}_{agg} + N_0} < T \right\} \notag \\
 &= \mathbb{P}\left\{  |h_0|^2 < \frac{T ( \mathcal{I}_{agg} + N_0)}{P r_0^{-\eta}} \right\} \notag \\
  &\overset{(a)}{=} \EXs{\mathcal{I}_{agg}}{ F_{ |h_0|^2}\left( \frac{T ( \mathcal{I}_{agg} + N_0)}{P r_0^{-\eta}} \right) }\notag \\
  &\overset{(b)}{=} 1- e^{\frac{-T N_0 r_0^{\eta}}{P }}  \mathcal{L}_{\mathcal{I}_{agg}}\left(   \frac{T  r_0^{\eta}}{P }  \right)  
\end{align}}
\normalsize

\noindent where $(b)$ follows from the exponential distribution of $|h_0|^2$ and the definition of the LT. It is worth highlighting that $(a)$ in \eqref{out_cdf} cannot be always computed. This is because the PDF of the interference power $\mathcal{I}_{agg}$ is not available in closed-form, except for {very special cases which are not of  practical interest for cellular networks} \cite{martin_book, baccelli_vol1, baccelli_vol2, now_martin, now_jeff}.\footnote{{The interference distribution can be only found for special cases of PPP networks in which the interference boundaries (cf. Fig.~\ref{fig:int_bound}) go from $0$ to $\infty$~\cite{sousa1990}, which is not suitable to model cellular networks that enforce an inner interference boundary of $r_0$.}} However, the exponential distribution 
 of $|h_0|^2$ enables expressing the CDF of the SINR in terms of the LT of $\mathcal{I}_{agg}$. The LT of $\mathcal{I}_{agg}$ is given in Lemma~\ref{LT_I_power}, which is used to characterize the ergodic rate and  outage probability in the following theorem

\begin{theorem} \label{outage and rate}
Consider a cellular network modeled via a PPP with intensity $\lambda$ in Rayleigh fading environment with universal frequency reuse and no intra-cell interference. The downlink 
ergodic rate for a user located at the distance $r_0$ away from his serving BS can expressed as

\scriptsize
 {\begin{align}
\mathcal{C}(r_0) &= \int_0^{\infty} \frac{\exp \left\{-\frac{t N_0 r_0^{\eta}}{P}  - \frac{2 \pi \lambda t r_0^{2}}{\eta-2} {}_{2}F_1\left(1, 1-\frac{2}{\eta}; 2- \frac{2}{\eta}; - t \right)  \right\}}{t+1} \mathrm{d}t \notag \\
 &\overset{\eta=4}{=} \int_0^{\infty} \frac{\exp \left\{-\frac{t N_0 r_0^{4}}{P}  -  \pi \lambda \sqrt{t} r_0^{2} \arctan\left(\sqrt{t} \right) \right\}}{t+1} \mathrm{d}t,
\label{eq:abstract_cap}
\end{align}}
\normalsize

\noindent{and} outage probability for a user located at the distance $r_0$ away from his serving BS can expressed as

\scriptsize
 {\begin{align}
\mathcal{O}(r_0,T) &= 1-\exp \left\{-\frac{T N_0 r_0^{\eta}}{P}  - \frac{2 \pi \lambda T r_0^{2}}{\eta-2} {}_{2}F_1\left(1, 1-\frac{2}{\eta}; 2- \frac{2}{\eta}; - T \right)  \right\} \notag \\
 &\overset{\eta=4}{=} 1-\exp \left\{-\frac{T N_0 r_0^{4}}{P}  -  \pi \lambda \sqrt{T} r_0^{2} \arctan\left(\sqrt{T} \right) \right\}.
\label{eq:abstract_out}
\end{align}}
\normalsize
 
\end{theorem}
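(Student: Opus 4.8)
The plan is to assemble the two claimed expressions directly from the SINR distribution already obtained in \eqref{out_cdf} together with the Laplace transform of the aggregate interference power supplied by Lemma~\ref{LT_I_power}. The point of \eqref{out_cdf} is that it reduces the whole problem to evaluating $\mathcal{L}_{\mathcal{I}_{agg}}$ at a single argument, so once Lemma~\ref{LT_I_power} is in hand the remaining work is substitution and algebraic bookkeeping; no new probabilistic input is needed.

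First, for the outage probability I would specialize \eqref{eq:out1} and \eqref{out_cdf}, so that $\mathcal{O}(r_0,T)=F_\Upsilon(T)=1-e^{-TN_0r_0^\eta/P}\,\mathcal{L}_{\mathcal{I}_{agg}}\!\left(Tr_0^\eta/P\right)$. Substituting $z=Tr_0^\eta/P$ into \eqref{eq:LT_I_power}, the prefactor $\frac{2\pi\lambda zP r_0^{2-\eta}}{\eta-2}$ collapses to $\frac{2\pi\lambda Tr_0^2}{\eta-2}$ and the hypergeometric argument $-zP/r_0^\eta$ collapses to $-T$, which gives exactly the first line of \eqref{eq:abstract_out}. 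The $\eta=4$ line then follows by repeating the same substitution in the specialized transform \eqref{eq:LT_I_power_sc} of the Remark, using $\sqrt{zP}=\sqrt{T}\,r_0^2$ so that $\arctan(\sqrt{zP}/r_0^2)=\arctan(\sqrt{T})$.

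Second, for the ergodic rate I would start from \eqref{eq:capacity1}, namely $\mathcal{C}(r_0)=\int_0^\infty\frac{1-F_\Upsilon(y)}{y+1}\,\mathrm{d}y$, and insert the complementary CDF $1-F_\Upsilon(y)=e^{-yN_0r_0^\eta/P}\,\mathcal{L}_{\mathcal{I}_{agg}}(yr_0^\eta/P)$ read off from \eqref{out_cdf}. Performing the same simplification of the argument as above, now with the dummy variable $t$ in the role of $T$, turns the integrand into the exponential appearing in \eqref{eq:abstract_cap}, and the $\eta=4$ reduction is again immediate from \eqref{eq:LT_I_power_sc}. I would remark that \eqref{eq:capacity1} itself is justified by the tail-integral representation of the expectation of the nonnegative random variable $\log(1+\Upsilon)$ followed by the change of variables $y=e^t-1$, both already performed in the text.

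The only point that requires a little care, and the closest thing to an obstacle, is legitimizing the steps in \eqref{out_cdf}: conditioning on $\mathcal{I}_{agg}$ before taking the expectation over $|h_0|^2$ in step $(a)$ uses the independence of the desired-link fading from the interferer locations and fades, and rewriting $\mathbb{E}_{\mathcal{I}_{agg}}\{F_{|h_0|^2}(\cdot)\}$ as $1-e^{\,\cdots}\mathcal{L}_{\mathcal{I}_{agg}}(\cdot)$ in step $(b)$ is a Fubini-type interchange that is valid because $|h_0|^2$ is exponential and the integrand is bounded. Granting those observations, the theorem is a direct corollary of Lemma~\ref{LT_I_power}.
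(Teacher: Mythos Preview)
Your proposal is correct and follows exactly the same approach as the paper: plug the Laplace-transform expressions \eqref{eq:LT_I_power} and \eqref{eq:LT_I_power_sc} into the SINR CDF \eqref{out_cdf}, then read off the outage probability via \eqref{eq:out1} and the ergodic rate via \eqref{eq:capacity1}. Your additional remarks justifying the conditioning and the exponential-CDF step in \eqref{out_cdf} are more careful than the paper itself, which simply states the substitutions without further comment.
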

\begin{proof}
The theorem is obtained by plugging  the LT expressions \eqref{eq:LT_I_power} and \eqref{eq:LT_I_power_sc} into \eqref{out_cdf} to get the SINR CDF, which is then used to compute the ergodic rate and the outage probability as in  \eqref{eq:capacity1} and \eqref{eq:out1}, respectively. 
\end{proof}

\begin{figure*}

	\begin{center}
	  \subfigure[Outage probability for different BSs intensities at $T=1$ and $\eta= 4$.]{\label{fig:out1}\includegraphics[width=2.3in]{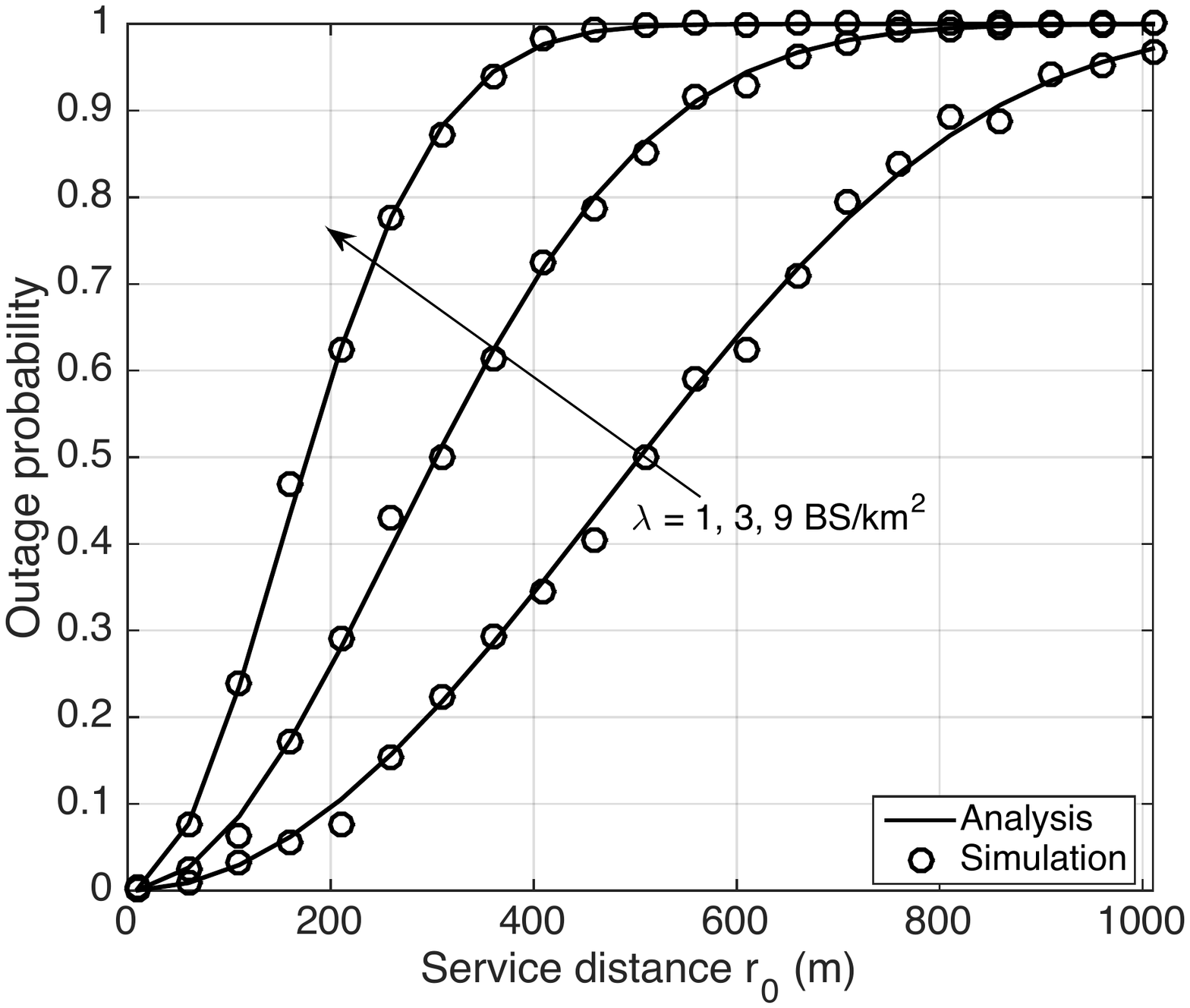}}
	  \subfigure[Outage probability for different SINR thresholds at $\lambda = 3 $ BS/km$^2$ and  $\eta= 4$.]{\label{fig:out2}\includegraphics[width=2.3in]{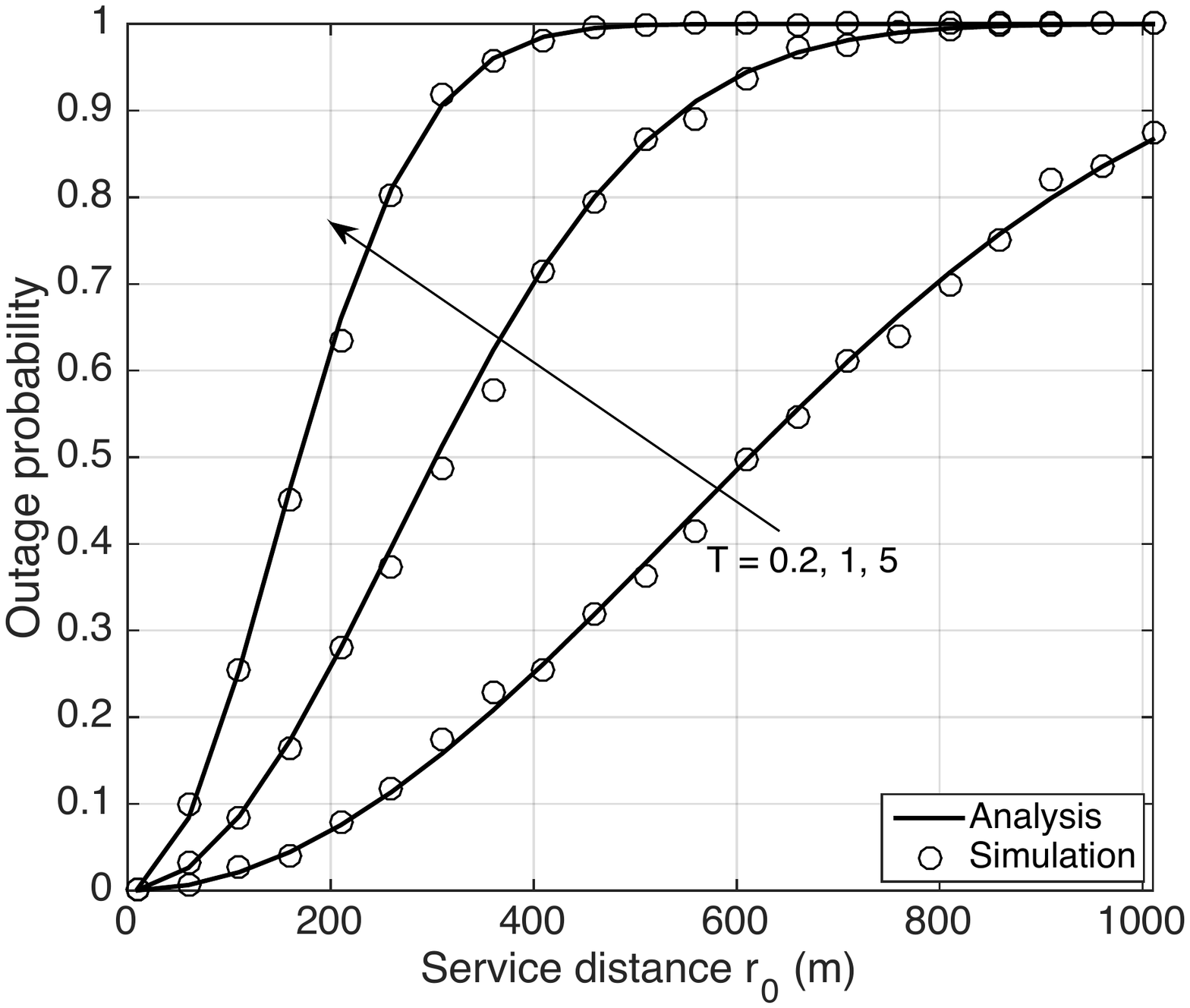}}
		\subfigure[Ergodic rate per unit BW for different BSs intensities at $\eta=4$.]{\label{fig:out3}\includegraphics[width=2.3in]{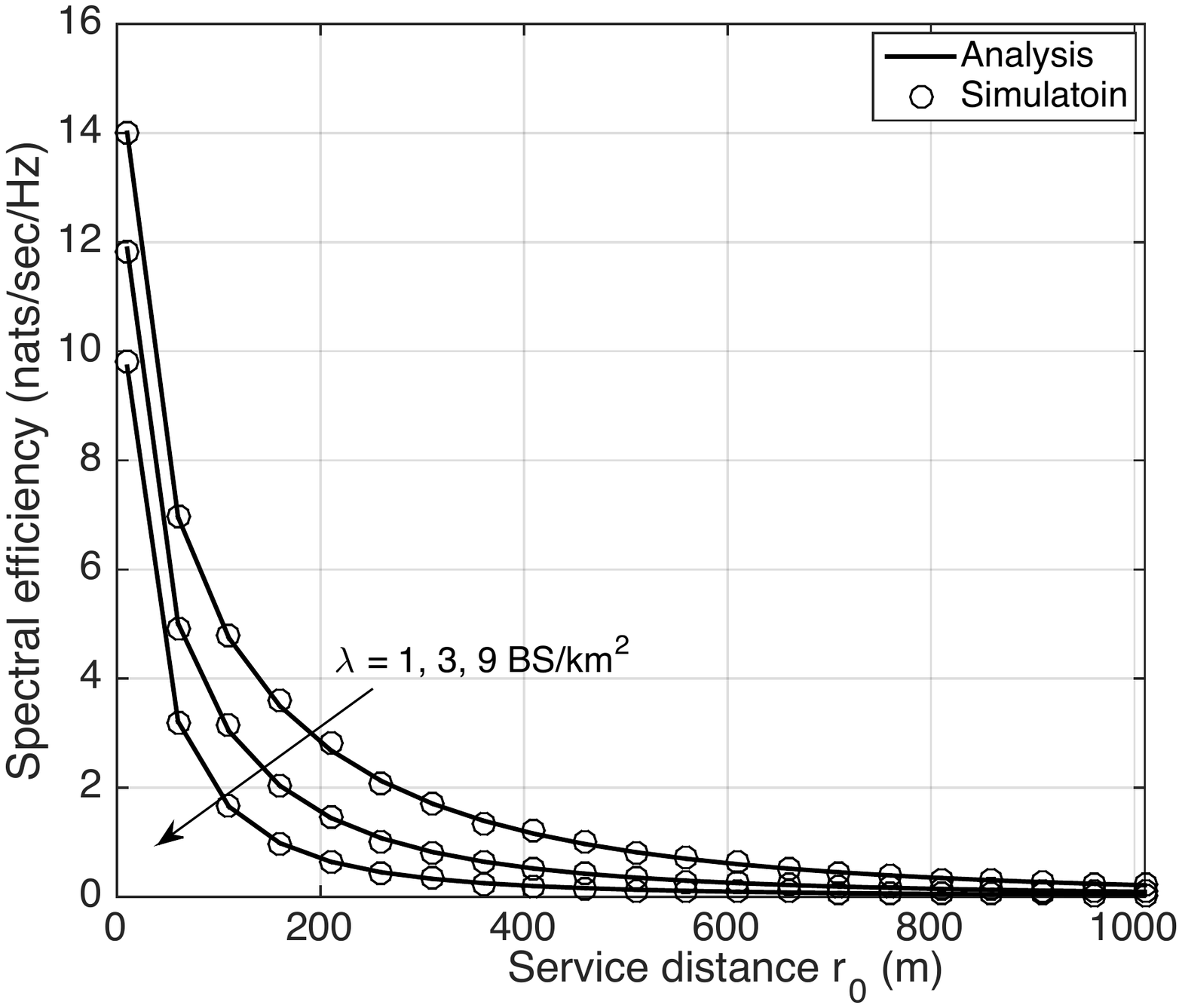}}
	\end{center}
	\caption{ Outage probability and ergodic rate vs the service distance $r_0$.}
	\label{fig:Out_Spec}
\end{figure*}

Fig.~\ref{fig:Out_Spec} validates \eqref{eq:abstract_cap} and \eqref{eq:abstract_out} against Monte Carlo simulation. Similar to Fig.~\ref{fig:ASER_DL_r0}, the results in Fig.~\ref{fig:Out_Spec} show the effect of interferers' intensity and interference boundary on the network performance. Hence, the outage probability and ergodic rate can be used as an alternative and simpler way to characterize the network behavior.\footnote{(i.e., The outage and ergodic rate expressions \eqref{eq:abstract_out} and \eqref{eq:abstract_cap} are simpler than the ASEP expressions \eqref{eq:ASEP_exact_r0} and \eqref{eq:ASEP_GCB0})} However, such simplicity comes at the expense of abstractions that may hide the true network behavior. {As shown in Fig.~\ref{fig:out2} the network performance is a function of the abstracted SINR threshold value, which gives a constellation oblivious performance measure. On the other hand, Fig.~\ref{fig:ASER_DL_r0} clearly shows the true ASEP for each modulation scheme.}

\subsection{Section Summary}

The outage probability and ergodic rate can be defined in terms of the SINR CDF. This may lead to closed-form simple expressions which help to characterize the network performance. 
It is worth mentioning that the Gaussian signaling approximation provides a unified approach to characterize SINR related performance metrics. Thas is, the outage probability, ergodic capacity, and also ASEP under Gaussian signaling approximation require obtaining the LT of the aggregate interference power as in \eqref{eq:LT_I_power}. Then, these quantities are computed by plugging the LT of $\mathcal{I}_{agg}$ into \eqref{eq:abstract_out}, \eqref{eq:abstract_cap}, and \eqref{eq:ASEP_GCB0}, respectively.



\section{Advanced Network Models} \label{sec:advanced}

In this section, we focus on analysis based on Gaussian signaling approximation. Hence, we only show $\mathcal{L}_{\mathcal{I}_{agg}}(\cdot)$ and we neither calculate  $\{\sigma_q^2\}_{q=1}^{\infty}$ nor $\mathcal{L}_{\zeta}(\cdot)$. As shown in the previous sections, the ASEP, outage probability, and ergodic rate expressions are all functions of the LT of the aggregate interference $\mathcal{I}_{agg}$. Therefore, throughout this section, we will show how the LT of the aggregate interference would change for each network model. For the sake of unified and simple presentation for the ASEP, outage probability, and ergodic rate, we focus on the case of $\eta=4$ and negligible noise variance. Note that the methodology of analysis that we apply to this special case can be directly extended for more general cases (i.e., general $\eta$ and with noise), but at the expense of slightly more involved expressions. We will also point out the references that conduct the general analysis in each of our case studies. In case of $\eta=4$ and negligible noise,  \eqref{eq:ASEP_GCB1} and \eqref{out_cdf} reduce to

\footnotesize
\begin{align} \label{eq:ASEP_unified}
&\bar{\mathcal{S}}(r_0) {=} \sum_{c=1}^2 w_c \left(1 - \frac{c}{\sqrt{\pi}} \int\limits_0^{\infty} \frac{e^{-z}  \text{erfc}(z \mathbbm{1}_{\{c=2\}})}{\sqrt{z}} \mathcal{L}_{\mathcal{I}_{agg} }\left(  \frac{r_0^4}{P \beta_c}\right) \mathrm{d} z \right) 
\end{align}
\normalsize
and
\small
{\begin{align} \label{eq:out_unified}
F_{\Upsilon}(T) {=} 1-  \mathcal{L}_{\mathcal{I}_{agg}}\left(   \frac{T  r_0^{4}}{P }  \right). 
\end{align}}
\normalsize

\noindent Hence, we focus on the LT of the aggregate interference evaluated at $\frac{a r_0^4}{P}$, where $a = \beta_c^{-1}$ for ASEP evaluation, and $a = T$ for outage probability and ergodic rate evaluation. 

As discussed in Section~\ref{PPP}, as far as the PPP is considered, the interference exclusion region (denoted hereafter as $r_\mathcal{I}$) and the intensity $\lambda$ are the two main parameters that discriminate LT of the interference in different network models. Note that the baseline network model used in the previous sections assumed a single tier cellular network with no interference coordination. Hence, the interference exclusion distance is equivalent to the service distance (i.e., $r_\mathcal{I} = r_0$) and the interferers' transmit powers are equivalent. However, this might not always be the case. In the next sections, we discriminate between the interference exclusion distance $r_\mathcal{I}$ and the service distance $r_0$. We will also discriminate between the interferers' transmit power $P_\mathcal{I}$ and the serving BS transmit power $P_0$. Then, the LT of the interference in \eqref{eq:LT_I_power_sc} can be generalized to

\small
\begin{equation} \label{LT_start1}
 \mathcal{L}_{\mathcal{I}_{agg}}(z,\lambda,r_\mathcal{I}) = \exp\left\{-\pi \lambda \sqrt{zP_\mathcal{I}} \arctan\left({\frac{\sqrt{zP_{\mathcal{I}}}}{r_\mathcal{I}^2}}\right) \right\}.
\end{equation}
\normalsize
Then, substituting $z= \frac{a r_0^4}{P_0}$ into \eqref{LT_start1}, we have

\small
\begin{align} \label{LT_start}
 \mathcal{L}_{\mathcal{I}_{agg}}(a,\lambda, r_0, r_\mathcal{I}) &= \notag \\ 
& \!\!\!\!\!\!\!\!\!\!\!\!\!\!\!\!\!\!\!\!\!\! \exp \left\{-\pi \lambda \sqrt{\frac{a P_\mathcal{I}}{P_0}} r_0^2 \arctan\left( \left(\frac{r_0}{r_\mathcal{I}}\right)^2\sqrt{\frac{aP_{\mathcal{I}}}{P_0}}\right) \right\}.
\end{align}
\normalsize

\noindent Equation \eqref{LT_start} serves as a basis for the analysis in the sequel.

\subsection{Random Link Distance $r_0$} \label{Relaxing_r_0}

A random link distance is an intrinsic property of the baseline  cellular network model. Hence, averaging over the link distance is required to obtain the spatially average performance. Note that the random service distance $r_0$ does not change any of the previous analysis and only adds an additional averaging step over $r_0$. This is because both the aggregate interference and the useful signal power in \eqref{eq:sinr_GCB} depend on the service distance $r_0$. Hence, we first obtain the conditional (i.e., on $r_0$) LT of the aggregate interference as in \eqref{eq:LT_I_power} and then conduct the averaging step over $r_0$. Note that the service distance $r_0$ in \eqref{eq:ASEP_unified} and \eqref{eq:out_unified} appears within the LT of $\mathcal{I}_{agg}$ only, and hence, the averaging step over $r_0$ only affects the LT expression.\footnote{This is not the case if noise is taken into consideration. In the prominent noise case, the averaging step should include the noise term as well as the LT of the interference.} That is, the ASEP and the SINR CDF are given in terms of the spatially averaged LT (i.e., after averaging over $r_0$). It is worth mentioning that in the subsequent case studies, random service distance is always considered and the spatially averaged LT is calculated. 

For cellular networks modeled via PPPs and employing nearest BS association, the distribution of $r_0$ is given in \eqref{pdf_r_0}. By averaging over $r_0$, the LT is given by

\small
 {\begin{align}
\mathcal{L}_{\mathcal{I}_{agg}}\left(a, \lambda \right) & = \int\limits_0^\infty 2 \pi \lambda r  e^{-\pi \lambda r^2 } \mathcal{L}_{\mathcal{I}_{agg}}\left(a,\lambda,r,r\right) \mathrm{d}r \notag \\
& = \int\limits_0^\infty 2 \pi \lambda r\exp\left\{-\pi \lambda r^2 \left(\sqrt{a}  \arctan\left({\sqrt{a}}\right)+1 \right) \right\}\mathrm{d}r  \notag \\
& = \frac{1}{\sqrt{a} \arctan\left(\sqrt{a} \right) + 1}
\label{eq:LT_random_ro}
\end{align}}
\normalsize

The ASEP and the SINR CDF are obtained by substituting \eqref{eq:LT_random_ro} into \eqref{eq:ASEP_unified} with $a=\beta^{-1}$ and into \eqref{eq:out_unified} with $a=T$, respectively. Fig.~\ref{fig:Out_downlink} validates \eqref{eq:LT_random_ro} via Monte Carlo simulation for the outage probability (i.e., $a=T$). The simplicity of \eqref{eq:LT_random_ro} reveals several insights into the performance of the cellular network. For instance, under a noise-limited operation, the ASEP depends only on the modulation scheme parameters $w_c$ and $\beta_c$. Hence, the outage probability is only a function of the threshold value $T$ and the ergodic rate is constant. This is different from the results shown in Fig.~\ref{fig:ASER_DL_r0} and Fig.~\ref{fig:Out_Spec} which show that the downlink performance depends on service distance $r_0$ and the BS intensity $\lambda$. This is because in Fig.~\ref{fig:ASER_DL_r0} and Fig.~\ref{fig:Out_Spec} we conditioned on $r_0$, and hence, the service distance $r_0$ does not adapt to the intensity $\lambda$. In reality, increasing the BSs intensity  would imply shorter service distance. Hence, the effect of increasing the number of interferers is balanced via a shorter service distance and results in constant SINR-dependent performance metrics \cite{tractable_app} \cite{k_tier}. A thorough discussion for this case study can be found in \cite{tractable_app}.

\subsection{Load-Aware Modeling} \label{sec:load}

The previous sections assume universal frequency reuse for a single channel and $\lambda_u \gg \lambda$, such that each BS always has a user to serve. However, in practice, multiple channels are available per BS and some channels may be left idle (i.e., some BSs might not be fully loaded). The results  in \cite{cog_h, on_cog,  on_ch, load_aware_harpreet, Harvest2_sakr} show that assuming fully-loaded network leads to a pessimistic performance evaluation. Hence, load-awareness is essential for practical performance assessment. In a load-aware model, the SINR-dependent performance analysis is conducted for each channel and the per-channel access probability in each BS is taken into account. Let $\mathbf{N}$ be the set of available channels, and without loss of generality, we assume that each BS randomly and uniformly selects a channel to assign for each user request.\footnote{If each BS assigns the channels based on the channel quality index (CQI), to exploit multi-user diversity, and all the channel gains are identically distributed, then, for a generic user at a generic time instant, each of the channels has the same probability to be the channel with the highest CQI.}  
 Following \cite{Harvest2_sakr}, the probability that a generic channel is used by a randomly selected BS is given by
 
{\begin{align}
p &= \mathbb{P}\left\{ n_j \in \mathbf{N} \text{ is used} \right\} \notag \\
&= \underset{k}{\sum} \mathbb{P}\left\{ \mathcal{U} =k \right\} \frac{\binom{N-1}{k-1}}{\binom{N}{k}} \notag \\
&= \underset{k}{\sum} \mathbb{P}\left\{ \mathcal{U} =k \right\} \frac{k}{N}
\end{align}}

\noindent where $N$ is the number of channels in $\bf{N}$, $\mathbb{P}\left\{ \mathcal{U} =k \right\}$ is the probability mass function (PMF) of the number of users served by each BS, which is given by \eqref{eq:load} when the UEs follow a PPP which is independent from the BS locations.  

From the SINR perspective, the analysis in the load-aware case is similar to Section~\ref{Relaxing_r_0}. However, the intensity of interfering BSs is thinned by the per channel access probability $p$. Hence, the intensity $\lambda$ in the LT expression in Lemma~\ref{LT_I_power} is replaced by the intensity of active BSs per channel $p \lambda$. On the other hand, the distribution of the service distance $r_0$ remains the same (i.e., with intensity $\lambda$) as each user has the opportunity to be associated with the complete set of BSs. However, it only receives interference from the subset of active BSs (i.e., the BSs using the same channel). Also, the interference exclusion region is equal to the service distance (i.e., $r_\mathcal{I} = r_0$). Hence, the LT of the aggregate interference is given by

\small
 {\begin{align}
 \mathcal{L}_{\mathcal{I}_{agg}}\left(a,\lambda\right) &{=} \int\limits_0^\infty 2 \pi \lambda r e^{-\pi \lambda r^2 }\mathcal{L}_{\mathcal{I}_{agg}}\left(a,p\lambda,r,r\right) \mathrm{d}r \notag \\
&{=} \int\limits_0^\infty 2 \pi \lambda r \exp \left\{ - \pi p \lambda r^{2} \left( \sqrt{a}  \arctan\left(\sqrt{a} \right) + \frac{1}{p} \right) \right\} \mathrm{d}r \notag \\
&{=} \frac{1}{p \left(\sqrt{a} \arctan\left(\sqrt{a} \right) + \frac{1}{p} \right)}
\label{eq:LT_random_load}
\end{align}}
\normalsize


\noindent Equation \eqref{eq:LT_random_load} shows that load-awareness can be easily incorporated into the analysis via the activity factor $p$. The effect of the activity factor $p$ is shown in Fig.~\ref{fig:Out_downlink}.

\begin{figure} 
	\begin{center}
	  \includegraphics[width=3 in]{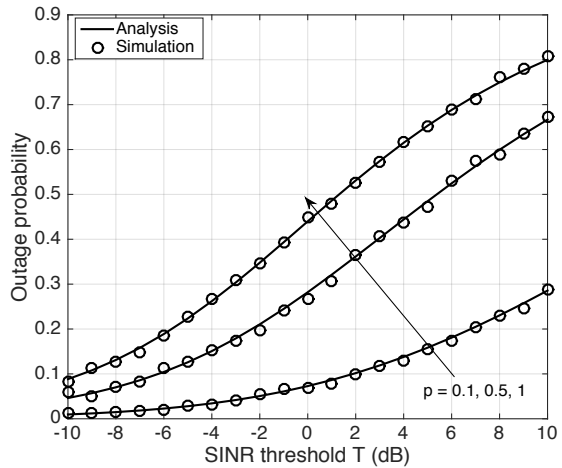}
	\end{center}
	\caption{Outage probability vs the SINR threshold for different BSs activity factor (p).}
	\label{fig:Out_downlink}
\end{figure}

\subsection{Multi-tier Cellular Networks}

Cellular networks are no longer single-tiered networks with operator's deployed macro BSs (MBSs) only. This is because MBSs are expensive to deploy in terms of time and money, which obstruct cellular operators to cope with the rapidly increasing capacity demand and device populations. Therefore, cellular operators tend to expand their networks via small BSs (SBSs) which are cheaper and faster to deploy. Some of these SBSs can be deployed directly by users in a plug and play fashion such as the LTE femto access points, which are installed by users at their homes and/or workplaces. Therefore,  modern cellular networks are multi-tiered networks that are composed of MBSs and several types of SBSs (e.g., micro, pico, femto).

The common assumption in SG analysis is to model multi-tier cellular networks via mutually independent teirs of BSs. On each tier, the BS locations follow an independent PPP which is characterized by its own transmission power $P_k$, intensity $\lambda_k$, and path-loss exponent $\eta_k$. It is usually assumed that UEs are associated to BSs according to a biased RSS strategy, which is controlled by a set of bias factors $\{B_1, B_2,\dots, B_k, \dots \}$. The bias factors are manipulated to control the load served by each network tier as shown in Fig.~\ref{fig:bias}. Let $\tilde{\Psi}_k=\{r_{0,k}, r_{1,k}, r_{2,k}, ..\}$ be the set of the ordered distances between a test user at the origin and the BSs in $\Psi_k$, in which $r_{i-1,k} < r_{i,k} < r_{i+1,k}$, for $\forall i \in \mathbb{Z}$. Then, assuming $K$ teirs of BSs, the test UE chooses to associate with tier $k \in \{1,2,...,K\}$ if 

\small
\begin{equation}
B_k P_k r_{0,k}^{-\eta_k} > B_i P_i r_{0,i}^{-\eta_i};  \quad 
{ i \in \{1,2,...,K\}}, 
\label{assoc_rule1}
\end{equation}
\normalsize

\noindent for all $i \neq k$. For simplicity, we focus on the case where all tiers have a common path-loss exponent $\eta_k=4$. The general case analysis can be found in \cite{load_aware_harpreet,k_tier}. Hence, the association rule becomes
 
\small
\begin{equation}
B_k P_k r_{0,k}^{-4} > B_i P_i r_{0,i}^{-4};  \quad 
{ i \in \{1,2,...,K\}},
 \label{assoc_rule}
\end{equation}
\normalsize

\begin{figure}[t]
	\begin{center}
	  \subfigure[With no biasing $(B_2=B_1)$]{\includegraphics[width=2.5in]{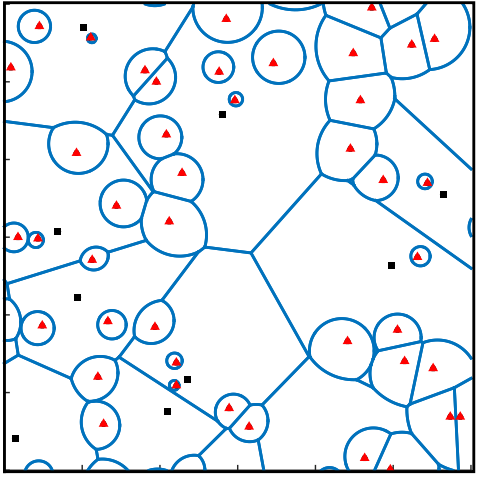}}
		\subfigure[With biasing $(B_2=10 B_1)$]{\includegraphics[width=2.5in]{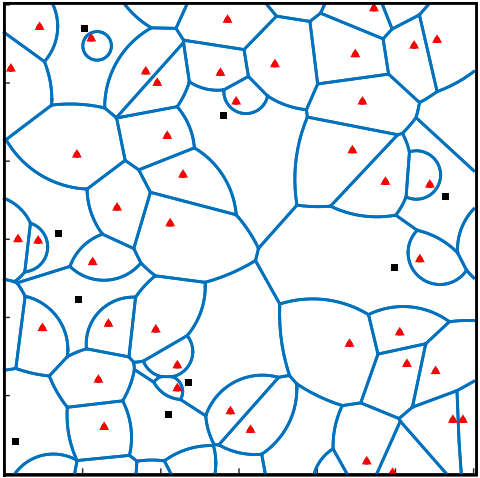}}
	\end{center}
	\caption{{ Two-tier cellular network with the same BS locations and different bias factors, in which the squares indicate the MBSs, the triangles indicate the SBSs for $P_1 = 50 P_2$, $\lambda_1= 0.2 \lambda 2$ and $\eta=4$. Biasing is used to increase the coverage of SBSs to offload users from the MBSs to the SBSs.}}
	\label{fig:bias}
\end{figure}



\noindent for all $i \neq k$. The performance in each tier may differ according to its parameters. Thus, per-tier performance is usually conducted. Let us focus on a generic tier $k$. Then looking into \eqref{eq:ASEP_unified} and \eqref{eq:out_unified}, one can see that the LT of the aggregate interference power should be evaluated at $ \frac{a r_{0,k}^4}{P_k}$ to conduct the performance analysis for tier $k$. The aggregate interference in this case is the cumulative interference coming from all tiers. Assuming universal frequency reuse across all tiers, the aggregate interference from all tiers can be calculated as

\begin{align} \label{eq:aggregate_Ktier_interference}
\mathcal{L}_{\mathcal{I}_{agg}}(a,\boldsymbol{\Lambda},r_{0,k},\boldsymbol{R}) &= \mathbb{E}\left\{e^{- z\sum\limits_{i=1}^K \mathcal{I}_i}\right\} \notag \\
&\overset{(a)}{=} \prod\limits_{i=1}^K\mathbb{E}\left\{e^{- z \mathcal{I}_i} \right\} \notag  \\
&= \prod\limits_{i=1}^K \mathcal{L}_{\mathcal{I}_{i}}(z,\lambda_i,r_{\mathcal{I}_i}).
\end{align}

\noindent where $\boldsymbol{\Lambda} = \{\lambda_i\}_{i=1}^K$, $\boldsymbol{R} = \{r_{\mathcal{I}_i}\}_{i=1}^K$, $(a)$ follows from the independence between the different tiers, and $r_{\mathcal{I}_i}$ and $\mathcal{I}_i$  are, respectively, the interference boundary for the $i^{th}$ tier  and the aggregate interference from the $i^{th}$ tier. 

The LT of the interference from each tier is similar to \eqref{LT_start}. The per-tier interference boundary is obtained from the association rule given in \eqref{assoc_rule}. For a user who is associated with tier $k$ with the association distance $r_{0,k}$, the $i^{th}$ tier interference should have the intensity $\lambda_i$ and interference boundary 

\begin{equation}
r_{\mathcal{I}_i} = r_{0,i} > \left(\frac{B_i P_i}{B_k P_k}\right)^\frac{1}{4} r_{0,k}.
\end{equation}  

\noindent From \eqref{LT_start} with $P_0=P_k$ and $P_\mathcal{I} = P_i$, the LT for the per-tier interference can be expressed as

\small
\begin{align} \label{eq:pt_LT1}
\mathcal{L}^{(k)}_{\mathcal{I}_i}\left(a,\lambda_i,r_{0,k},r_{\mathcal{I}_i}\right) & =\notag \\
& \!\!\!\!\!\!\!\!\!\!\!\!\! \exp \left\{-{ \pi \lambda_i \sqrt{\frac{a P_i}{P_k}} } r_{0,k}^2 \arctan\left( \sqrt{\frac{aB_k}{B_i }}   \right)  \right\}. 
\end{align} 
\normalsize

\noindent Combining \eqref{eq:aggregate_Ktier_interference} and \eqref{eq:pt_LT1}, the LT of the aggregate interference experienced by a user in tier $k$ is

\small
\begin{align} 
 \mathcal{L}^{(k)}_{\mathcal{I}_{agg}}\left(a,\boldsymbol{\Lambda},r_{0,k}, \boldsymbol{R}\right)&= \notag \\
& \!\!\!\!\!\!\!\!\!\!\!\!\!\!\!\!\!\!\!\!\!\! \exp \left\{- \sum_{i=1}^k{ \pi \lambda_i \sqrt{\frac{a P_i}{P_k}} } r_{0,k}^2  \arctan\left( \sqrt{\frac{aB_k}{B_i }}   \right)  \right\}. 
\end{align} \label{eq:pt_LT}
\normalsize

Similar to Section~\ref{Relaxing_r_0}, the service distance $r_{0,k}$ is random with the PDF shown in \eqref{mt_distance}, which is a function of the relative values of the tiers' powers, bias factors, and path-loss exponents. In our case (i.e., $\eta_k=4$, $\forall k$), the service distance distribution for a user in the $k^{th}$ tier reduces to

\small
\begin{equation}
f_{r_{0,k}}(x)=2 \pi \left(\sum_{i=1}^K \sqrt{\frac{B_i P_i}{B_k P_k}} \lambda_i\right) x \exp\left\{-\pi \sum_{l=1}^K \sqrt{\frac{B_l P_l}{B_k P_k}} \lambda_l x^{2} \right\}.
\end{equation}
\normalsize

\noindent The spatially averaged LT for users in the $k^{th}$ tier is then given by

\small
\begin{align} 
 & \mathcal{L}^{(k)}_{\mathcal{I}_{agg}}(a,\boldsymbol{\Lambda}) \notag \\
 &= \int\limits_0^\infty  \mathcal{L}_{\mathcal{I}_{agg}}\left(a,\boldsymbol{\Lambda},r,\left\{\frac{B_iP_i}{B_k P_k} r\right\}_{i=1}^K\right) f_{r_{0,k}}(r) \mathrm{d}r \notag \\
&= \frac{\sum_{i=1}^K \sqrt{{B_i P_i}} \lambda_i}{\sum_{l=1}^K \sqrt{{B_l P_l}} \lambda_l \left(1+ \sqrt{\frac{a B_k}{B_l}}  \arctan\left( \sqrt{\frac{a B_k}{B_l}} \right) \right)  }.
\end{align} 
\normalsize

\noindent For $\eta =4$, the tier association probability in \eqref{assoc_prob} reduces to

\begin{equation} \label{eq_red_association_prob}
\mathcal{A}_k = \frac{\lambda_k \sqrt{B_k P_k}}{\sum_{i=1}^K \lambda_i \sqrt{B_i P_i}},
\end{equation} 

\noindent Using \eqref{eq_red_association_prob} the averaged LT is given by

\small
\begin{align} 
&\!\!\!\!\! \mathcal{L}_{\mathcal{I}_{agg}}(a, \boldsymbol{\Lambda}) \notag \\
&\!\!\!\!\! = \sum_{k=1}^{K} \mathcal{A}_k \mathcal{L}^{(k)}_{\mathcal{I}_{agg}}(a,\boldsymbol{\Lambda}) \notag \\ 
& \!\!\!\!\!= \sum_{k=1}^k {\left(\sum_{i=1}^{k} \frac{\lambda_i}{\lambda_k} \sqrt{\frac{ B_i P_i}{ B_k P_k}} \left[1+ \sqrt{\frac{a B_k}{B_i}}  \arctan\left( \sqrt{\frac{a B_k}{B_i}} \right)\right]\right)^{-1}} \!\!\!\!\!\!.
\end{align} 
\normalsize

\noindent If unbiased RSS association is adopted (i.e., $B_k = 1$, $\forall k$), then the LT reduces to
\small
\begin{align} \label{mt_simple} 
\mathcal{L}_{\mathcal{I}_{agg}}(a,\boldsymbol{\Lambda})= \frac{1}{1+ \sqrt{a }  \arctan\left( \sqrt{a} \right)}.
\end{align} 
\normalsize

Despite the different transmission powers and intensities of BSs in multi-tier cellular networks, the simple expression in \eqref{mt_simple} shows that the unbiased RSS association reduces the SINR-dependent performance metrics to the single-tier case, which is independent from network parameters (i.e., numner of tiers, transmission powers, intensities of BSs, etc.).

\subsection{Interference Coordination and Frequency Reuse}

{For simplicity, we study a user-centric interference coordination with frequency reuse in a single-tier cellular network modeled via a PPP with intensity $\lambda$. Due to the randomized network structure modeled by the PPP, the traditional hexagonal grid tailored frequency reuse schemes cannot be employed. Therefore, we assume that the available spectrum is divided into $\Delta$ sub-bands and that frequency reuse is adopted via coordination among the BSs~\cite{Gaus_approx1}. As shown in Fig.~\ref{fig:reuse}, each BS uses a frequency sub-band which is not used by the $\Delta -1$ BSs closest to its serving user.} The main problem in frequency reuse is that the positions of interfering BSs are correlated (i.e., the BSs that are using the same sub-band), which violates the PPP assumption. For analytical tractability, the usual method that is used in such cases is to approximate the set of interfering BSs with a PPP with intensity $\frac{\lambda}{\Delta}$. It is well perceived that approximating a repulsive PP by a PPP that have equivalent intensity gives an accurate estimate for the interference if the exclusion distance $r_{\mathcal{I}}$ around the test receiver is accurately calculated \cite{survey_h, uplink_H, cog_h, hcpp, dense_ppp}. In our case, since each BS selects one of the $\Delta$ sub-bands, the intensity of the interfering BSs on each sub-band is $\lambda/\Delta$. Exploiting the equi-dense PPP approximation, the LT of the aggregate interference in the form of \eqref{LT_start} is legitimate to be used. 

\small
\begin{align} \label{LT_coordination1}
\mathcal{L}_{\mathcal{I}_{agg}}(a,\frac{\lambda}{\Delta}, r_0, r_\mathcal{I}) & = \exp\left\{-\pi \frac{\lambda}{\Delta} \sqrt{a} r_0^2 \arctan\left( \left(\frac{r_0}{r_\mathcal{I}}\right)^2\sqrt{a}\right) \right\}
\end{align}
\normalsize

{The adopted user-centric coordination imposes an increased geographical interference protection around UEs, and hence, $r_\mathcal{I} > r_0 $. Particularly, since each BS is using a frequency which is not used by the nearest $\Delta-1$ neighbors, the geographical interference protection is given by $r_\mathcal{I} = r_{\Delta-1}$.} Note that $r_{\Delta-1}$ and $r_{0}$ are correlated with the joint PDF in \eqref{eq:joint}. Averaging over the joint PDF of $r_{\Delta-1}$ and $r_{0}$, the spatially averaged LT of the aggregate interference is given by \eqref{LT_final}. 

\begin{figure}[t]
	\begin{center}
	 \includegraphics[width=2.5in]{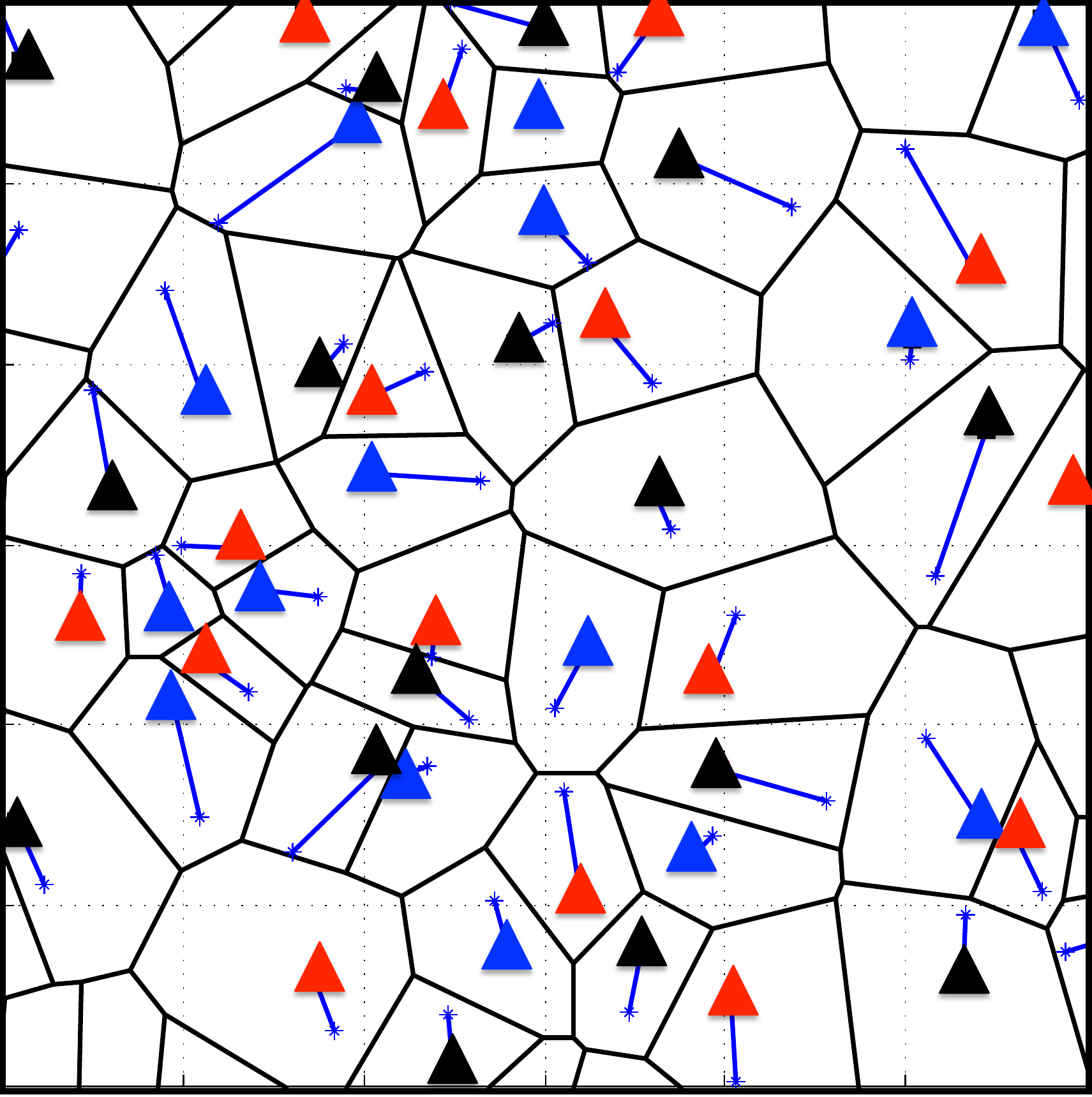}

	\end{center}
	\caption{{User-centric coordinated frequency reuse with $\Delta=3$, in which the BSs are represented by triangles and UEs are represented by stars. BSs using the same frequency are highlighted with similar color.} }
	\label{fig:reuse}
\end{figure}

\begin{figure*}
\
\begin{align}
\!\!\!\!\mathcal{L}_{\mathcal{I}_{agg}}(a,\frac{\lambda}{\Delta}) =  \int\limits_0^\infty \int\limits_{x}^\infty  \frac{4 (\pi \lambda)^{\Delta} y (y^2 - x^2)^{\Delta-2}}{\Gamma(\Delta-1)}  \exp\left\{-\pi \lambda \left( \frac{\sqrt{a} x^2}{\Delta}  \arctan\left( \left(\frac{x}{y}\right)^2\sqrt{a}\right) - y^2\right) \right\} \mathrm{d}y  \mathrm{d}x
\label{LT_final}
\end{align}
\hrulefill
\end{figure*}

It is important to highlight that the conditional PDF in \eqref{eq:joint} is based on the BSs intensity $\lambda$ not $\frac{\lambda}{\Delta}$. This is because the UEs have the opportunity to associate with the complete set of BSs with intensity $\lambda$. However, once associated, it communicates on one of the $\Delta$ sub-bands which interferes with a subset of the BSs with intensity $\frac{\lambda}{\Delta}$.

It is obvious that interference coordination and frequency reuse have complicated the analysis, resulting in a double integral expression for the spatially averaged LT of interference in \eqref{LT_final}. However, such expression is still valuable as it can be efficiently evaluated in terms of time and complexity when compared to Monte Carlo simulations.

Fig.~\ref{fig:out_reuse} validates \eqref{LT_final} and shows the effect of the coordinated frequency reuse on the network outage probability. As shown in \eqref{LT_coordination1} and \eqref{LT_final}, coordinated frequency reuse affects both the interference boundary and the interferers intensity. This explains the significant performance improvement shown in Fig.~\ref{fig:out_reuse} for increasing the reuse factor $\Delta$.

\begin{figure}[t]
	\begin{center}
	 \includegraphics[width=3in,height=3in]{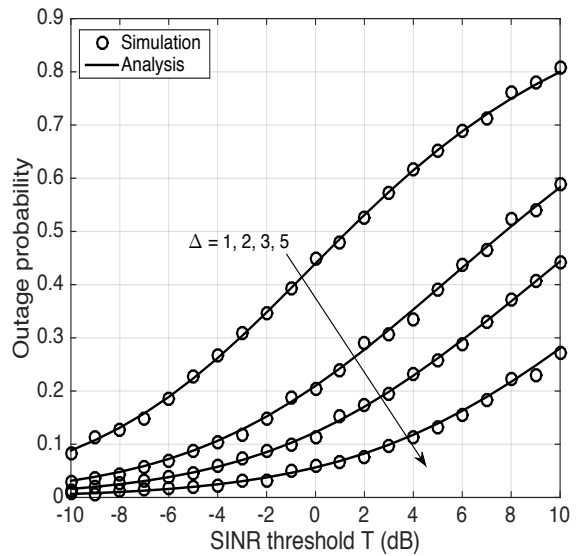}

	\end{center}
	\caption{ Outage probability for coordinated frequency reuse with $\Delta=1,\;2,\;,3,\;\text{and}\;5$.  }
	\label{fig:out_reuse}
\end{figure}

\subsection{Uplink Transmission}

For the uplink case study, we assume a single-tier cellular network, as shown in Fig.~\ref{fig:uplink}, with intensity $\lambda$ and universal frequency reuse and no intra-cell interference. That is, each BS assigns a unique channel per-user. It is also assumed that the UEs constitute an independent PPP with intensity $\lambda_u \gg \lambda $ such that each BS always has a user to serve on each channel. Per-UE power control is a crucial assumption in the uplink case  to limit interference between the users, as shown in \cite{uplink_harpreet, uplink_H}. For simplicity, we assume a full channel inversion power control in which each user invert its path-loss to maintain a constant average power level of $\rho$ at the serving BS. That is, if the UE is located $r$ meters away from its serving BS, the transmit power should be $\rho r^\eta \overset{\tiny (\eta=4)}{=} \rho r^4$ to have the signal power level of $\rho$ at the serving BS. 

In the uplink case, the interfering sources are the UEs and the receivers are the BSs. Without loss in generality, the test BS is assumed to be located at the origin. Although the complete set of UEs constitutes a PPP  with intensity $\lambda_u$, the interfering (i.e., simultaneously active) UEs on a certain channel do not constitute a PPP with intensity is $\lambda_u$. Due to the unique channel assignment per BS, only one active user per channel is allowed in each voronoi cell as shown in Fig.~\ref{fig:uplink}. This brings correlation, in the form of repulsion, among the set of interfering users. To facilitate the analysis and maintain tractability, the set of interfering UEs is approximated  with a PPP with the same intensity. Since there is only one active user in each voronoi cell, the intensity of the approximate PPP is selected to be equal to the BSs intensity $\lambda$. In this case, the PGFL of the PPP is legitimate to be used as an approximation to obtain the LT of the aggregate interference in uplink cellular networks. The accuracy of this approximation is verified in Fig.~\ref{fig:out_uplink} as well as in \cite{uplink_H, uplink2_jeff, uplink_harpreet, Uplink_lett, uplink_zolfa, uplink_alamouri, Laila_Uplink}. 

Although the set of interfering UEs is approximated via a PPP, the LT in \eqref{LT_start1} cannot be directly used.  This is because the employed power control imposes a constant received signal power $\rho$ at the test BS. As a result, the SINR expression for the uplink is different from that of the downlink case presented in \eqref{eq:sinr_GCB}. The SINR at the test BS in the uplink case is given by

\begin{equation} \label{SINR_uplink}
\Upsilon_u = \frac{\rho h}{\sigma^2 + \mathcal{I}}
\end{equation}

\noindent Ignoring noise and replacing $z$ by $\frac{a}{\rho}$ in \eqref{LT_start1}, the starting LT for the uplink case is given by  

\small
\begin{align} \label{LT_start_uplink}
\mathcal{L}_{\mathcal{I}_{agg}}(a,\lambda, r_\mathcal{I}) \notag \\
& \!\!\!\!\!\!\!\!\!\!\!\!\!\!\!\!\!\!\!\!\!\!\!\!\!\!\!\!\!\!\!\!\!= \exp\left\{-\pi \lambda \EXs{P_\mathcal{I}}{ \sqrt{\frac{a P_\mathcal{I}}{\rho}}  \arctan\left( \left(\frac{1}{r_\mathcal{I}}\right)^2\sqrt{\frac{aP_{\mathcal{I}}}{\rho}}\right) } \right\}
\end{align}
\normalsize
which is no longer a function of $r_0$. Nevertheless, the distributions of the service distances $r_0$ affect the interference power $P_{\mathcal{I}_i}$ from each UE due to the employed power control. In other words, the transmission power of each UE is a function of the random distance to his serving BS, which has the distribution in \eqref{pdf_r_0}. Assuming that all interfering UEs have i.i.d. transmission powers, \eqref{LT_start_uplink} should be averaged over the distribution of $P_\mathcal{I}$. Note that the averaging over $P_\mathcal{I}$ is done within the PGFL expression (i.e., within the exponential function of \eqref{LT_start_uplink}) because  $P_\mathcal{I}$ takes a different realization for each interfering user.

The interference boundary for the uplink is given by 
\begin{equation} \label{boundary}
r_\mathcal{I} > \left(\frac{P_\mathcal{I}}{\rho}\right)^\frac{1}{\eta} \overset{(\eta=4)}{=} \left(\frac{P_\mathcal{I}}{\rho}\right)^\frac{1}{4}
\end{equation}
which is calculated from the employed power control and the association rule. That is, each user adjusts its power to maintain the power level $\rho$ at his nearest BS. Hence, the interfering power from any other user at the test BS satisfies $P_\mathcal{I} r_\mathcal{I}^{-\eta}< \rho$, which leads to the boundary in \eqref{boundary}. Substituting $r_\mathcal{I}$ back into \eqref{LT_start_uplink}, we have

\small
\begin{align} \label{LT_start_uplink2}
\mathcal{L}_{\mathcal{I}_{agg}}(a,\lambda)= \exp\left\{-\pi \lambda \mathbb{E}\left\{\sqrt{P_\mathcal{I}}\right\} \sqrt{\frac{a}{\rho}}  \arctan\left( \sqrt{a} \right)  \right\}
\end{align}
\normalsize

The power $P_\mathcal{I} = \rho r^\eta$, where $r$ has the PDF in \eqref{pdf_r_0}. Hence, $ \mathbb{E}\left\{\sqrt{P_\mathcal{I}}\right\} = \frac{\sqrt{\rho}}{\pi \lambda}$. Substituting $\mathbb{E}\left\{\sqrt{P_\mathcal{I}}\right\}$ with $\frac{\sqrt{\rho}}{\pi \lambda}$ back into \eqref{LT_start_uplink2}, we have

\small
\begin{align} \label{LT_start_uplink3}
\mathcal{L}_{\mathcal{I}_{agg}}(a,\lambda)= \exp\left\{- \sqrt{{a}}  \arctan\left( \sqrt{a} \right)  \right\}
\end{align}
\normalsize
 which is independent of the power control threshold $\rho$ and the BS intensity $\lambda$.
More advanced uplink system models with fractional power control and/or maximum transmit power constraint can be found in \cite{uplink_H, uplink_harpreet, uplink2_jeff, uplink_alamouri, Laila_Uplink}.

\begin{figure}[t]
	\begin{center}
	 \includegraphics[width=2.5in]{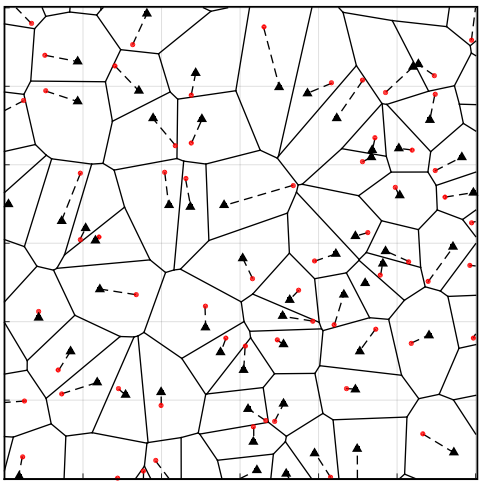}

	\end{center}
	\caption{ Single-tier cellular network in which the triangles indicate the MBSs, the circles indicate UEs, the dotted lines indicate association, and sold lines indicate BSs footprints. }
	\label{fig:uplink}
\end{figure}

\begin{figure}[t]
	\begin{center}
	  \includegraphics[width=3.0in]{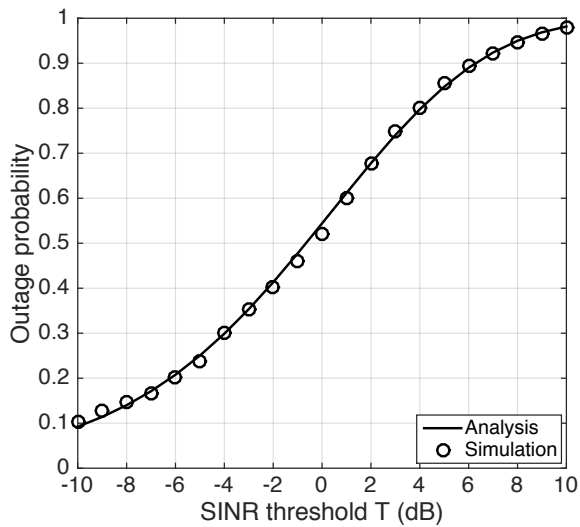}
	\end{center}
	\caption{ Outage probability vs SINR threshold for the uplink at different values of $\rho$.}
	\label{fig:out_uplink}
\end{figure}

Fig.~\ref{fig:out_uplink} verifies \eqref{LT_start_uplink3} and the PPP approximation for the interfering UEs. Comparing Fig.~\ref{fig:out_uplink} with Fig.~\ref{fig:Out_downlink}, it can be observed that the uplink transmission has higher outage probability than the downlink counterpart. This is because uplink transmissions have limited transmission power and the association does not impose geographical interference protection for the uplink transmission. Hence, the uplink is more vulnerable to outages than the downlink. Comparison between uplink and downlink performance can be found in \cite{uplink_H}.

\subsection{General Fading}

All of the above analysis is based on the exponential power fading (i.e., Rayleigh environment) assumption, which enables expressing the ASEP, outage probability, and ergodic rate using the LT of the aggregate interference. Assuming general fading on the interfering links, the analytical tractability is not affected as we can still express all performance metrics using the LT of the aggregate interference. Nevertheless, the expression of the LT of the aggregate interference may become more involved. Tractability issues occur when the  exponential power fading on the useful link is changed. In this case, the outage probability and ASEP can no longer be expressed in terms of the LT of the aggregate interference.\footnote{Unlike outage probability and ASEP, the ergodic rate can always be expressed in terms of the LT of the aggregate interference~(cf. \cite[Lemma~1]{hamdi2010useful}), and hence, can be evaluated for general fading environment~\cite{Rate_marco}.} In \cite{survey_h}, the authors discuss four techniques which are used in the literature to extend SG analysis to other fading environments. These techniques are to:
\begin{itemize}
\item approximate the interference using a certain PDF via moments fitting, in which the moments are obtained for the interference LT;
\item resort to bounds by considering dominant interferers only and/or statistical inequalities;
\item use Plancherel-Parseval theorem to obtain the aforementioned performance metrics via complex integrals in the Fourier transform domain;
\item inversion (e.g., Gil-Pelaez inversion theorem~\cite{Gil_marco}).
\end{itemize}
We will not delve into the details of these techniques as they are already discussed in \cite{survey_h}. However, we will highlight two important exceptions.

\subsubsection{Nakagami-m} \label{sec:Nakagami} The first scenario where the above analysis holds is the Nakagami-m fading with integer $m$. For the ASEP analysis, \cite{hamdi_useful_tech} obtains expressions for $\mathbb{E}\{\text{erfc}(h/x)\}$ and $\mathbb{E}\{\text{erfc}^2(h/x)\}$ using the LT of $X$, where $h$ is gamma distributed with integer shape  parameter as shown in Appendix~\ref{app:useful}. Note that the LT of the aggregate interference in Nakagami-m fading changes from \eqref{LT_start} to

\small
\begin{align} \label{LT_start_fading}
\!\!\!\!\!\!\mathcal{L}_{\mathcal{I}_{agg}}(a,\lambda, r_0, r_\mathcal{I}) & \notag \\
& \!\!\!\!\!\!\!\!\!\!\!\!\!\!\!\!\!\!\!\!\!\!= \exp\left\{-\pi \lambda r_0^2 {}_2F_1\left(-\frac{2}{\eta},m,1-\frac{2}{\eta},- \left(\frac{r_0}{r_\mathcal{I}}\right)^\eta {\frac{aP_{\mathcal{I}}}{P_0}}\right) \right\}
\end{align}
\normalsize

The outage probability and ergodic rate can be computed from the CDF of the SINR as shown in Section~\ref{sec:out}. In the Nakagami-m case, the authors in \cite{MIMO_Hetnet_harpreet} show that if $m$ is an integer, the CDF of the SINR can be expressed in terms of the LT of the aggregate interference using the following identity 
\begin{equation} \label{LT_identity}
t^n f(t) \stackrel{LT}{\longrightarrow} (-1)^k \frac{d^k\mathcal{L}_{f(t)}\left(s\right)}{ds^k}.
\end{equation}

Let $h$ be a gamma random variable with shape parameter $U$ and scale parameter $1$. From  \eqref{out_cdf} we have

\small
{\begin{align}  \label{nakagami}
F_{\Upsilon}(T) &= \int_ x F_{h}\left( \frac{T \mathcal{I}_{agg}}{P r_0^{-\eta}} \right) f_{\mathcal{I}_{agg}}(x) \mathrm{d}x \notag \\
  &\overset{(a)}{=} 1- \int_x \sum_{u=0}^{U-1}  \frac{1}{u!} \left(\frac{T \mathcal{I}_{agg} r_0^{\eta}}{P }\right)^u e^{-\frac{T \mathcal{I}_{agg} r_0^{\eta}}{P }}  f_{\mathcal{I}_{agg}}(x) \mathrm{d}x  \notag \\
    &\overset{(b)}{=} 1- \sum_{u=0}^{U-1} \frac{(-1)^u}{u!} \left(   \frac{T  r_0^{\eta}}{P }  \right)^u  \left.\frac{\mathrm{d}^u\mathcal{L}_{I_{\rm agg}}\left(z\right)}{\mathrm{d} z^u}\right|_{z=\frac{T  r_0^{\eta}}{P }}  
\end{align}}
\normalsize

\noindent{where $(a)$ follows from the CDF of the gamma distribution with integer shape parameter, and $(b)$ follows from switching the integral and summation order, the LT definition, and the identity in \eqref{LT_identity}.}

\subsubsection{Additional Slow Fading} When an additional slow fading is incorporated into the analysis on top of the exponential or Nakagami-m fading, the analysis remains tractable if the RSS association adapts to the slow fading. That is, the users are always associated to the BS that provides the highest received signal strength. Applying the displacement theorem \cite{martin_book}, the effect of shadowing is captured by scaling the intensity of the PPP with the shadowing fractional moment $\mathbb{E}\left\{x^\frac{2}{\eta}\right\}$, where $x$ is the shadowing random variable~\cite{shadowing_letter}.

\subsection{Multiple Input Multiple Output (MIMO) Antenna Systems}

Due to the vast diversity of available MIMO techniques and the significant differences between their operations, it is difficult to present a unified analytical framework for all MIMO case studies. Further, we do not want to lose the tutorial flavor and delve into MIMO systems details, which already exist elsewhere in the literature. Therefore, we choose to present a simple receive diversity MIMO case study just to convey the idea of extending SG analysis to MIMO systems. MIMO with transmit diversity is discussed in the next section in the context of network MIMO.

This section considers a downlink cellular network with receive diversity, where each BS is equipped with a single antenna and each UE is equipped with $N_r$ antennas. Note that in SG analysis, the multiple antennas are usually assumed to be collocated. The channel gain vector between a transmitting antenna and the $N_r$ receiving antennas is denoted by $\boldsymbol{h} \in \mathbb{C}^{N_r \times 1}$, which is assumed to be composed of i.i.d circularly symmetric unit variance complex Gaussian random variables. Also, it is assumed that the UEs have perfect channel information for the intended channel vector $\boldsymbol{h}_0$. Assuming maximum ratio combining (MRC) receivers, the baseband received signal at the input of the decoder can be represented as  

\begin{align}
y = {\boldsymbol{h}_0^H} \left( \sqrt{P} r_0^{-\frac{\eta}{2}}   \boldsymbol{h}_0 s_0 + \underset{r_j \in \tilde{\Psi} \setminus r_0}{\sum} \sqrt{P} r_j^{-\frac{\eta}{2}}  \boldsymbol{h}_j s_j + \boldsymbol{n} \right) \notag \\
= \sqrt{P} r_0^{-\frac{\eta}{2}} \boldsymbol{h}_0^H  \boldsymbol{h}_0  s_0 + \underset{r_j \in \tilde{\Psi}\setminus r_0}{\sum} \sqrt{P} r_j^{-\frac{\eta}{2}} {\boldsymbol{h}_0^H \boldsymbol{h}_j} s_j + {\boldsymbol{h}_0^H}  \boldsymbol{n} 
\end{align}

\noindent where $ \boldsymbol{n} \in \mathbb{C}^{N_r \times 1} $ is the noise vector with i.i.d complex Gaussian elements.  Conditioning on $\Xi=\left\{ \boldsymbol{h}_0, \boldsymbol{h}_i, \tilde{\Psi}\right\}$ and exploiting the Gaussian signaling assumption, the SINR can be expressed as

\begin{align} \label{eq:mrc}
\Upsilon(\Xi) &= \frac{ P r_0^{-\eta} {\boldsymbol{h}_0^H\boldsymbol{h}_0\boldsymbol{h}_0^H\boldsymbol{h}_0} }{ \underset{r_j \in \tilde{\Psi}\setminus r_0}{\sum} P r_j^{-{\eta}} {{\boldsymbol{h}_0^H \boldsymbol{h}_j \boldsymbol{h}_j^* \boldsymbol{h}_0}} + N_0 {\boldsymbol{h}_0^H \boldsymbol{h}_0} }  \notag \\
&{= \frac{ P r_0^{-\eta} \boldsymbol{h}_0^H\boldsymbol{h}_0 }{ \underset{r_j \in \tilde{\Psi}\setminus r_0}{\sum} P r_j^{-{\eta}} \frac{{{\boldsymbol{h}_0^H \boldsymbol{h}_j \boldsymbol{h}_j^* \boldsymbol{h}_0}}}{{\boldsymbol{h}_0^H \boldsymbol{h}_0}} + N_0  }}  \notag \\
&{=} \frac{ P r_0^{-\eta} {g_0} }{ \underset{r_j \in \tilde{\Psi}\setminus r_0}{\sum} P r_j^{-{\eta}} {{g}_j} + N_0 } 
\end{align}

\noindent {where $g_0$ and  $g_j$ in \eqref{eq:mrc} are the effective channel gains for the employed MIMO scheme. Let $h_{0,k}$ be the $k^{th}$ element of $\boldsymbol{h}_0$, then $g_0 = \sum_{k=1}^{N_r} h^*_{0,k} h_{0,k}$ is a summation of $N_r$ unit-mean exponential random variables. Hence, $g_0$ is gamma distributed with shape parameter $N_r$ and rate parameter $1$. On the other hand, due to the independence between $\boldsymbol{h}_0$ and $\boldsymbol{h}_i$,  the effective channel gain for the $i^{th}$ interfering link ($g_i$) is a unit-mean exponential random variable. Note that the exponential distribution of $g_i$ follows from the fact that $\frac{\boldsymbol{h}_0^H \boldsymbol{h}_j \boldsymbol{h}_j^H \boldsymbol{h}_0 }{|\boldsymbol{h}_0|^2} \overset{D}{=} h_{j1} h_{j1}^*$.}  Since the MRC receiver leads to a gamma distributed intended channel gain, ASEP and SINR CDF can be  obtained as in the case of Nakagami-m fading described in Section~\ref{sec:Nakagami}. For instance, the CDF of the SINR can be found as

\small
{\begin{align}  \label{CDF_div}
F_{\Upsilon}(T) {=} 1- \sum_{u=0}^{N_r-1} \frac{(-1)^u}{u!} \left(   \frac{T  r_0^{\eta}}{P }  \right)^u  \left.\frac{\mathrm{d}^u\mathcal{L}_{I_{\rm agg}}\left(z\right)}{\mathrm{d} z^u}\right|_{z=\frac{T  r_0^{\eta}}{P }}  
\end{align}}
\normalsize

\noindent{where} $\mathcal{L}_{I_{\rm agg}}\left(z\right)$ is given in \eqref{LT_start1} with $r_\mathcal{I}=r_0$. Fig.~\ref{fig:D_and_Re} validates \eqref{CDF_div} and shows the effect of receive diversity on the network outage probability.

From the simple example presented above, one can see that even in Rayligh fading environment, the fading in MIMO networks is no longer exponential, and hence, the analysis is more involved. Also, analyzing the distribution of the interfering signals is challenging as the interfering signal from each BS is multiplied by the precoding matrix tailored for processing the intended signal. Further, correlations within the interference at the antenna branches may impose additional complexity to the MIMO analysis. Nevertheless, the SG analysis has been greatly developed in recent years and modeled the performance of many MIMO setups with and without interference correlation \cite{STBC_harpret, Mimo_load_harpreet, Mimo_ordering, Corr_MRC_harpreet, Corr2_MRC_harpreet, eid_Mimo, marco_Mimo, asymptotic_SE, Dist_antenna_Zhang, SDMA_Zhang, MIMO_heath, Dist_antenna_Wei, Massive_harpret, Laila_MIMO}.

\begin{figure}[t]
	\begin{center}
	  \includegraphics[width=3.0in]{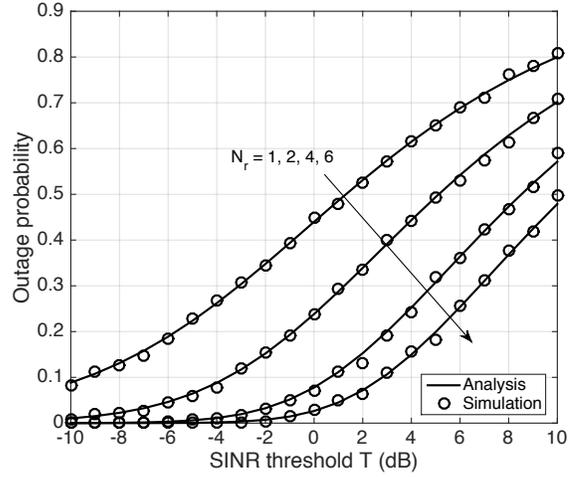}
	\end{center}
	\caption{ The effect of receive diversity on the outage probability.}
	\label{fig:D_and_Re}
\end{figure}

\subsection{Network MIMO}

In the previous section, it is implicitly assumed that the multiple antennas are collocated. In contrast, when several BSs cooperate to form a MIMO system, the antenna separations are prominent and should be taken into consideration. In this section, we consider a  downlink single-tier cellular network with single antenna BSs. User centric CSI agnostic coordinated multi-point (CoMP) transmission is enabled~\cite{CoMP_martin, CoMP_sakr}, in which each user is served by his nearest $n$ BSs. In this case, the test user receives $n$ non-coherent copies of his intended symbol from the $n$ nearest BSs, and the received baseband signal can be expressed as

\begin{align}
y = \sum_{i=0}^{n-1} \sqrt{P} r_i^{-\frac{\eta}{2}}   {h}_i s_0 + \!\!\!\!\!\!\!\!\!\!\!\!\!\!\!\!\!\!\!\!\!\!\!\!\!\!\!\!\!\!\!\! \underset{ \;\;\; \;\;\;  \;\;\;  \;\;\;  \;\;\;  \;\;\;  \;\;\;  r_j \in \tilde{\Psi} \setminus \{r_0, r_1, ..., r_{n-1} \} }{\sum} \!\!\!\!\!\!\!\!\!\!\!\!\!\!\!\!\!\!\!\!\!\!\!\!\!\!\!\!\! \sqrt{P} r_j^{-\frac{\eta}{2}}  {h}_j s_j + n  
\label{eq:cooperation2}
\end{align}

\noindent where the set $\{r_0, r_1, ..., r_{n-1} \}$ is excluded from $\tilde{\Psi}$ in \eqref{eq:cooperation2} as the nearest $n$ BSs do not contribute to the interference. The SINR can be written as

\begin{align}
\Upsilon = \frac{|\sum_{i=0}^{n-1} \sqrt{P} r_i^{-\frac{\eta}{2}}   {h}_i|^2}{  \!\!\!\!\!\!\!\!\!\!\!\!\!\!\!\!\!\!\!\!\!\!\!\!\!\!\!\!\!\!\!\! \underset{ \;\;\; \;\;\;  \;\;\;  \;\;\;  \;\;\;  \;\;\;  \;\;\;  r_j \in \tilde{\Psi} \setminus \{r_0, r_1, ..., r_{n-1} \} }{\sum} \!\!\!\!\!\!\!\!\!\!\!\!\!\!\!\!\!\!\!\!\!\!\!\!\!\!\!\!\! {P} r_j^{-{\eta}}  |{h}_j|^2 + N_0}  
\end{align}

\noindent where {$ |\sum_{i=0}^{n-1} \sqrt{P} r_i^{-\frac{\eta}{2}}   {h}_i|^2 $} is exponentially distributed with mean \small{$\sum_{i=0}^{n-1} P_i r_i^{-\eta}$}. \normalsize Substituting 

\begin{equation}
z = \frac{a}{{\sum_{i=0}^{n-1} P r_i^{-\eta}}} \notag
\end{equation}

\noindent {into} \eqref{LT_start1} and integrating over the joint PDF of the distances $f(r_0, r_1, ... r_n)$, the spatially averaged LT is given \eqref{Network_mimo}. Note that cooperation increases  the geographical interference protection region to $r_{\mathcal{I}}= r_{n-1}$ because the nearest $n$ BSs cooperate to serve the intended user and do not contribute to the aggregate interference. More advanced models for network MIMO with transmission  precoding and location aware cooperation are given in \cite{Cooperation1_martin, CoMP_martin, CoMP_sakr, Pairwise_Bacc}.

\begin{figure*}
\begin{align} \label{Network_mimo}
\mathcal{L}_{\mathcal{I}_{agg}}(a,\lambda) =  \!\!\! \!\!\!\!\!\!\!\!\!\! \underset{0<r_0<r_1<...<r_{n-1}<\infty}{\int \int ... \int}  \!\!\!\!\!\!\!\!\!\!\!\!\! \exp\left\{-\pi {\lambda} \sqrt{\frac{a}{\sum_{i=0}^{n-1} r_i^{-4}}}  \arctan\left( \frac{1}{r_{n-1}^2}\sqrt{\frac{a}{{\sum_{i=0}^{n-1} r_i^{-4}  }}}\right) \right\} f(r_0,r_1,...,r_n) \mathrm{d} r_0 \mathrm{d} r_1 .... \mathrm{d} r_{n-1}
\end{align}
\normalsize

\hrulefill
\end{figure*}

\normalsize

\subsection{Discussion}

\normalsize

This section discusses the numerical values obtained via SG analysis. Figs.~\ref{fig:ASER_DL_r0}, \ref{fig:Out_Spec}, and \ref{fig:Out_downlink} show high outage probability and ASEP values. Hence, it may be argued that the PPP results are quite pessimistic and do not reflect realistic system performance. However, we believe that the associated system model and assumptions, not the PPP, are the reasons for such pessimistic performance. That is, the naive universal frequency reuse, the saturated network model, and the peak transmit power pf the BSs are the main reason for the poor performance shown in Figs.~\ref{fig:ASER_DL_r0}, \ref{fig:Out_Spec}, and \ref{fig:Out_downlink}. To show that the system model, not the PPP,  are the main reasons for such pessimistic performance, we plot Fig.~\ref{fig:D_and_Re} which is obtained for a PPP cellular network with slightly different system model. Particularly, we incorporated receive diversity and frequency reuse, which are basic components of modern cellular networks. From the analysis perspective, we combined \eqref{LT_final} and \eqref{nakagami} to capture receive diversity and frequency reuse into the system model.

Fig.~\ref{fig:DR1} shows the explicit and combined effects of receive diversity and frequency reuse on the network outage probability. Fig.~\ref{fig:DR2} shows the combined effect of receive diversity and frequency reuse for different reuse factors and different numbers of receive antennas. Figs.~\ref{fig:DR1} and \ref{fig:DR2} show that incorporating simple network management techniques into the analysis leads to realistic values for the outage probability. For instance, with only two receive antennas and a reuse factor of 3, the outage probability at $T=0$ dB drops from almost $50\%$ (cf. Figs. \ref{fig:Out_Spec}, and \ref{fig:Out_downlink}) to below $5\%$. Incorporating more practical system parameters (e.g., power control and multi-slope path-loss) would further reduce the outage probability.

To recap, with the appropriate system model, SG analysis with the PPP assumption can capture realistic network performance and gives acceptable performance characterization. Sometimes we are interested in trends rather than absolute values. In this case, it is better to keep a simple system model to facilitate the analysis and to obtain insightful performance expressions. These expressions could be used to understand the network behavior in response to different network parameters and desing variables. However, it should be understood that the corresponding results are illustrative to the network behavior and do not give the true numerical values for the performance metrics.

\begin{figure}[t]
	\begin{center}
	  \subfigure[]{\label{fig:DR1}\includegraphics[width=3.0in]{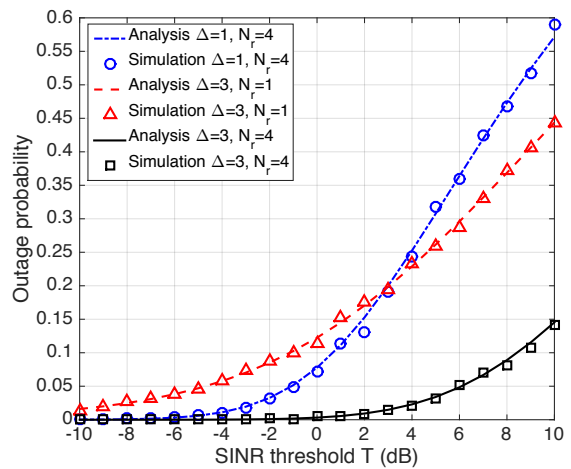}}
		\subfigure[]{\label{fig:DR2}\includegraphics[width=3.0in]{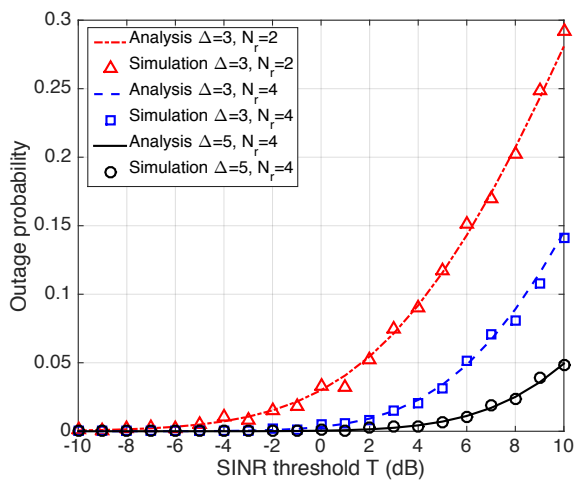}}
	\end{center}
	\caption{ The effect of receive diversity and coordinated frequency reuse on the outage probability.}
	\label{fig:D_and_Re}
\end{figure}


\section{Future Research Direction}  \label{sec:PPs}


SG analysis can be used to characterize the performance of large-scale setup wireless networks. For instance, it is well known that minimum Euclidean distance receivers are optimal if the intended symbol is disturbed by Gaussian noise. However, in large-scale networks where the intended symbol is disturbed by non-Gaussian interference in addition to the Gaussian noise, the optimal detector is unknown. Furthermore, results obtained for single point-to-point links cannot be directly generalized to large-scale networks. For instance, in a point-to-point link, the BER decreases with the transmit power. This fact does not hold for large-scale networks as the increased power of the useful signal is canceled by the increased interference power.  In this regard, SG paves the way to better understanding and more efficient operation of large-scale wireless networks. We highlight below some venues to extend  SG for better models of wireless networks.

\subsection{New Point Processes}

Exploring new tractable PP for modeling wireless networks is a fundamental research direction for SG analysis. Although we have shown that the PPP provides a good approximation for interference associated with repulsive point processes, the PPP alone is not enough to model all wireless networks. Wireless networks' topologies may include other complex correlations among the network elements rather than the simplified repulsion discussed in this paper. For instance, 5G networks define several types of communication including device-to-device (D2D) communication, vehicle-to-vehicle (V2V) communication, and machine-to-machine communication on top of the legacy device-to-BS communication~\cite{Elsawy_Vir_mag}. These various types of communications create complex topological structures that cannot be captured by PPP. This is because PPP is only characterized by its intensity and interference boundary, which offers limited degrees of freedom to model different topological structures. Hence, it is essential to develop SG models for wireless networks via new PPs. In this regard, there have been efforts invested to study new PPs in the context of cellular networks.  For instance, Poisson cluster processes for modeling attractive behavior between points are studied in \cite{now_martin, Ali_pcp, Harpreet_cluster}. Repulsive point processes such as the Mat\'ern hard core point process, the Ginibre point process, and the determinantal point process are studied in~\cite{Ginibre_martin, Ginibre_2, Giniber_uplink, cog_h, cog_h2, on_cog, on_ch, determinantal}. There are even efforts to characterize the asymptotic behavior of networks following general point processes \cite{High_SINR, High_SINR2}. In some cases when it is difficult to obtain explicit performance metrics in some network models, stochastic ordering can be exploited to compare their performances \cite{LTorder, Mimo_ordering}.  Note that the developed models using non-Poisson point processes are mostly for the baseline network model. Hence, besides exploring new point processes, extending existing non-Poisson based models to advanced network setup is also a potential research direction.

\subsection{Characterizing New Technologies}

Techniques used for transmissions and network management in wireless networks are continuously evolving to enhance the network performance and cope with the ever-increasing traffic demand. Usually, a proposal for a new technique starts with a theoretical idea followed by prototyping testbeds. However, it is challenging and costly to expose these techniques to realistic tests in large-scale setup. In this case, SG can serve as an initial and fast evaluation step for validating and quantifying the associated performance. For instance, in-band full-duplex (FD) communication, which emerges for recent advances in self-interference cancellation techniques, is optimistically promoted to double the spectral efficiency for wireless networks~\cite{FD1, FD2}. While this is true for a point-to-point link, it is not necessarily true in large-scale networks due to the increased interference level. In fact, \cite{h_FD} employed SG analysis to demonstrate the vulnerability of uplink to downlink interference and the negative effect that FD communication can impose on the uplink transmission. Then, in the light of the SG model in \cite{h_FD}, the authors proposed a solution to alleviate the negative impact of FD communication on the uplink transmission. Similar examples exist for other new technologies such as D2D communication~\cite{D2D_h, Jeffery_D2D, D2D_tony, D2D2_tony, Spatial_d2d}, coordinated multi-point transmission~\cite{CoMP_martin, CoMP_sakr, Pairwise_Bacc}, offloading and load balancing~\cite{Offloading_jeff, Joint_singh, load_aware_harpreet, Halim_1, User_Assoc_Wei}, uplink/downlink decoupling~\cite{uplink2_jeff}, massive MIMO~\cite{Massive_harpret}, and so on.

The above discussion shows the important role of SG in evaluating the gains associated with new technologies before the implementation step. Hence, it can be decided beforehand whether the new technology is worth the investment or not. Hence, performance characterization in large-scale networks via SG will always be a future research direction as long as new technologies are being proposed to enhance the performance of cellular networks as well as other types of large-scale networks.

\subsection{More Involved Performance Characterization}

In the context of cellular networks, SG is mainly confined to model interference and characterize outage, error probability, and transmission rate. An important direction for research is to extend SG analysis to model more performance metrics. For instance, SG can model other SINR related parameters such as secrecy rate~\cite{Sec_moe, Sec_moe_P1, Sec_moe_P2}, which is the fundamental performance metric in physical layer security. Looking into the literature, there are initiatives to asses physical layer security in cellular networks via the secrecy rate performance metric \cite{Security_harpreet, Security_Zhou, Security2_jeff, Bennis_sec}. However, this field of research is not mature enough to address the security problems imposed on 5G networks. In 5G networks there are massive D2D, M2M, and V2V communications on the top of the legacy user-to-BS communications. These different types of communications may serve applications (e.g., eHealth, smart city automation) which requires some level of privacy and confidentiality. Hence, developing secrecy rate models for modern cellular networks with D2D, M2M, and V2V communications is an interesting future research direction.

Stochastic geometry can also be extended beyond SINR characterization. For instance, cell boundary cross rate and cell dwell time are two fundamental performance metrics in cellular networks to design the handover procedure. The handover models available in the literature are mostly based on the circular approximation for the cell shape, which does not comply with recent measurements in \cite{martin_ppp, marco_fitting, tractable_app}. Hence, more accurate handover models for cellular network are required. In this regards, there are some initiatives to use SG to characterize handover in cellular networks as in \cite{Mobility_jeff, Mobility_liang, Adve_1}. However, complete handover designs based on SG are yet to be developed.

Developing new techniques for managing cellular networks may also define new performance metrics to be characterized. For instance, it is advised to transport and cache popular files in the cellular network edge during off-peak time to maximize the utilization of the core network and enhance the end user quality of service~\cite{proctive1}. In this case, the hitting probability, i.e., the probability that a user finds the requested file in a nearby BS, becomes a meaningful performance metric. Recently, models for hitting probability via stochastic geometry are developed and used to propose solutions to the caching problems based on file popularity~\cite{proctive2}.

\subsection{Statistical Network Optimization}

Cellular operators always seek an optimized operation of their networks. Modern cellular networks are composed of a massive number of network elements (i.e., BSs, users, devices, machines, etc.) which makes a centralized instantaneous optimization for the network infeasible. That is, it is infeasible to select serving BS, assign powers, allocate channels, and choose the mode of operation for each and every network element. In this context, SG analysis can be exploited for statistically optimized operation, which creates a tradeoff between complexity, signaling, and performance. While instantaneous optimization guarantees best performance at any time instant, statistical optimization provides optimal averaged performance on long-term scale to reduce signaling and processing overheads.  
 Note that statistical network parameters (e.g., distribution for channel gains, network elements spatial distribution and intensity, and so on.) change on longer time scales when compared to other instantaneous parameters such as channel realizations and users locations. For statistical network optimization, the performance objective functions and constraints can be formulated via SG analysis, which guarantees an optimal spatially averaged performance. Some efforts are invested in statistical network for cellular networks using SG\cite{Optimization_Adhoc, Optimization_cell}. However, to the best of the authors' knowledge, merging statistical and instantaneous optimization to balance performance, complexity, and signaling overhead is an open research problem.

\section{Conclusion}  \label{sec:conc}

We present a tutorial on stochastic geometry (SG) analysis for cellular networks. We first characterize interference in cellular networks by deriving its characteristic function and moments. Then, exact and approximated error performance analysis is conducted. We show that approximating the interfering symbols by Gaussian signals facilitates the analysis and simplifies the symbol error rate expressions without sacrificing accuracy. Then, we present the abstracted outage and ergodic rate analysis, which is used to further simplify the analysis and the performance expressions. To this end, we present a unified technique to compute error probability, outage probability, and ergodic rate for several system models in cellular networks. In particular, we show how the intensity and boundary of the PPP adapt to the network characteristics. We also present numerical examples and discussed the pessimistic performance obtained by SG. We show that with the proper network model, SG is capable of capturing realistic network performance. Finally, we point out future research directions for SG analysis.

\appendices 

\section{The Poisson Point Process}
\label{PPP}

The distance distribution between a generic location in $\mathbb{R}^2$ to the nearest point in a PPP $\Phi$ with intensity $\lambda$ is given by

\begin{align} \label{pdf_r_0}
f_{r_0}(r)  &=  2 \pi \lambda r e^{- \pi \lambda r^2},  \;\;\;\;\;\; r>0
\end{align}

The joint distance distribution between a generic location in $\mathbb{R}^2$ to the nearest and $n^{th}$ points in a PPP $\Phi$ with intensity $\lambda$ is given by

\begin{align} \label{eq:joint}
f_{r_0,r_n}(x,y)  &=  \frac{4 (\pi \lambda)^{n+1}}{\Gamma(n)} xy (y^2 - x^2)^{n-1} e^{- \pi \lambda y^2}, 
\end{align}

\noindent{where $0 \leq x \leq y \leq \infty$.

Let $f: \mathbb{R}^n \rightarrow \mathbb{R}$ be a measurable function and $\Phi \ in \mathbb{R}^n$ be a PPP, then by the PGFL we have: 
\begin{align} \label{PPP_PGFL}
\mathbb{E}\left[ \underset{x_i \in \Phi} \prod f(x_i) \right] = \exp\left\{ - \int_{\mathbb{R}^n} (1-f(x)) \Lambda(dx)   \right\}.
\end{align}

Let $V$ be the area of a generic PPP-Voronoi cell, then

\begin{equation}
f_V(v) \approx \frac{(\lambda c)^c v^{c-1} e^{- c\lambda v} }{\Gamma(c)}, \;\;\;\;\;\; 0 \leq v < \infty
\end{equation}

\noindent where $c=3.575$ is a constant defined for the Voronoi tessellation in the $\mathbb{R}^2$.

Consider two independent PPPs $\Phi_b$ and $\Phi_u$ with intensities $\lambda_b$ and $\lambda_u$. For the voronoi tessellation constructed w.r.t. $\Phi_b$, the probability mass function of the number of point of $\Phi_u$ existing in a generic voronoi cell of $\Phi_b$ is given by
\begin{align} \label{eq:load}
\mathbb{P}\left\{ \mathcal{U} =n \right\}  &=  \frac{\Gamma(n+c)}{\Gamma(n+1) \Gamma(c)} \frac{\left(\lambda_u\right)^n (\lambda_b c)^c}{( \lambda_b c + \lambda_u)^{n+c}},
\end{align}
\noindent{where $n= 0,1,2,\cdots.$ }

In a $K$-tier cellular network with intensities $\{\lambda_i\}_{k=1}^{K}$, bias factors $\{B_i\}_{k=1}^{K}$, and path-loss exponent $\{\eta_k\}_{k=1}^{K}$, the probability that a user associate with tier $k$ is given by

\small
\begin{align} \label{assoc_prob}
\mathcal{A}_k = 2 \pi \lambda_k \int\limits_0^\infty r \exp\left\{- \pi \sum_{i=1}^K \lambda_i \left(\frac{B_i P_i}{B_k P_k}\right)^\frac{2}{\eta_i} r^{\frac{2\eta_k}{\eta_j}} \right\} dr.
\end{align}
\normalsize

The service distance $r_{0,k}$ distribution for a user associated to a BS in the $k^{th}$ tier is given by
\small
\begin{equation} \label{mt_distance}
f_{r_{0,k}} = \frac{2 \pi \lambda_k x}{\mathcal{A}_k} \exp\left\{- \pi \sum_{i=1}^K \lambda_i \left(\frac{B_i P_i}{B_k P_k}\right)^\frac{2}{\eta_i} r^{\frac{2\eta_k}{\eta_j}} \right\}.
\end{equation}
\normalsize

\section{Lemma~1 in \cite{hamdi_useful_tech}.} \label{app:useful}


Let $Y\sim Gamma(m,m)$ be a unit mean gamma distributed random variable, $X$ be a real random variable with the LT $\mathcal{L}_x(\cdot)$, and $C$ be a constant. The authors in \cite{hamdi_useful_tech} proposed a technique to calculate averages in the form of $\mathbb{E}\left[ \text{erfc} \left(\sqrt{\frac{Y}{X+C}} \right) \right]$ and $\mathbb{E}\left[ \text{erfc}^2 \left(\sqrt{\frac{Y}{X+C}} \right) \right]$. These averages are given by 
\begin{align}
\label{eq:erfc1}
&\mathbb{E}\left[\text{erfc}\left(\sqrt{\frac{Y}{X+C}}\right) \right]=1- \frac{\Gamma(m+\frac{1}{2})}{\Gamma(m)} \frac{2}{\pi} \cdot \notag \\ & \int_0^{\infty} \frac{1}{\sqrt{z}}e^{-z\left( 1+m C\right)} {_1}F_1\left( 1-m; \frac{3}{2}, z\right) \mathcal{L}_{X}\left( m z\right) \mathrm{d} z, \\
&\overset{m=1}{=} 1- \frac{1}{\sqrt{\pi}} \int_0^{\infty} \frac{e^{-z\left( 1+ C\right)} \mathcal{L}_{X}\left( z\right)}{\sqrt{z}}  \mathrm{d} z,
\end{align}
and
\begin{align}
\label{eq:erfc2}
&\mathbb{E}\left[\text{erfc}^{2}\left(\sqrt{\frac{Y}{X+C}}\right) \right]=1- \frac{4 m}{\pi} \int_0^{\infty} \!\!\!\!\!\! e^{-z m C}   \notag \\ 
&\;\;\;\; \mathcal{L}_{X}\left( m z\right) \int_0^{\frac{\pi}{4}}\!\!\!\!\!{_1}F_1\left( m+1; 2, \frac{-z}{\sin^2 \vartheta}\right) \frac{ \mathrm{d}\vartheta}{\sin^2 \vartheta }\mathrm{d} z, \\
&\overset{m=1}{=} 1- \frac{2}{\sqrt{\pi}} \int_0^{\infty} \frac{e^{-z\left( 1+ C\right)} \text{erfc}(\sqrt{z})\mathcal{L}_{X}\left(  z\right)}{\sqrt{z}}  \mathrm{d} z.
\end{align}

\section{LT of $\zeta$} \label{app:LT_zeta}
Let $\zeta= \sum_{q=1}^\infty \frac{B_q \sigma^2_q}{P r_0^{-\eta}} $. Then the LT of $\zeta$ can be derived as
\begin{align} 
\mathcal{L}_\zeta(z) &= \mathbb{E}\left\{e^{-\frac{z (  \sum_{q=1}^\infty B_q \sigma^2_q )}{P r_0^{-\eta}}  }\right\} \notag \\
&=  \mathbb{E}\left\{\prod_{q=1}^\infty e^{-\frac{z B_q \sigma^2_q}{P r_0^{-\eta}}  }\right\} \notag \\
&= \prod_{q=1}^\infty \mathcal{L}_{B_q}\left(\frac{z \sigma^2_q}{P r_0^{-\eta}} \right) \notag \\
&=  e^{ \sum_{q=1}^\infty \left( - \frac{z \sigma^2_q}{P r_0^{-\eta}} \right)^q} \notag \\
&=  e^{\sum_{q=1}^\infty \left( \frac{z \left( \frac{2 \pi\lambda r_0^{2-\eta q} P^q (-1)^{q} \mathbb{E}\left\{ |s|^{2q}\right\}}{ (\eta q-2) q!} \right)^{\frac{1}{q}}}{P r_0^{-\eta}} \right)^q}  \notag \\
&=  e^{ 2 \pi\lambda r_0^{2} \mathbb{E}\left\{ \sum_{q=1}^\infty  \left( \frac{(-1)^{q}  (z|s|^{2})^q }{ (\eta q-2) q!} \right) \right\}} \notag \\
&=  e^{\pi\lambda r_0^{2}  \left(1-\mathbb{E}\left\{ _{1}F_1\left( -\frac{2}{\eta}; 1-\frac{2}{\eta}; -z |s|^2 \right)  \right\} \right) } \notag \\
&=  e^{\pi\lambda r_0^{2}  \left(1- \frac{1}{M} \sum_{m=1}^{M}  {}_{1}F_1\left( -\frac{2}{\eta}; 1-\frac{2}{\eta}; -z |s_m|^2 \right)   \right) }.
\label{eq:LT_zeta}
\end{align}

\section{Poof of Lemma~\ref{LT_I_power}} \label{LT_I_power_proof}

Let $\mathcal{I}_{agg}= { \sum_{r_k \in \tilde{\Psi}\setminus r_0} s P |h_i|^2 r_i^{-{\eta}}}$. Then the LT of $\zeta$ can be derived as

\small
\begin{align}
\mathcal{L}_{\mathcal{I}_{agg}}(s) &{=} \EXs{\tilde{\Psi}}{ \prod_{r_k \in \tilde{\Psi}\setminus r_0} \mathbb{E}_{h_i} \left\{ e^{- s P |h_i|^2 r_i^{-{\eta}}} \right\} } \notag \\
 &{=}  \exp \left\{-\int_{\mathbb{R}^2} \left(1- \EXs{h}{ e^{ -s {P} |h|^2 r^{-{\eta}}} } \right)  \Lambda(dr) \right\} \notag \\
 &{=}  \exp \left\{- 2 \pi \lambda \int_{r_0}^\infty \left(1- \EXs{h}{e^{-s P |h|^2 r^{-{\eta}}}}\right) r dr \right\} \notag \\
  &{=}  \exp \left\{-2 \pi \lambda \int_{r_0}^\infty \left(1-\frac{1}{1+sPr^{-\eta}}\right) r dr \right\}  \notag \\
  &{=}  \exp \left\{-2 \pi \lambda \int_{r_0}^\infty \left(\frac{sP}{r^{\eta}+sP}\right) r dr \right\} \notag \\
   &{=}  \exp \left\{-2 \pi \lambda (sP)^\frac{2}{\eta} \int_{\frac{r_0}{(sP)^\frac{1}{\eta}}}^\infty \left(\frac{x}{x^{\eta}+1}\right) dx \right\} \notag \\
      &{=}  \exp \left\{-\frac{2 \pi \lambda sP r_0^{2-\eta}}{\eta-2} {}_{2}F_1\left(1, 1-\frac{2}{\eta}; 2- \frac{2}{\eta}; - \frac{sP}{r^{\eta}}\right)  \right\}.
\end{align} \label{eq:LT_proof}
\normalsize

\bibliographystyle{IEEEtran}
\bibliography{IEEEabrv,Tutorial}

\end{document}